
\documentclass[oneside,reqno,fleqn]{amsart}
\usepackage[T1]{fontenc}
\usepackage[latin9]{inputenc}
\pagestyle{headings}
\usepackage{verbatim}
\usepackage{enumitem}
\usepackage{amstext}
\usepackage{amsthm}
\usepackage{amssymb}
\usepackage{marvosym}

\makeatletter
\numberwithin{equation}{section}
\numberwithin{figure}{section}
\theoremstyle{plain}
\newtheorem{thm}{\protect\theoremname}[section]
  \theoremstyle{plain}
  \newtheorem{fact}[thm]{\protect\factname}
  \theoremstyle{plain}
  \newtheorem{cor}[thm]{\protect\corollaryname}
  \theoremstyle{plain}
  \newtheorem{prop}[thm]{\protect\propositionname}
  \theoremstyle{definition}
  \newtheorem{defn}[thm]{\protect\definitionname}
  \theoremstyle{remark}
  \newtheorem{rem}[thm]{\protect\remarkname}
  \theoremstyle{plain}
  \newtheorem{lem}[thm]{\protect\lemmaname}
  \theoremstyle{plain}
  \newtheorem*{lem*}{\protect\lemmaname}

\usepackage{marvosym}
\usepackage{amsmath}
\usepackage{sherstov}
\usepackage{small-caption}
\usepackage{xcolor}

\usepackage{hyperref}
\hypersetup{
pdfauthor = {V. V. Podolskii, A. A. Sherstov},
plainpages = false,
bookmarks = true,
bookmarksopen = true,
pdftex,
colorlinks = true,
citecolor = teal,
linkcolor = teal,
urlcolor  = teal,
}

\newtheoremstyle{myplain}      {10pt}{10pt}{\itshape}{}{\scshape}{.}{.5em}{}
\newtheoremstyle{mydefinition} {10pt}{10pt}{}{}{\scshape}{.}{.5em}{}
\newtheoremstyle{myremark} {10pt}{10pt}{}{}{\itshape}{.}{.5em}{}

\@namedef{thm}{\@thm{\let \thm@swap \@gobble \th@myplain }{thm}{Theorem}}
\@namedef{lem}{\@thm{\let \thm@swap \@gobble \th@myplain }{thm}{Lemma}}
\@namedef{cor}{\@thm{\let \thm@swap \@gobble \th@myplain }{thm}{Corollary}}
\@namedef{prop}{\@thm{\let \thm@swap \@gobble \th@myplain }{thm}{Proposition}}
\@namedef{claim}{\@thm{\let \thm@swap \@gobble \th@myplain }{thm}{Claim}}
\@namedef{fact}{\@thm{\let \thm@swap \@gobble \th@myplain }{thm}{Fact}}
\@namedef{defn}{\@thm{\let \thm@swap \@gobble \th@mydefinition }{thm}{Definition}}
\@namedef{rem}{\@thm{\let \thm@swap \@gobble \th@mydefinition }{thm}{Remark}}
\@namedef{example}{\@thm{\let \thm@swap \@gobble \th@mydefinition }{thm}{Example}}

\@namedef{thm*}{\@thm {\th@myplain }{}{Theorem}}
\@namedef{lem*}{\@thm {\th@myplain }{}{Lemma}}
\@namedef{cor*}{\@thm {\th@myplain }{}{Corollary}}
\@namedef{prop*}{\@thm {\th@myplain }{}{Proposition}}
\@namedef{claim*}{\@thm {\th@myplain }{}{Claim}}
\@namedef{fact*}{\@thm {\th@myplain }{}{Fact}}
\@namedef{defn*}{\@thm {\th@mydefinition }{}{Definition}}
\@namedef{rem*}{\@thm {\th@mydefinition }{}{Remark}}
\@namedef{example*}{\@thm {\th@mydefinition }{}{Example}}

\makeatletter
\def\@seccntformat#1{%
  \protect\textup{%
    \protect\@secnumfont
    \expandafter\protect\csname format#1\endcsname 
    \csname the#1\endcsname
    \protect\@secnumpunct
  }%
}


\makeatletter
\newcommand \SparseDotfill {\leavevmode \leaders \hb@xt@ .7em{\hss .\hss }\hfill \kern \z@}
\makeatother	

\makeatletter
\def\@tocline#1#2#3#4#5#6#7{\relax
  \ifnum #1>\c@tocdepth 
  \else
    \par \addpenalty\@secpenalty\addvspace{\ifnum #1=1 2mm \else #2\fi}%
    \begingroup \hyphenpenalty\@M
    \@ifempty{#4}{%
      \@tempdima\csname r@tocindent\number#1\endcsname\relax
    }{%
      \@tempdima#4\relax
    }%
    \parindent\z@ \leftskip#3\relax \advance\leftskip\@tempdima\relax
    \rightskip\@pnumwidth plus4em \parfillskip-\@pnumwidth
          \ifnum #1=1 \bfseries #5\else #5\fi 
   \leavevmode\hskip-\@tempdima
      \ifcase #1
       \or\or \hskip 1em \or \hskip 2em \else \hskip 3em \fi%
#6     \nobreak\relax
{\ifnum #1=1\hfill \else \SparseDotfill\fi}
 \hbox to\@pnumwidth{\@tocpagenum{
    \ifnum #1=1 \bfseries \fi #7}}\par
    \nobreak
    \endgroup
  \fi}
\makeatother

\newcommand{\XOR}{\text{\rm XOR}}
\newcommand{\iu}{\mathrm{i}}

\makeatother

  \providecommand{\corollaryname}{Corollary}
  \providecommand{\definitionname}{Definition}
  \providecommand{\factname}{Fact}
  \providecommand{\lemmaname}{Lemma}
  \providecommand{\propositionname}{Proposition}
  \providecommand{\remarkname}{Remark}
\providecommand{\theoremname}{Theorem}

\begin{document}

\title[Inner Product and Set Disjointness: Beyond Log N Parties]{Inner Product and Set Disjointness: \\Beyond Logarithmically Many
Parties}

\author{Vladimir V. Podolskii$^{*}$ and Alexander A. Sherstov$^{\dagger}$}
\begin{abstract}
A basic goal in complexity theory is to understand the communication
complexity of number-on-the-forehead problems $f\colon(\zoon)^{k}\to\zoo$
with $k\gg\log n$ parties. We study the problems of inner product
and set disjointness and determine their randomized communication
complexity for every $k\geq\log n$, showing in both cases that $\Theta(1+\lceil\log n\rceil/\log\lceil1+k/\log n\rceil)$
bits are necessary and sufficient. In particular, these problems admit
constant-cost protocols if and only if the number of parties is $k\geq n^{\epsilon}$
for some constant $\epsilon>0.$
\end{abstract}

\thanks{$^{*}$ Steklov Mathematical Institute and National Research University
Higher School of Economics, Moscow, Russia. {\large{}\Letter ~}\texttt{podolskii@mi.ras.ru}\\
$\phantom{a}$\quad{}$^{\dagger}$ Computer Science Department, UCLA,
Los Angeles, CA~90095. {\large{}\Letter ~}\texttt{sherstov@cs.ucla.edu
}Supported by NSF CAREER award CCF-1149018 and an Alfred P. Sloan
Foundation Research Fellowship.}

\maketitle
\thispagestyle{empty}

\section{Introduction}

The \emph{number-on-the-forehead} model, due to Chandra et al.~\cite{cfl83multiparty},
is the standard model of multiparty communication. The model features
$k$ collaborative players and a Boolean function $F\colon X_{1}\times X_{2}\times\cdots\times X_{k}\to\zoo$
with $k$ arguments. An input $(x_{1},x_{2},\dots,x_{k})$ is distributed
among the $k$ players with overlap, by giving the $i$th player the
arguments $x_{1},\dots,x_{i-1},x_{i+1},\dots,x_{k}$ but not $x_{i}$.
This arrangement can be visualized as having the $k$ players seated
in a circle with $x_{i}$ written on the $i$th player's forehead,
whence the name of the model. The players communicate according to
a protocol agreed upon in advance. The communication occurs in the
form of broadcasts, with a message sent by any given player instantly
reaching everyone else. The players' objective is to compute $F$
on any given input with minimal communication. We are specifically
interested in \emph{randomized} protocols, where the players have
an unbounded supply of shared random bits. The \emph{cost} of a protocol
is the total bit length of all the messages broadcast in a worst-case
execution. The\emph{ $\epsilon$-error randomized communication complexity
$R_{\epsilon}(F)$} is the least cost of a randomized protocol that
computes $F$ with probability of error at most $\epsilon$ on every
input.

Number-on-the-forehead communication complexity is a natural subject
of study in its own right, in addition to its applications to circuit
complexity, pseudorandomness, and proof complexity~\cite{bns92,yao90acc,hastad-goldman91multiparty,razborov-wigderson93multiparty,bps07lovasz-schrijver}.
Number-on-the-forehead is the most studied model in the area because
any other way of assigning arguments to players results in a less
powerful formalism\textemdash provided of course that one does not
assign all the arguments to some player, in which case there is never
a need to communicate. The generous overlap in the players' inputs
makes proving lower bounds in the number-on-the-forehead model difficult.
The strongest lower bound for an explicit communication problem $F\colon(\zoon)^{k}\to\zoo$
is currently $\Omega(n/2^{k}),$ obtained by Babai et al.~\cite{bns92}
almost thirty years ago. This lower bound becomes trivial at $k=\log n,$
and it is a longstanding open problem to overcome this logarithmic
barrier and prove strong lower bounds for an explicit function with
$k\gg\log n$. As one would expect, the existence of such functions
is straightforward to prove using a counting argument~\cite{bns92,ford-thesis}.
In particular, it is known~\cite[Sec.~9.2]{ford-thesis} that for
all $n$ and $k,$ a uniformly random function $F\colon(\zoon)^{k}\to\zoo$
almost surely has randomized communication complexity
\begin{equation}
R_{1/3}(F)\geq n-5,\label{eq:random-fn-cc}
\end{equation}
which essentially meets the trivial upper bound $R_{1/3}(F)\leq n+1.$

The two most studied problems in communication complexity theory are
(\emph{generalized}) \emph{inner product} and \emph{set disjointness}.
In the $k$-party versions of these problems, the inputs are subsets
$S_{1},S_{2},\ldots,S_{k}\subseteq\{1,2,\ldots,n\}.$ As usual, the
$i$th player knows $S_{1},\ldots,S_{i-1},S_{i+1},\ldots,S_{k}$ but
not $S_{i}$. In the inner product problem, the objective is to determine
whether $|\bigcap S_{i}|$ is odd. In set disjointness, the objective
is to determine whether $\bigcap S_{i}=\varnothing$. In Boolean form,
these two functions are given by the formulas
\begin{align*}
\GIP_{n,k}(X)= & \bigoplus_{i=1}^{n}\bigwedge_{j=1}^{k}X_{i,j},\\
\DISJ_{n,k}(X)= & \bigwedge_{i=1}^{n}\bigvee_{j=1}^{k}\overline{X}_{i,j},
\end{align*}
respectively, where the input is an $n\times k$ Boolean matrix $X\in\zoo^{n\times k}$
whose columns are the characteristic vectors of the input sets. In
the setting of two players, the communication complexity is well-known
to be $\Theta(n)$ for both inner product~\cite{chor-goldreich88ip}
and set disjointness~\cite{KS92disj,razborov90disj,baryossef04info-complexity}.
A moment's thought reveals that the $k$-party communication complexity
of these problems is monotonically nonincreasing in $k,$ and determining
this dependence has been the subject of extensive research in the
area~\cite{bns92,grolmusz94multi-upper,tesson03thesis,beame06corruption,lee-shraibman08disjointness,chatt-ada08disjointness,beame-huyn-ngoc09multiparty-focs,sherstov12mdisj,sherstov13directional}.
On the upper bounds side, Grolmusz~\cite{grolmusz94multi-upper}
proved that $k$-party inner product has communication complexity
$O(k\lceil n/2^{k}\rceil),$ which easily carries over to $k$-party
set disjointness. The best lower bounds to date are $\Omega(n/4^{k})$
for inner product, due to Babai et al.~\cite{bns92}; and $\Omega(n/4^{k})^{1/4}$
and $\Omega(\sqrt{n}/2^{k}k)$ for set disjointness, due to Sherstov~\cite{sherstov12mdisj,sherstov13directional}.

\subsection{Our results}

Our work began with a basic question: how many players $k$ does it
take to compute inner product and set disjointness with \emph{constant}
communication? As discussed above, the best bounds on the communication
complexity of these functions for large $k$ prior to this paper were
$\Omega(1)$ and $O(\log n)$. We close this logarithmic gap, determining
the communication complexity up to a multiplicative constant for every
$k\geq\log n$.

\bigskip{}
\begin{thm}[Main result]
\label{thm:MAIN-gip-disj}For any $k\geq\log n,$ inner product and
set disjointness have randomized communication complexity
\begin{align*}
R_{1/3}(\GIP_{n,k}) & =\Theta\left(\frac{\log n}{\log\left\lceil 1+\frac{k}{\log n}\right\rceil }+1\right),\\
R_{1/3}(\DISJ_{n,k}) & =\Theta\left(\frac{\log n}{\log\left\lceil 1+\frac{k}{\log n}\right\rceil }+1\right).
\end{align*}
\end{thm}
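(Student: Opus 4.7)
\medskip
\noindent\textbf{Plan.} The plan is to prove matching upper and lower bounds. Both directions require going beyond standard tools, since Grolmusz's protocol plateaus at $O(\log n)$ around $k=\log n$ and the Babai--Nisan--Szegedy discrepancy bound $\Omega(n/4^{k})$ becomes vacuous at the same threshold. I treat the two directions independently, though I would try to have a single ``sieving'' idea drive the upper bounds for both $\GIP$ and $\DISJ$.

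\medskip
\noindent\textbf{Upper bound.} Let $t=\lceil\log n/\log\lceil 1+k/\log n\rceil\rceil$, the target cost. I would design an iterative protocol in which each of $t$ rounds uses $O(1)$ bits to shrink a candidate set of rows by a factor of $\Theta(1+k/\log n)$. In round $r$, the players use public randomness to pick a small subset $S_r\subseteq[k]$ of size $s\approx\log\lceil 1+k/\log n\rceil$; any designated player $p\notin S_r$ then sees all columns of $S_r$ and can privately compute the sieved set $B_r=\{i\in B_{r-1}:X_{i,j}=1\text{ for all }j\in S_r\}$, starting from $B_0=[n]$. Since the rounds are independent and each column of $S_r$ kills roughly half of the surviving candidates in expectation, after $t$ rounds $|B_t|$ concentrates at $O(1)$. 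At that point, verifying whether any $i\in B_t$ is an all-ones row (for $\DISJ$), or computing the XOR of AND-values for $i\in B_t$ (for $\GIP$), can be done in $O(1)$ bits using a player outside every $S_r$, who sees all remaining columns. The subtle point is paying only $O(1)$ bits per round rather than $O(\log n)$ to describe $B_r$: I would avoid this by having the sieving performed privately by each candidate ``spectator'' player (using shared randomness so that several players agree on $S_r$) and only broadcasting the final verification bit. A worst-case cost bound follows by amplifying via repetition and abort, with Grolmusz's protocol as a fallback when $k=\Theta(\log n)$.

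\medskip
\noindent\textbf{Lower bound.} The matching lower bound is where I expect the real work to lie. My plan is to argue by reduction to the small-$k'$ regime where the BNS bound remains meaningful. Suppose for contradiction that there is a $c$-bit protocol for $\GIP_{n,k}$ (or $\DISJ_{n,k}$); I would construct an embedding that converts it into a $c$-bit protocol for $\GIP_{m,k'}$ with $m$ and $k'$ chosen so that $(1+k/\log n)^{c}\lesssim m$ and $k'=O(\log m)$, whereupon BNS would give $c\ge\Omega(m/4^{k'})=\Omega(1)$, contradicting too-small $c$. The embedding would spread each bit of the $\GIP_{m,k'}$-instance across several columns of the $\GIP_{n,k}$-instance using a combinatorial design, so that no single NOF player sees enough to evaluate the inner function locally; this is the crux, because the defining feature of NOF is that every player sees all-but-one input, making naive padding collapse. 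Complementarily, I would also be prepared to run a direct corruption-bound / pattern-matrix argument on a carefully chosen input distribution, tailored to the $k>\log n$ regime, as a backup in case the reduction route does not produce a tight constant.

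\medskip
\noindent\textbf{Main obstacle.} The hardest part will be the lower bound. Circumventing the thirty-year-old BNS barrier, even for these two specific functions, requires exploiting something genuinely special about $k>\log n$, and the embedding must be faithful enough that the outer protocol truly simulates the inner one without the outer NOF players collapsing the reduction by seeing through the padding. Getting the exact constant $\log n/\log\lceil 1+k/\log n\rceil$, rather than a near-tight bound, will force the embedding parameters to be balanced very carefully against the BNS exponent $m/4^{k'}$.
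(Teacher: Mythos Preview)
Your sieving protocol has a genuine gap: the claim that ``each column of $S_r$ kills roughly half of the surviving candidates'' holds only for random inputs, whereas randomized communication complexity demands correctness on every input. Take the instance in which every row has a single $0$, all located in column~$1$. A row is removed only if column~$1$ lands in some $S_r$; since $\bigl|\bigcup_r S_r\bigr|\leq ts=O(\log n)$ while $k\gg\log n$, column~$1$ is missed with probability $1-O((\log n)/k)$, and on that event $B_t=[n]$. No amount of private bookkeeping fixes this, because the sieve simply does not shrink. The paper's protocol is structurally different: using public randomness the players draw a uniform $y\in\{0,1\}^k$ with $|y|\geq k-\ell$, where $\ell$ is minimal with $\binom{k}{\mathord\leq\ell}\geq 3n$ (so $\ell=\Theta(\log n/\log\lceil 1+k/\log n\rceil)$). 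With probability at least $2/3$ this $y$ differs from every row of $X$, and conditioned on that event Grolmusz's telescoping identity recovers the parity of all-ones rows with at most $\ell$ one-bit simultaneous broadcasts. Disjointness is then handled by the standard random-subset-XOR reduction to $\GIP$.

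\textbf{Lower bound.} Your instinct to reduce to a small-$k'$ instance is the right one, but the protocol-level embedding you sketch is not what makes the argument go through, and your parameters do not close (with $k'=\Theta(\log m)$ the BNS bound $\Omega(m/4^{k'})$ can be $o(1)$). The paper's key observation is that in any cost-$c$ protocol at most $c$ players speak, so the protocol decomposes into $\min\{c,k\}$-cylinder intersections; this yields a refined discrepancy method $2^{R_\epsilon(F)}\geq(1-2\epsilon)/\disc_{\min\{R_\epsilon(F),k\}}(F,\mu)$ in terms of \emph{$\ell$-discrepancy}. The reduction to BNS then happens at the level of the distribution rather than the protocol: with $\mu=\upsilon_{n,k,\ell}$ (each row uniform on $\{u\in\{0,1\}^k:|u|\geq k-\ell\}$), conditioning on which rows are all-ones in columns $\ell+1,\ldots,k$ collapses the $\{1,\ldots,\ell\}$-discrepancy to the ordinary BNS discrepancy of $\GIP_{s,\ell}$ for a binomially distributed $s$, giving $\disc_\ell(\GIP_{n,k},\upsilon_{n,k,\ell})\leq\bigl(1-1/(2^{\ell-1}\binom{k}{\mathord\leq\ell})\bigr)^n$ and hence $\binom{k}{\mathord\leq R}^2 R\geq\Omega(n)$ for $R=R_{1/3}(\GIP_{n,k})$. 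For $\DISJ$ the paper threads the same $\ell$-cylinder observation through the approximate-degree framework (Theorem~\ref{thm:randomized-VP}); your backup ``corruption / pattern-matrix'' suggestion is in the right family, but it too needs the $\ell$-cylinder insight before it can say anything for $k\gg\log n$.
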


\noindent To our knowledge, Theorem~\ref{thm:MAIN-gip-disj} is the
first nontrivial (i.e., superconstant) lower bound for any explicit
communication problem with $k\geq\log n$ players. In particular,
inner product and set disjointness have communication protocols with
constant cost if and only if the number of players is $k\geq n^{\epsilon}$
for some constant $\epsilon>0.$ It is noteworthy that we prove the
upper bounds in Theorem~\ref{thm:MAIN-gip-disj} using \emph{simultaneous}
protocols, where the players do not interact. In more detail, each
player in a simultaneous protocol broadcasts all his messages at once
and without regard to the messages from the other players. The output
of a simultaneous protocol is fully determined by the shared randomness
and the concatenation of the messages broadcast by all the players.
The cost of a simultaneous protocol is defined in the usual way, as
the total number of bits broadcast by all the players in a worst-case
execution. Theorem~\ref{thm:MAIN-gip-disj} shows that as far as
inner product and set disjointness are concerned, simultaneous protocols
are asymptotically as powerful as general ones.

A natural next step is to construct a problem $F_{n,k}\colon(\zoon)^{k}\to\zoo$
whose communication complexity remains nontrivial for all $k.$ Its
\emph{existence} follows from the lower bound~(\ref{eq:random-fn-cc})
on the communication complexity of random functions. In the theorem
below, we give an explicit function with communication complexity
at least $c\log n$ for some absolute constant $c>0$ and all $n$
and $k.$ We remind the reader that $\MOD_{m}$ stands for the Boolean
function that evaluates to true if and only if the sum of its arguments
is a multiple of $m.$ 
\begin{thm}
\label{thm:MAIN-mod3-mod2}Define $F_{n,k}\colon(\zoon)^{k}\to\zoo$
by
\[
F_{n,k}(X)=\MOD_{3}\left(\bigoplus_{j=1}^{k}X_{1,j},\ldots,\bigoplus_{j=1}^{k}X_{n,j}\right).
\]
Then 
\begin{align*}
R_{1/3}(F_{n,k}) & \geq\frac{1}{3}\log n-\frac{1}{3}.
\end{align*}
\end{thm}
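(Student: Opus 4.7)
The plan is a Fourier-analytic discrepancy argument in the NOF model, using the fact that $F_{n,k}(X)$ depends on $X$ only through $Y=\bigoplus_{j=1}^k X_{\cdot,j}\in\zoon$. First, I would derive the Fourier expansion of $(-1)^{F_{n,k}}$ over $\zoo^{n\times k}$: using $(-1)^{\MOD_3(m)}=\frac{1}{3}(1-2\omega^m-2\overline{\omega}^{\,m})$ with $\omega=e^{2\pi\iu/3}$ and expanding $\omega^{|Y|}=\prod_i \omega^{Y_i}$ in the variables $(-1)^{X_{i,j}}$, one sees that the expansion is supported only on the ``column-complete'' characters $\prod_{j=1}^k\chi_T(X_{\cdot,j})$ for $T\subseteq[n]$, where $\chi_T(x)=(-1)^{\langle T,x\rangle}$, with coefficients of magnitude $O(3^{|T|/2}/2^n)$ for $T\neq\emptyset$ and the constant $c_\emptyset=1-2\Pr[F_{n,k}=1]=\Theta(1)$.

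Second, I would fix a distribution $\mu$ balancing $F_{n,k}$ (so $\mathbb{E}_\mu[(-1)^{F_{n,k}}]=0$, neutralizing the constant Fourier term) and expand the discrepancy against a cylinder intersection $\chi(X)=\prod_j\chi_j(X_{-j})$ in the Fourier basis. The discrepancy reduces to a sum over $T\neq\emptyset$ of $c_T\cdot\mathbb{E}[\prod_j \chi_j(X_{-j})\chi_T(X_{\cdot,j})]$, and each inner NOF correlation can be bounded via the Babai--Nisan--Szegedy cube inequality, which on these column-complete characters takes a particularly clean form and reduces essentially to a product of Fourier magnitudes of the $\chi_j$'s. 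Summing over $T$ with the tradeoff between $|c_T|\sim 3^{|T|/2}/2^n$ and the binomial weight $\binom{n}{|T|}$ yields $\operatorname{disc}_\mu(F_{n,k})\le O(n^{-1/3})$, and the usual conversion $R_{1/3}(F_{n,k})\ge\log(1/\operatorname{disc}_\mu(F_{n,k}))-O(1)$ delivers the claimed $\tfrac{1}{3}\log n-\tfrac{1}{3}$.

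The main obstacle lies in Step 2: the BNS cube inequality in standard form becomes exponentially weak for $k\gg\log n$, so to get a bound uniform in $k$ (which the statement requires, since it covers arbitrarily large $k$) one must exploit the tensor structure of the column-complete characters more carefully than the generic BNS argument allows. Tracking the precise $\tfrac{1}{3}$ constant also requires pinning down the optimal cutoff $|T|\approx\alpha n$ that balances the geometric $3^{|T|/2}$ growth against the binomial weight $\binom{n}{|T|}$, and verifying that the cylinder contributions at this cutoff leave enough slack to close the loop.
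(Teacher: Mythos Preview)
Your proposal correctly identifies the Fourier structure---the function factors through $Y=\bigoplus_j X_{\cdot,j}$, and the expansion of $(-1)^{\MOD_3}$ via cube roots of unity is exactly what the paper uses. But the obstacle you flag in Step~2 is not a minor technicality; it is the entire difficulty, and you have not proposed a way around it. Bounding the full $k$-party discrepancy $\disc(F_{n,k},\mu)$ by $O(n^{-1/3})$ uniformly in $k$ via BNS is hopeless: the cube inequality applied to a $k$-player cylinder intersection produces a $2^k$-fold product, and the resulting bound degrades like $\exp(-n/4^k)$, which is vacuous once $k\gg\log n$. ``Exploiting the tensor structure more carefully'' is not an argument.

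The paper's resolution is a different move altogether. It never bounds the full discrepancy. Instead it uses the observation (Fact~\ref{fact:cylinders-breakdown}) that a protocol of cost $c$ decomposes into $\min\{c,k\}$-cylinder intersections---at most $c$ players speak, so only $c$ cylinders appear in the product. This gives the self-referential inequality $2^{R}\geq(1-2\epsilon)/\disc_{R}(F_{n,k},\mu)$ with $R=R_\epsilon(F_{n,k})$. To bound $\disc_\ell$ for an $\{1,\dots,\ell\}$-cylinder intersection, the paper then conditions on columns $\ell+1,\dots,k$ and observes that the function depends on the remaining input only through the $n\times\ell$ matrix whose $\ell$th column is $\bigoplus_{j\ge\ell}X_{\cdot,j}$; this collapses the problem to the $\ell$-player case, where the BNS bound $\exp(-n/4^\ell)$ is nontrivial (Lemma~\ref{lem:exp-correl-ell}). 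Solving $2^R\geq\frac{1}{6}\exp(n/4^R)$ then gives $R\geq\frac{1}{3}\log n-\frac{1}{3}$. Without this $\ell$-discrepancy plus XOR-collapse mechanism, your plan has no way to close for $k\gg\log n$.
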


\noindent As with inner product and set disjointness, we show that
the lower bound of Theorem~\ref{thm:MAIN-mod3-mod2} is asymptotically
tight for all $k\geq\log n$.

\subsection{Our techniques}

The upper bounds in Theorem~\ref{thm:MAIN-gip-disj} are based on
Grolmusz's deterministic protocol for multiparty inner product~\cite{grolmusz94multi-upper},
which we are able to speed up using public randomness. The lower bounds
in Theorems~\ref{thm:MAIN-gip-disj} and~\ref{thm:MAIN-mod3-mod2}
are more subtle. First of all, it may be surprising that we are able
to prove any lower bounds at all for $k\gg\log n$ players since all
known techniques for explicit functions stop working at $k=\log n$
players. The key is to realize that we only need to rule out communication
protocols with cost $O(\log n)$, and in any given execution of such
a protocol all but $O(\log n)$ players remain silent! This makes
it possible to reduce the analysis to the setting of $k\leq\log n$
players, where strong lower bounds are known. This reduction involves
constructing an input distribution such that the portion of the input
seen by any small set of players does not significantly help with
computing the output. Our communication lower bounds use the \emph{discrepancy
method}, which we adapt here to reflect the number of active players.

The remainder of this paper is organized as follows. Section~\ref{sec:prelim}
gives a thorough review of the technical preliminaries. Our results
for inner product and set disjointness are presented in Sections~\ref{sec:gip}
and~\ref{sec:disj}, respectively. Section~\ref{sec:indep-of-k}
concludes the paper with a proof of Theorem~\ref{thm:MAIN-mod3-mod2}
along with a matching upper bound.

\section{Preliminaries\label{sec:prelim}}

\subsection{General}

We use lowercase letters for vectors and strings, and uppercase letters
for matrices. The empty string is denoted~$\varepsilon.$ For a bit
string $x\in\zoon,$ we let $|x|=x_{1}+x_{2}+\cdots+x_{n}$ denote
the Hamming weight of $x.$  We let the bar operator $\overline{\phantom{a}}$
denote either complex conjugation or set complementation, depending
on the nature of the argument. For convenience, we adopt the convention
that $0/0=0.$ The notation $\log x$ refers to the logarithm of $x$
to base $2.$

We will view Boolean functions as mappings $f\colon X\to\zoo$ for
a finite set $X$, typically $X=\zoon.$ A \emph{partial function}
$f$ on a set $X$ is a function whose domain of definition, denoted
$\dom f,$ is a proper subset of $X.$ For (possibly partial) Boolean
functions $f$ and $g$ on $\zoon$ and $X,$ respectively, the symbol
$f\circ g$ refers to the (possibly partial) Boolean function on $X^{n}$
given by $(f\circ g)(x_{1},x_{2},\dots,x_{n})=f(g(x_{1}),g(x_{2}),\dots,g(x_{n})).$
Clearly, the domain of $f\circ g$ is the set of all $(x_{1},\dots,x_{n})\in(\dom g)^{n}$
for which $(g(x_{1}),g(x_{2}),\dots,g(x_{n}))\in\dom f.$ As usual,
for (possibly partial) Boolean functions $f$ and $g$, the symbol
$f\oplus g$ refers to the (possibly partial) Boolean function given
by $(f\oplus g)(x,y)=f(x)\oplus g(y).$ Observe that in this notation,
$f\oplus f$ and $f$ are completely different functions. We abbreviate
$f^{\oplus n}=f\oplus f\oplus\cdots\oplus f$ $(n$ times). The familiar
functions $\AND_{n},$ $\OR_{n},$ and $\XOR_{n}$ on the Boolean
hypercube $\zoon$ are given by $\AND_{n}(x)=\bigwedge_{i=1}^{n}x_{i},$
\, $\OR_{n}(x)=\bigvee_{i=1}^{n}x_{i},$ and $\XOR_{n}(x)=\bigoplus_{i=1}^{n}x_{i}.$
We let $\MOD_{3}\colon\zoo^{*}\to\zoo$ be the Boolean function given
by $\MOD_{3}(x)=1\;\Leftrightarrow\;|x|\equiv0\pmod3.$ Finally, we
define a partial Boolean function $\UAND_{n}$ on $\zoon$ as the
restriction of $\AND_{n}$ to $\{x:|x|\geq n-1\}.$ In other words,
\[
\UAND_{n}(x)=\begin{cases}
x_{1}\wedge x_{2}\wedge\cdots\wedge x_{n} & \text{if }|x|\geq n-1,\\
\text{undefined} & \text{otherwise.}
\end{cases}
\]

We let $X^{n\times k}$ denote the family of $n\times k$ matrices
with entries in $X,$ the most common cases being those of real matrices
$(X=\Re)$ and Boolean matrices $(X=\zoo).$ For a matrix $M\in\Re^{n\times m}$
and a set $S\subseteq\{1,2,\dots,n\},$ we let $M|_{S}$ denote the
submatrix of $M$ obtained by keeping the rows with index in $S.$
More generally, for sets $S\subseteq\{1,2,\dots,n\}$ and $T\subseteq\{1,2,\dots,m\},$
we let $M|_{S,T}$ denote the $|S|\times|T|$ submatrix of $M$ obtained
by keeping the rows with index in $S$ and columns with index in $T.$
We adopt the standard convention that the ordering of the rows (and
columns) in a submatrix is inherited from the containing matrix. 

For nonnegative integers $n$ and $k,$ we define
\[
\binom{n}{\mathord\leq k}:=\binom{n}{0}+\binom{n}{1}+\cdots+\binom{n}{k}=\sum_{i=0}^{\min\{k,n\}}\binom{n}{i}.
\]
The following bounds are well-known~\cite[Proposition~1.4]{jukna01extremal}:
\begin{align}
\left(\frac{n}{k}\right)^{k}\leq\binom{n}{\mathord\leq k} & \leq\left(\frac{\e n}{k}\right)^{k} &  & (1\leq k\leq n).\label{eq:binom-sum-bound}
\end{align}

\subsection{Analytic preliminaries}

For a finite set $X,$ we let $\Re^{X}$ denote the linear space of
real functions $X\to\Re$. This space is equipped with the usual norms
and inner product: 
\begin{align*}
\|\phi\|_{\infty} & =\max_{x\in X}\,|\phi(x)| &  & \qquad(\phi\in\Re^{X}),\\
\|\phi\|_{1} & =\sum_{x\in X}\,|\phi(x)| &  & \qquad(\phi\in\Re^{X}),\\
\langle\phi,\psi\rangle & =\sum_{x\in X}\phi(x)\psi(x) &  & \qquad(\phi,\psi\in\Re^{X}).
\end{align*}
The support of $\phi\in\Re^{X}$ is the subset $\Supp\phi=\{x\in X:\phi(x)\ne0\}.$
The pointwise (Hadamard) product of $\phi,\psi\in\Re^{X}$ is denoted
$\phi\cdot\psi\in\Re^{X}$ and given by $(\phi\cdot\psi)(x)=\phi(x)\psi(x).$
The tensor product of $\phi\in\Re^{X}$ and $\psi\in\Re^{Y}$ is the
function $\phi\otimes\psi\in\Re^{X\times Y}$ given by $(\phi\otimes\psi)(x,y)=\phi(x)\psi(y).$
The tensor product $\phi\otimes\phi\otimes\cdots\otimes\phi$ ($n$
times) is abbreviated $\phi^{\otimes n}.$  Tensor product notation
generalizes to partial functions in the natural way: if $\phi$ and
$\psi$ are partial real functions on $X$ and $Y,$ respectively,
then $\phi\otimes\psi$ is a partial function on $X\times Y$ with
domain $\dom\phi\times\dom\psi$ and is given by $(\phi\otimes\psi)(x,y)=\phi(x)\psi(y)$
on that domain. Similarly, $\phi^{\otimes n}$ is a partial function
on $X^{n}$ with domain $(\dom\phi)^{n}.$

We now recall the Fourier transform on $\zoon.$ For a subset $S\subseteq\oneton,$
define $\chi_{S}\colon\zoon\to\{-1,+1\}$ by $\chi_{S}(x)=\prod_{i\in S}(-1)^{x_{i}}.$
Then every function $\phi\colon\zoon\to\Re$ has a unique representation
of the form $\phi=\sum_{S}\hat{\phi}(S)\,\chi_{S},$ where $\hat{\phi}(S)=\Exp_{x\in\zoon}\phi(x)\chi_{S}(x).$
The reals $\hat{\phi}(S)$ are the \emph{Fourier coefficients} of\emph{
$\phi$}, and the mapping $\phi\mapsto\hat{\phi}$ is the \emph{Fourier
transform} of $\phi$. 

\subsection{Probability}

We view probability distributions first and foremost as real functions
and use the notational shorthands above. In particular, we write $\supp\mu$
to refer to the support of the probability distribution $\mu$, and
$\mu\otimes\lambda$ to refer to the Cartesian product of the distributions
$\mu$ and $\lambda$. The notation $X\sim\mu$ means that the random
variable $X$ is distributed according to $\mu.$ We let $B(n,p)$
denote the binomial distribution with $n$ trials and success probability
$p.$ 
\begin{fact}
\label{fact:binomial-expectations}For any integer $n\geq1,$
\begin{align}
\Exp_{s\sim B(n-1,p)}\;\frac{1}{\sqrt{n-s}} & \leq\frac{1}{\sqrt{(1-p)n}},\label{eq:exp-bin-n-s}\\
\Exp_{s\sim B(n-1,q)}\;\frac{1}{\sqrt{s+1}} & \leq\frac{1}{\sqrt{qn}},\label{eq:exp-bin-s+1}\\
\Exp_{s\sim B(n,p)}\;|s-pn| & \leq\sqrt{p(1-p)n}.\label{eq:exp-bin-L1}
\end{align}
 
\end{fact}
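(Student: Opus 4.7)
My plan is to treat the three bounds in sequence, relying on two standard tools: an absorption identity for binomial coefficients, and the concavity of the square root (Jensen's inequality, or equivalently Cauchy--Schwarz), which gives $\Exp[\sqrt{X}] \leq \sqrt{\Exp[X]}$ for any nonnegative random variable $X$. The overall strategy for the first two bounds is the same: first compute the expectation with the $1/\sqrt{\,\cdot\,}$ replaced by $1/(\,\cdot\,)$ in closed form, and only then apply concavity.

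For~(\ref{eq:exp-bin-n-s}), the key observation is the absorption identity $\binom{n-1}{s}\cdot\frac{1}{n-s} = \frac{1}{n}\binom{n}{s}$, valid for $0\leq s\leq n-1$. Summing against the binomial weights,
\begin{align*}
\Exp_{s\sim B(n-1,p)}\frac{1}{n-s}
&= \sum_{s=0}^{n-1}\binom{n-1}{s}p^s(1-p)^{n-1-s}\cdot\frac{1}{n-s}\\
&= \frac{1}{n(1-p)}\sum_{s=0}^{n-1}\binom{n}{s}p^s(1-p)^{n-s}
\;\leq\; \frac{1}{n(1-p)},
\end{align*}
after which Jensen's inequality applied to $\sqrt{\,\cdot\,}$ gives the claim. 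The bound~(\ref{eq:exp-bin-s+1}) is entirely analogous, using the symmetric identity $\binom{n-1}{s}\cdot\frac{1}{s+1} = \frac{1}{n}\binom{n}{s+1}$; the inner sum telescopes to $(1-(1-q)^n)/q$, yielding $\Exp[1/(s+1)]\leq 1/(nq)$, and one square-roots as before.

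For~(\ref{eq:exp-bin-L1}), Jensen's inequality can be applied directly: writing $|s-pn|=\sqrt{(s-pn)^2}$ and using concavity of the square root,
\begin{equation*}
\Exp_{s\sim B(n,p)}|s-pn| \;\leq\; \sqrt{\Exp_{s\sim B(n,p)}(s-pn)^2} \;=\; \sqrt{\Var(s)} \;=\; \sqrt{p(1-p)n}.
\end{equation*}

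There is no real obstacle here; the only subtlety is recognising that Jensen's inequality in the ``wrong'' direction for convex $1/\sqrt{x}$ makes a direct attack on~(\ref{eq:exp-bin-n-s})--(\ref{eq:exp-bin-s+1}) fail, which is why one must first pass to $1/(n-s)$ and $1/(s+1)$, where the absorption identities make the expectations computable exactly, and then use concavity of $\sqrt{\,\cdot\,}$ to recover the stated bounds.
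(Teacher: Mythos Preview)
Your proof is correct and essentially identical to the paper's: both first bound $\Exp[1/(n-s)]$ (respectively $\Exp[1/(s+1)]$) via the same absorption identity, then apply $\Exp\sqrt{X}\leq\sqrt{\Exp X}$ (which the paper phrases as Cauchy--Schwarz), and both handle~(\ref{eq:exp-bin-L1}) by bounding the first absolute moment by the square root of the variance. The only cosmetic difference is that for~(\ref{eq:exp-bin-s+1}) the paper reduces to~(\ref{eq:exp-bin-n-s}) via the symmetry $s+1\leftrightarrow n-s$, $q\leftrightarrow 1-p$, whereas you redo the computation directly with the mirror absorption identity; the content is the same.
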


\begin{proof}
For (\ref{eq:exp-bin-n-s}), we have
\begin{align*}
\left(\Exp_{s\sim B(n-1,p)}\;\frac{1}{\sqrt{n-s}}\right)^{2} & \leq\Exp_{s\sim B(n-1,p)}\;\frac{1}{n-s}\\
 & =\sum_{s=0}^{n-1}\binom{n-1}{s}\frac{p^{s}(1-p)^{n-1-s}}{n-s}\\
 & =\frac{1}{(1-p)n}\sum_{s=0}^{n-1}\binom{n}{s}p^{s}(1-p)^{n-s}\\
 & =\frac{1-p^{n}}{(1-p)n},
\end{align*}
where the first step follows from the Cauchy-Schwarz inequality, and
the last step uses the binomial theorem. The bound (\ref{eq:exp-bin-s+1})
follows from (\ref{eq:exp-bin-n-s}) since the distribution of $s+1$
for $B(n-1,q)$ is the same as the distribution of $n-s$ for $s\sim B(n-1,1-q)$.
For (\ref{eq:exp-bin-L1}), 
\begin{align*}
\left(\Exp_{s\sim B(n,p)}\;|s-pn|\right)^{2} & \leq\Exp_{s\sim B(n,p)}[(s-pn)^{2}]\\
 & =p(1-p)n,
\end{align*}
where the first step uses the Cauchy-Schwarz inequality, and the second
step uses the fact that the binomial distribution $B(n,p)$ has variance
$p(1-p)n.$
\end{proof}

\subsection{Approximation by polynomials}

We let $\deg p$ denote the total degree of a multivariate polynomial
$p.$ In this paper, we use the terms ``degree'' and ``total degree''
interchangeably, preferring the former for brevity. Let $\phi\colon X\to\Re$
be given, for a finite subset $X\subset\Re^{n}.$ The \emph{$\epsilon$-approximate
degree} of $\phi,$ denoted $\degeps(\phi),$ is the least degree
of a real polynomial $p$ such that $\|\phi-p\|_{\infty}\leq\epsilon.$
We generalize this definition to partial functions $\phi$ on $X$
by defining $\degeps(\phi)$ as the least degree of a real polynomial
$p$ with 
\begin{align}
\left.\begin{aligned} & |\phi(x)-p(x)|\leq\epsilon, &  & x\in\dom\phi,\\
 & |p(x)|\leq1+\epsilon, &  & x\in X\setminus\dom\phi.
\end{aligned}
\qquad\qquad\qquad\qquad\qquad\qquad\right\} \label{eqn:partial-approx}
\end{align}
For a (possibly partial) real function $\phi$ on a finite subset
$X\subset\Re^{n},$ we define $E(\phi,d)$ to be the least $\epsilon$
such that (\ref{eqn:partial-approx}) holds for some polynomial of
degree at most $d.$ In this notation, $\deg_{\epsilon}(\phi)=\min\{d:E(\phi,d)\leq\epsilon\}.$
 The canonical setting of the error parameter is $\epsilon=1/3,$
which is without loss of generality since the error in a uniform approximation
of a Boolean function can be reduced from any given constant in $(0,1/2)$
to any other constant in $(0,1/2)$ with only a constant-factor increase
in the degree of the approximant. One of the earliest results on the
approximation of Boolean functions by polynomials is the following
seminal theorem due to Nisan and Szegedy~\cite{nisan-szegedy94degree}.
\begin{thm}[Nisan and Szegedy]
\label{thm:nisan-szegedy} 
\begin{align*}
\adeg(\AND_{n})\geq\adeg(\UAND_{n})=\Omega(\sqrt{n}).
\end{align*}
\end{thm}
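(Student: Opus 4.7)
The first inequality $\adeg(\AND_{n})\geq\adeg(\UAND_{n})$ is immediate from definitions. Any polynomial $p$ that $\epsilon$-approximates $\AND_{n}$ uniformly on $\zoon$ is a valid $\epsilon$-approximant of $\UAND_{n}$ in the sense of~(\ref{eqn:partial-approx}): on $\dom\UAND_{n}$ one has $\UAND_{n}=\AND_{n}$, and on the complementary Boolean inputs the triangle inequality gives $|p(x)|\leq\epsilon+|\AND_{n}(x)|\leq 1+\epsilon$ since $\AND_{n}$ is $\zoo$-valued. So it suffices to prove $\adeg(\UAND_{n})=\Omega(\sqrt{n})$.

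My plan is the classical Minsky--Papert symmetrization followed by a Markov-type estimate on a discrete grid. Fix a polynomial $p$ of degree $d$ meeting~(\ref{eqn:partial-approx}) for $\UAND_{n}$ with $\epsilon=1/3$, and define the univariate polynomial
\[
q(t)=\Exp_{x\in\zoon\,:\,|x|=t}\,p(x).
\]
A standard argument shows $\deg q\leq d$. Reading off the constraints on $p$: at $t=n$ the unique input satisfies $\UAND_{n}=1$, so $q(n)\in[2/3,4/3]$; every input of weight $n-1$ lies in $\dom\UAND_{n}$ with value $0$, so $|q(n-1)|\leq 1/3$; and for $t\in\{0,1,\dots,n-2\}$ each underlying $p(x)$ has $|p(x)|\leq 4/3$, so $|q(t)|\leq 4/3$. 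In particular, $q$ is bounded by $4/3$ at every integer in $\{0,1,\dots,n\}$, yet $|q(n)-q(n-1)|\geq 1/3$.

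The final step is to combine two classical inequalities for univariate polynomials. The Ehlich--Zeller lemma asserts that a polynomial of degree $d\leq c\sqrt{n}$ bounded by $B$ at the integers $0,1,\dots,n$ is bounded by $O(B)$ on the whole interval $[0,n]$; Markov's inequality then bounds its derivative on $[0,n]$ by $O(d^{2}/n)\cdot O(B)$. Applying both to $q$ yields $|q(n)-q(n-1)|\leq O(d^{2}/n)$, and comparing with the unit-length jump of size at least $1/3$ forces $d=\Omega(\sqrt{n})$.

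I expect the only substantive obstacle to be the Ehlich--Zeller step: without it, the adversary could concentrate most of $\|q\|_{\infty}$ inside the open interval $(n-1,n)$, where no integer constraint on $q$ applies, and Markov's inequality by itself would be useless. Everything else\textemdash the reduction from $\AND_{n}$ to $\UAND_{n}$, the symmetrization, and the translation of the conditions~(\ref{eqn:partial-approx}) into univariate bounds\textemdash is essentially bookkeeping.
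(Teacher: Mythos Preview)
The paper does not supply its own proof of this theorem; it is quoted as a classical result of Nisan and Szegedy and used as a black box. Your sketch is the standard proof: Minsky--Papert symmetrization reduces to a univariate polynomial bounded at the integers $0,1,\dots,n$ with a jump of constant size between $n-1$ and $n$, and the combination of the Ehlich--Zeller (or Rivlin--Cheney) discrete-to-continuous bound with Markov's inequality forces degree $\Omega(\sqrt{n})$. The reduction from $\AND_{n}$ to $\UAND_{n}$ and the translation of~(\ref{eqn:partial-approx}) into constraints on $q$ are handled correctly.
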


\subsection{Multiparty communication}

An excellent introduction to communication complexity theory is the
monograph by Kushilevitz and Nisan~\cite{ccbook}. In our overview,
we will limit ourselves to key definitions and notation. This paper
uses the standard model of randomized multiparty communication known
as the \emph{number-on-the-forehead} \emph{model}~\cite{cfl83multiparty}.
Let $F$ be a (possibly partial) Boolean function on $X_{1}\times X_{2}\times\cdots\times X_{k},$
for some finite sets $X_{1},X_{2},\dots,X_{k}.$ The model features
$k$ players. A given input $(x_{1},x_{2},\dots,x_{k})\in X_{1}\times X_{2}\times\cdots\times X_{k}$
is distributed among the players by placing $x_{i}$ on the forehead
of party $i$ (for $i=1,2,\dots,k$). In other words, party $i$ knows
$x_{1},\dots,x_{i-1},x_{i+1},\dots,x_{k}$ but not $x_{i}.$ The players
communicate according to an agreed-upon protocol by writing bits on
a shared blackboard, visible to them all. They additionally have access
to a shared source of random bits which they can use in deciding what
messages to send. Their goal is to accurately compute the value of
$F$ on any given input in the domain of $F.$ An \emph{$\epsilon$-error
communication protocol for $F$} is one which, on every input $(x_{1},x_{2},\dots,x_{k})\in\dom F,$
produces the correct answer $F(x_{1},x_{2},\dots,x_{k})$ with probability
at least $1-\epsilon.$ The \emph{cost} of a communication protocol
is the total number of bits written to the blackboard in the worst
case on any input. The \emph{$\epsilon$-error randomized communication
complexity of $F$}, denoted\emph{ }$R_{\epsilon}(F)$, is the least
cost of an $\epsilon$-error randomized communication protocol for
$F$. As usual, the standard setting of the error parameter is $\epsilon=1/3,$
which is without loss of generality since the error probability in
a communication protocol can be efficiently reduced by running the
protocol several times independently and outputting the majority answer.

The communication problems of interest to us are \emph{generalized
inner product} $\GIP_{n,k}\colon\zoo^{n\times k}\to\zoo$ and \emph{set
disjointness $\DISJ_{n,k}\colon\zoo^{n\times k}\to\zoo$}, given by
\begin{align*}
\GIP_{n,k}(X)= & \bigoplus_{i=1}^{n}\bigwedge_{j=1}^{k}X_{i,j},\\
\DISJ_{n,k}(X)= & \bigwedge_{i=1}^{n}\bigvee_{j=1}^{k}\overline{X}_{i,j}.
\end{align*}
These $k$-party communication problems are both defined on $n\times k$
matrices, where the $i$th party receives as input all but the $i$th
column of the matrix. The disjointness function evaluates to true
if and only if the input matrix does not have an all-ones row, whereas
the generalized inner product function evaluates to true if and only
if the number of all-ones rows is odd. We also consider a partial
Boolean function $\UDISJ_{n,k}$ on $\zoo^{n\times k}$, called \emph{unique
set disjointness} and defined as the restriction of $\DISJ_{n,k}$
to matrices with at most one all-ones row. In other words, $\UDISJ_{n,k}(X)$
is undefined if $X$ has two or more all-ones rows, and is given by
$\UDISJ_{n,k}(X)=\DISJ_{n,k}(X)$ otherwise.

Let $G$ be a (possibly partial) Boolean function on $X_{1}\times X_{2}\times\cdots\times X_{k},$
representing a $k$-party communication problem, and let $f$ be a
(possibly partial) Boolean function on $\zoon.$ We view the composition
$f\circ G$ as a $k$-party communication problem on $X_{1}^{n}\times X_{2}^{n}\times\cdots\times X_{k}^{n}.$
It will be helpful to keep in mind that for all positive integers
$m$ and $n,$ 
\begin{align}
\GIP_{mn,k} & =\XOR_{m}\circ\GIP_{n,k},\label{eq:gip-decompose}\\
\DISJ_{mn,k} & =\AND_{m}\circ\DISJ_{n,k},\label{eq:disj-decompose}\\
\UDISJ_{mn,k} & =\UAND_{m}\circ\UDISJ_{n,k}.\label{eq:udisj-decompose}
\end{align}
Similarly, if $F_{i}$ for $i=1,2,\dots,m$ is a (possibly partial)
$k$-party communication problem on $X_{i,1}\times X_{i,2}\times\cdots\times X_{i,k}$,
we view $\bigoplus_{i=1}^{m}F_{i}$ as a $k$-party communication
problem on $(\prod X_{i,1})\times(\prod X_{i,2})\times\cdots\times(\prod X_{i,k}).$ 

\subsection{Cylinder intersections}

Let $X_{1},X_{2},\dots,X_{k}$ be nonempty finite sets. A \emph{cylinder
intersection} on $X_{1}\times X_{2}\times\cdots\times X_{k}$ is any
function $\chi\colon X_{1}\times X_{2}\times\cdots\times X_{k}\to\zoo$
of the form 
\begin{align}
\chi(x_{1},\dots,x_{k})=\prod_{i=1}^{k}\chi_{i}(x_{1},\dots,x_{i-1},x_{i+1},\dots,x_{k}),\label{eqn:def-cylinder}
\end{align}
where $\chi_{i}\colon X_{1}\times\cdots\times X_{i-1}\times X_{i+1}\times\cdots\times X_{k}\to\zoo.$
In other words, a cylinder intersection is the product of $k$ Boolean
functions, where the $i$th function does not depend on the $i$th
coordinate but may depend arbitrarily on the other $k-1$ coordinates.
 For a given set $S\subseteq\{1,2,\dots,k\},$ we further specialize
this notion to \emph{$S$-cylinder intersections}, defined as functions
of the form
\[
\chi(x_{1},\dots,x_{k})=\prod_{i\in S}\chi_{i}(x_{1},\dots,x_{i-1},x_{i+1},\dots,x_{k})
\]
for some $\chi_{i}\colon X_{1}\times\cdots\times X_{i-1}\times X_{i+1}\times\cdots\times X_{k}\to\zoo$.
Finally, an \emph{$\ell$-cylinder intersection }on $X_{1}\times X_{2}\times\cdots\times X_{k}$
is any $S$-cylinder intersection for a subset $S\subseteq\{1,2,\dots,k\}$
of cardinality at most $\ell.$ Cylinder intersections were introduced
by Babai, Nisan, and Szegedy~\cite{bns92} and play a fundamental
role in the theory due to the following fact.
\begin{fact}
\label{fact:cylinders-breakdown} Let $\Pi\colon X_{1}\times X_{2}\times\cdots\times X_{k}\to\zoo$
be a deterministic $k$-party communication protocol with cost $c.$
Then 
\begin{align*}
\Pi=\sum_{i=1}^{2^{c}}a_{i}\chi_{i}
\end{align*}
for some $\min\{c,k\}$-cylinder intersections $\chi_{1},\dots,\chi_{2^{c}}$
and some $a_{1},\dots,a_{2^{c}}\in\zoo.$
\end{fact}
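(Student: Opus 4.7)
The plan is to argue directly from the protocol tree of $\Pi$. A deterministic $k$-party protocol of cost $c$ is specified by a binary tree of depth at most $c$ in which each internal node is labeled by a player $i\in\{1,2,\ldots,k\}$ together with a $\{0,1\}$-valued function of $(x_1,\ldots,x_{i-1},x_{i+1},\ldots,x_k)$ telling that player which bit to broadcast, and each leaf is labeled with an output bit. On every input $x=(x_1,\ldots,x_k)$, execution of $\Pi$ follows a unique root-to-leaf path, so the protocol output equals
\[
\Pi(x)=\sum_{\ell} a_{\ell}\,\chi^{(\ell)}(x),
\]
where the sum is over the at most $2^{c}$ leaves, $a_{\ell}\in\zoo$ is the output label at leaf $\ell$, and $\chi^{(\ell)}\colon X_1\times\cdots\times X_k\to\zoo$ is the indicator of the set of inputs that reach $\ell$. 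This already accounts for the bound of $2^{c}$ summands.

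The heart of the matter is to verify that each $\chi^{(\ell)}$ is a $\min\{c,k\}$-cylinder intersection. Fix a leaf $\ell$ and let $S_{\ell}\subseteq\{1,2,\ldots,k\}$ be the set of players who speak at least once along the path from the root to $\ell$; clearly $|S_{\ell}|\leq\min\{c,k\}$, since the path has at most $c$ internal nodes. For each $i\in S_{\ell}$, let $\chi_{i}(x_{1},\ldots,x_{i-1},x_{i+1},\ldots,x_{k})$ be the product over all nodes $v$ on this path at which player $i$ speaks of the indicator that player $i$'s broadcast function at $v$ outputs the particular bit prescribed by the path at $v$. Each such indicator is a Boolean function of $(x_{1},\ldots,x_{i-1},x_{i+1},\ldots,x_{k})$, and products of such functions remain Boolean functions of the same tuple, so $\chi_{i}$ is a well-defined function not depending on $x_{i}$. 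By construction,
\[
\chi^{(\ell)}(x_{1},\ldots,x_{k})=\prod_{i\in S_{\ell}}\chi_{i}(x_{1},\ldots,x_{i-1},x_{i+1},\ldots,x_{k}),
\]
which is an $S_{\ell}$-cylinder intersection and hence a $\min\{c,k\}$-cylinder intersection.

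I expect no real obstacle here; the one subtlety worth highlighting is that a single player $i$ may broadcast more than once along the path to $\ell$, which would naively produce more than $\min\{c,k\}$ factors. The fix, as above, is to gather all broadcasts by player $i$ into a single function $\chi_{i}$ before taking the product, using the closure of such ``player-$i$-oblivious'' Boolean functions under multiplication. Combining the two displays yields the claimed representation of $\Pi$.
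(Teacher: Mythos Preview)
Your argument is correct and is exactly the standard protocol-tree decomposition the paper has in mind; indeed, the paper does not prove this fact but refers the reader to Kushilevitz--Nisan~\cite{ccbook}, whose proof is precisely the one you wrote, with your added observation that at most $\min\{c,k\}$ distinct players speak along any root-to-leaf path giving the $\min\{c,k\}$-cylinder refinement.
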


\noindent We refer the reader to \cite{ccbook} for a simple proof
of Fact~\ref{fact:cylinders-breakdown}. Recall that a randomized
protocol of cost $c$ is a probability distribution on deterministic
protocols of cost $c.$ With this in mind, one easily infers the following
from Fact~\ref{fact:cylinders-breakdown}:
\begin{cor}
\label{cor:cylinders-function} Let $F$ be a $($possibly partial$)$
Boolean function on $X_{1}\times X_{2}\times\cdots\times X_{k}.$
If $R_{\epsilon}(F)=c,$ then 
\begin{align*}
 & |F(x_{1},\dots,x_{k})-\Pi(x_{1},\dots,x_{k})|\leq\epsilon, &  & (x_{1},\dots,x_{k})\in\dom F,\\
 & |\Pi(x_{1},\dots,x_{k})|\leq1, &  & (x_{1},\dots,x_{k})\in X_{1}\times X_{2}\times\cdots\times X_{k},
\end{align*}
where $\Pi=\sum_{\chi}a_{\chi}\chi$ is a linear combination of $\min\{c,k\}$-cylinder
intersections with $\sum_{\chi}|a_{\chi}|\leq2^{c}.$
\end{cor}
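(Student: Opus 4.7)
The plan is to view an $\epsilon$-error randomized protocol for $F$ as a distribution over deterministic protocols of cost $c$, apply Fact~\ref{fact:cylinders-breakdown} to each one, and then average.

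More precisely, I would begin by recalling that a randomized $k$-party protocol of cost $c$ and error $\epsilon$ is a probability distribution $\mu$ over deterministic $k$-party protocols $\Pi_D\colon X_1\times\cdots\times X_k\to\zoo$, each of cost at most $c$, such that for every $(x_1,\dots,x_k)\in\dom F$,
\[
\bigl|\operatorname*{\mathbf{E}}_{D\sim\mu}\Pi_D(x_1,\dots,x_k) - F(x_1,\dots,x_k)\bigr| \;\leq\; \epsilon.
\]
Define $\Pi := \operatorname*{\mathbf{E}}_{D\sim\mu}\Pi_D$. Since each $\Pi_D$ is $\{0,1\}$-valued, clearly $|\Pi(x_1,\dots,x_k)|\leq 1$ on all of $X_1\times\cdots\times X_k$, and by the displayed inequality we have the required $\epsilon$-approximation on $\dom F$.

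Next I would unfold the cylinder-intersection representation. By Fact~\ref{fact:cylinders-breakdown}, for each deterministic $\Pi_D$ there is an expansion
\[
\Pi_D \;=\; \sum_{i=1}^{2^c} a_i^{(D)}\,\chi_i^{(D)},
\]
where each $\chi_i^{(D)}$ is a $\min\{c,k\}$-cylinder intersection and each $a_i^{(D)}\in\zoo$. Substituting this into $\Pi = \operatorname*{\mathbf{E}}_{D}\Pi_D$ and regrouping terms by the underlying cylinder intersection $\chi$, we obtain $\Pi=\sum_\chi a_\chi\chi$ with coefficients
\[
a_\chi \;=\; \operatorname*{\mathbf{E}}_{D\sim\mu}\;\sum_{i:\,\chi_i^{(D)}=\chi} a_i^{(D)}.
\]
Here the outer sum ranges over the (finite) collection of all $\min\{c,k\}$-cylinder intersections on $X_1\times\cdots\times X_k$. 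The bound on the $\ell_1$-mass of the coefficients then follows immediately:
\[
\sum_\chi |a_\chi| \;\leq\; \operatorname*{\mathbf{E}}_{D\sim\mu}\;\sum_{i=1}^{2^c} |a_i^{(D)}| \;\leq\; 2^c.
\]

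There is no serious obstacle here; the whole content is the observation that expectation commutes with the cylinder decomposition, and that the $\{0,1\}$-valued output of a deterministic protocol yields a function with $\ell_\infty$-norm at most $1$ after averaging. The only mild subtlety is keeping track of the degree bound $\min\{c,k\}$ (as opposed to just $c$), but this is inherited term-by-term from Fact~\ref{fact:cylinders-breakdown} and is preserved under taking convex combinations.
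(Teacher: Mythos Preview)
Your proposal is correct and follows exactly the approach the paper indicates: the paper itself does not spell out a proof but simply remarks that a randomized protocol of cost $c$ is a probability distribution on deterministic protocols of cost $c$ and that the corollary is then ``easily inferred'' from Fact~\ref{fact:cylinders-breakdown}. Your write-up is a faithful expansion of precisely that inference.
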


\subsection{Discrepancy}

For a (possibly partial) Boolean function $F$ on $X_{1}\times X_{2}\times\cdots\times X_{k}$,
a probability distribution $\mu$ on the domain of $F,$ and a set
$S\subseteq\{1,2,\dots,k\},$ the \emph{$S$-discrepancy} \emph{of
$F$ with respect to $\mu$} is defined as 
\begin{align*}
\disc_{S}(F,\mu) & =\max_{\chi}|\langle(-1)^{F},\mu\cdot\chi\rangle|\\
 & =\max_{\chi}\left|\sum_{x\in\dom F}(-1)^{F(x)}\mu(x)\chi(x)\right|,
\end{align*}
where the maximum is over $S$-cylinder intersections $\chi$. Further
maximizing over $S$ gives the key notions of \emph{$\ell$-discrepancy
}and \emph{discrepancy}, as follows:
\begin{align*}
\disc_{\ell}(F,\mu) & =\max_{\substack{S\subseteq\{1,2,\dots,k\}\\
|S|\leq\ell
}
}\disc_{S}(F,\mu),\\
\disc(F,\mu) & =\max_{S\subseteq\{1,2,\dots,k\}}\disc_{S}(F,\mu).
\end{align*}
By definition, 
\[
\disc_{S}(F,\mu)\leq\disc_{\ell}(F,\mu)\leq\disc(F,\mu)
\]
for every $\ell=0,1,\dots,k$ and every set $S$ of cardinality at
most $\ell.$ 

In light of Corollary~\ref{cor:cylinders-function}, upper bounds
on the discrepancy give lower bounds on the communication complexity.
This fundamental technique is known as the \emph{discrepancy method}~\cite{chor-goldreich88ip,bns92,ccbook}:
\begin{thm}[Discrepancy method]
\label{thm:dm} For every $($possibly partial$)$ Boolean function
$F$ on $X_{1}\times X_{2}\times\cdots\times X_{k}$ and every probability
distribution $\mu$ on the domain of $F,$ 
\begin{align}
 & 2^{R_{\epsilon}(F)}\geq\frac{1-2\epsilon}{\disc(F,\mu)}.\label{eq:dm-general}\\
\intertext{\text{More generally,}} & 2^{R_{\epsilon}(F)}\geq\frac{1-2\epsilon}{\disc_{\min\{R_{\epsilon}(F),k\}}(F,\mu)}.\label{eq:dm-l-party}
\end{align}
\end{thm}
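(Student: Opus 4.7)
The plan is a direct application of the discrepancy method in its standard form, with careful bookkeeping on the dimension of the cylinder intersections that arise. Write $c=R_\epsilon(F)$ and view the optimal $\epsilon$-error protocol as a probability distribution over deterministic protocols $\Pi_\omega$ of cost at most $c$, indexed by the shared randomness $\omega$.

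First, invoke Fact~\ref{fact:cylinders-breakdown} to decompose each $\Pi_\omega$: the leaves of its protocol tree partition $X_1\times X_2\times\cdots\times X_k$ into at most $2^c$ cylinder intersections $\chi_1^\omega,\dots,\chi_{2^c}^\omega$, each an $S$-cylinder intersection for some $|S|\le\min\{c,k\}$, with output bit $a_i^\omega\in\zoo$ at leaf~$i$. Using the partition identity $\sum_i\chi_i^\omega\equiv1$, rewrite the $\pm1$-valued function as
\[
(-1)^{\Pi_\omega(x)}=\sum_{i=1}^{2^c}(1-2a_i^\omega)\,\chi_i^\omega(x),
\]
which is a signed combination of at most $2^c$ many $\min\{c,k\}$-cylinder intersections whose coefficients all have absolute value one.

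Second, compare two estimates of the quantity $\Exp_\omega\langle(-1)^F,\mu\cdot(-1)^{\Pi_\omega}\rangle$. From the decomposition above, the triangle inequality together with the definition of discrepancy yields, uniformly in $\omega$ and hence also on average,
\[
\bigl|\langle(-1)^F,\mu\cdot(-1)^{\Pi_\omega}\rangle\bigr|\le 2^c\cdot\disc_{\min\{c,k\}}(F,\mu).
\]
On the other hand, the correctness guarantee $\Pr_\omega[\Pi_\omega(x)\ne F(x)]\le\epsilon$ for every $x\in\dom F$ gives $\Exp_\omega[(-1)^{F(x)+\Pi_\omega(x)}]\ge1-2\epsilon$, and averaging against $\mu$ yields $\Exp_\omega\langle(-1)^F,\mu\cdot(-1)^{\Pi_\omega}\rangle\ge1-2\epsilon$. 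Combining the two estimates gives $2^c\cdot\disc_{\min\{c,k\}}(F,\mu)\ge1-2\epsilon$, which is the refined bound \eqref{eq:dm-l-party}; inequality \eqref{eq:dm-general} then follows from the trivial monotonicity $\disc_{\min\{c,k\}}(F,\mu)\le\disc(F,\mu)$.

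No substantive obstacle is expected. The only point needing any care is that each leaf of a $c$-bit protocol is touched by at most $\min\{c,k\}$ players and hence gives a $\min\{c,k\}$-cylinder rather than a full cylinder intersection---exactly the observation encoded in Fact~\ref{fact:cylinders-breakdown} and the mechanism by which the paper will later trade large $k$ against a small cost parameter $c$.
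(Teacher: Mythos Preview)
Your argument is correct and is precisely the standard discrepancy-method proof the paper has in mind: the paper does not spell out its own proof here but simply cites~\cite{sherstov12mdisj} for~\eqref{eq:dm-general} and remarks that the same argument works verbatim for~\eqref{eq:dm-l-party}. Your write-up supplies exactly that argument, leveraging Fact~\ref{fact:cylinders-breakdown} for the $\min\{c,k\}$-cylinder structure of the leaves and then carrying out the usual correlation comparison.
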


\noindent A proof of (\ref{eq:dm-general}) can be found in \cite{sherstov12mdisj};
that same proof carries over to discrepancy with respect to any given
family $\chi$ of functions and in particular establishes (\ref{eq:dm-l-party})
as well. 

A useful property of discrepancy is its convexity in the second argument,
as formalized by the following proposition. 
\begin{prop}[Convexity of discrepancy]
\label{prop:disc-convex}Let $F$ be a $($possibly partial$)$ Boolean
function on $X_{1}\times X_{2}\times\cdots\times X_{k}$, and let
$\mu$ and $\lambda$ be probability distributions on the domain of
$F.$ Then for every $S\subseteq\{1,2,\dots,k\}$ and $0\leq p\leq1,$
\begin{align*}
\disc_{S}(F,p\mu+(1-p)\lambda) & \leq p\disc_{S}(F,\mu)+(1-p)\disc_{S}(F,\lambda),
\end{align*}
and likewise for $\disc_{\ell}$ and $\disc.$
\end{prop}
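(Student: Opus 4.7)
The plan is to unwind the definition of $\disc_S$ and exploit the fact that the inner product $\langle (-1)^F, \nu\cdot\chi\rangle$ is linear in the measure $\nu$. So I would fix an arbitrary $S$-cylinder intersection $\chi$ and write
\begin{align*}
\langle (-1)^F, (p\mu+(1-p)\lambda)\cdot\chi\rangle
  &= p\,\langle (-1)^F, \mu\cdot\chi\rangle + (1-p)\,\langle (-1)^F, \lambda\cdot\chi\rangle,
\end{align*}
which is just linearity of the sum defining the inner product.

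Next, I would apply the triangle inequality to the left-hand side and bound each of the two terms on the right by the corresponding discrepancy, using $p,1-p\geq 0$:
\begin{align*}
|\langle (-1)^F, (p\mu+(1-p)\lambda)\cdot\chi\rangle|
  &\leq p\,|\langle (-1)^F, \mu\cdot\chi\rangle| + (1-p)\,|\langle (-1)^F, \lambda\cdot\chi\rangle|\\
  &\leq p\,\disc_S(F,\mu) + (1-p)\,\disc_S(F,\lambda).
\end{align*}
Since the right-hand side no longer depends on $\chi$, taking the maximum over all $S$-cylinder intersections on the left yields the claimed inequality for $\disc_S$.

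Finally, for $\disc_\ell$ and $\disc$, the bound is obtained by taking a further maximum over $S$ (of cardinality at most $\ell$, respectively over all $S\subseteq\{1,\dots,k\}$) on both sides of the already established inequality; convexity is preserved under pointwise maxima of this form. There is no real obstacle here\textemdash the only thing to be careful about is the order of operations (apply the triangle inequality \emph{before} taking the max over $\chi$, and take the max over $S$ \emph{after} the per-$S$ inequality is in hand); once those are respected, the proof is a two-line calculation.
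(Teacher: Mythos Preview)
Your proof is correct and follows essentially the same approach as the paper: the paper's argument is exactly the triangle inequality applied to $\langle(-1)^{F},(p\mu+(1-p)\lambda)\cdot\chi\rangle$ for an arbitrary cylinder intersection $\chi$, after which one maximizes over $\chi$ (and then over $S$ for $\disc_{\ell}$ and $\disc$).
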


\noindent By induction, Proposition~\ref{prop:disc-convex} immediately
generalizes to any finite convex combination of probability distributions.
It is this more general form that we will invoke in our applications. 
\begin{proof}[Proof of Proposition~\emph{\ref{prop:disc-convex}}]
Immediate from the following inequality for any cylinder intersection
$\chi$:
\begin{multline*}
|\langle(-1)^{F},(p\mu+(1-p)\lambda)\cdot\chi\rangle|\\
\leq p\cdot|\langle(-1)^{F},\mu\cdot\chi\rangle|+(1-p)\cdot|\langle(-1)^{F},\lambda\cdot\chi\rangle|,\qquad\tag*{}
\end{multline*}
where for partial functions $F$ the inner products are restricted
to the domain of $F.$
\end{proof}
It is clear that discrepancy is a continuous function of the input
distribution. The following proposition quantifies this continuity.
\begin{prop}[Continuity of discrepancy]
\label{prop:disc-continuous}For any $k$-party communication problem
$F\colon X_{1}\times X_{2}\times\cdots\times X_{k}\to\zoo$ and any
probability distributions $\mu$ and $\tilde{\mu}$ on the domain
of $F,$ 
\[
\disc(F,\mu)\leq\disc(F,\tilde{\mu})+\|\mu-\tilde{\mu}\|_{1}.
\]
More generally, for any $S\subseteq\{1,2,\dots,k\},$ any $($possibly
partial$)$ $k$-party communication problems $F_{1},F_{2},\dots,F_{m},$
and any probability distributions $\mu_{1},\mu_{2},\dots,\mu_{m}$
and $\tilde{\mu}_{1},\tilde{\mu}_{2},\dots,\tilde{\mu}_{m}$ on the
corresponding domains,
\[
\disc_{S}\left(\bigoplus_{i=1}^{m}F_{i},\bigotimes_{i=1}^{m}\mu_{i}\right)\leq\sum_{A\subseteq\{1,2,\dots,m\}}\disc_{S}\left(\bigoplus_{i\in A}F_{i},\bigotimes_{i\in A}\tilde{\mu}_{i}\right)\prod_{i\notin A}\|\mu_{i}-\tilde{\mu}_{i}\|_{1}
\]
and likewise for $\disc_{\ell}$ and $\disc.$
\end{prop}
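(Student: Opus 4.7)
The first inequality is the special case of the second when $m=1,$ so the whole statement amounts to the displayed bound on $\disc_{S}(\bigoplus F_{i},\bigotimes\mu_{i}).$ Nevertheless, it is useful to dispose of the single-function case first as a warmup: for any cylinder intersection $\chi$ on the domain of $F,$ write $\mu\cdot\chi=\tilde{\mu}\cdot\chi+(\mu-\tilde{\mu})\cdot\chi,$ apply the triangle inequality to $\langle(-1)^{F},\mu\cdot\chi\rangle,$ and use $|\chi|\leq 1$ to bound the error term by $\|\mu-\tilde{\mu}\|_{1}.$ Maximizing over $\chi$ gives the first inequality, and the analogous argument for $S$-cylinder intersections gives its $\disc_{S}$ and $\disc_{\ell}$ analogues.

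For the general bound, the key idea is a multilinear expansion of the product measure. Writing $\mu_{i}=\tilde{\mu}_{i}+(\mu_{i}-\tilde{\mu}_{i})$ and expanding,
\[
\bigotimes_{i=1}^{m}\mu_{i}\;=\;\sum_{A\subseteq\{1,\dots,m\}}\Bigl(\bigotimes_{i\in A}\tilde{\mu}_{i}\Bigr)\otimes\Bigl(\bigotimes_{i\notin A}(\mu_{i}-\tilde{\mu}_{i})\Bigr).
\]
Fix an $S$-cylinder intersection $\chi$ on the full product domain and substitute this expansion into $\langle(-1)^{\bigoplus F_{i}},(\bigotimes\mu_{i})\cdot\chi\rangle.$ Since $(-1)^{\bigoplus F_{i}(x_{1},\dots,x_{m})}=\prod_{i}(-1)^{F_{i}(x_{i})},$ the term indexed by $A$ factors, after holding the $A^{c}$-coordinates fixed, as
\[
\sum_{x_{A^{c}}}\Bigl[\prod_{i\notin A}(-1)^{F_{i}(x_{i})}(\mu_{i}(x_{i})-\tilde{\mu}_{i}(x_{i}))\Bigr]\cdot\sum_{x_{A}}\Bigl[\prod_{i\in A}(-1)^{F_{i}(x_{i})}\tilde{\mu}_{i}(x_{i})\Bigr]\chi(x_{A},x_{A^{c}}).
\]

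The crucial observation is that for every fixed $x_{A^{c}},$ the restricted function $x_{A}\mapsto\chi(x_{A},x_{A^{c}})$ is itself an $S$-cylinder intersection on the blocks indexed by $A$: each factor $\chi_{j}$ of $\chi$ that avoids coordinate $j$ continues to avoid coordinate $j$ after fixing some blocks entirely. Consequently the inner sum over $x_{A}$ is bounded in absolute value by $\disc_{S}(\bigoplus_{i\in A}F_{i},\bigotimes_{i\in A}\tilde{\mu}_{i}),$ uniformly in $x_{A^{c}}.$ Pulling this bound out and using $\sum_{x_{A^{c}}}\prod_{i\notin A}|\mu_{i}(x_{i})-\tilde{\mu}_{i}(x_{i})|=\prod_{i\notin A}\|\mu_{i}-\tilde{\mu}_{i}\|_{1},$ the $A$-term is bounded by the summand appearing on the right-hand side of the proposition. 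Summing over $A$ and then maximizing over $\chi$ finishes the proof; the $\disc_{\ell}$ and $\disc$ versions follow by taking a further max over $S.$

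The only subtlety worth flagging is the preservation of the cylinder-intersection structure under the partial restriction $x_{A^{c}}\mapsto x_{A^{c}}$: the expansion is an $m$-fold tensor identity, but $\chi$ itself lives over all $m$ blocks and does not factorize, so without this observation there would be no obvious way to read a block-$A$ discrepancy out of the expression. For partial $F_{i},$ the sums are restricted to $\prod_{i}\dom F_{i}$ as usual, and since each $\mu_{i},\tilde{\mu}_{i}$ is supported on $\dom F_{i},$ the calculation above is unaffected.
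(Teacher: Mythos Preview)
Your proof is correct and follows essentially the same approach as the paper: expand $\bigotimes\mu_{i}$ multilinearly via $\mu_{i}=\tilde{\mu}_{i}+(\mu_{i}-\tilde{\mu}_{i})$, then for each subset $A$ use the key observation that fixing the coordinates outside $A$ leaves $\chi$ an $S$-cylinder intersection on the remaining blocks, so the inner sum is bounded by $\disc_{S}(\bigoplus_{i\in A}F_{i},\bigotimes_{i\in A}\tilde{\mu}_{i})$. Your added remark that the first inequality is the $m=1$ case (with the empty-$A$ term contributing $\|\mu-\tilde{\mu}\|_{1}$) is a nice observation the paper does not make explicit.
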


\begin{proof}
It suffices to prove the claim for $\disc_{S}$. Fix a set $S\subseteq\{1,2,\dots,k\}$
and an $S$-cylinder intersection $\chi$ with
\[
\disc_{S}\left(\bigoplus_{i=1}^{m}F_{i},\bigotimes_{i=1}^{m}\mu_{i}\right)=\left|\left\langle \bigotimes_{i=1}^{m}(-1)^{F_{i}},\chi\cdot\bigotimes_{i=1}^{m}\mu_{i}\right\rangle \right|,
\]
where as usual the inner product on the right-hand side is restricted
to $\prod\dom F_{i}.$ Then
\begin{align}
\disc_{S}\left(\bigoplus_{i=1}^{m}F_{i},\bigotimes_{i=1}^{m}\mu_{i}\right) & =\left|\left\langle \bigotimes_{i=1}^{m}(-1)^{F_{i}},\chi\cdot\bigotimes_{i=1}^{m}(\tilde{\mu}_{i}+(\mu_{i}-\tilde{\mu}_{i}))\right\rangle \right|\nonumber \\
 & =\left|\sum_{A\subseteq\{1,2,\dots,m\}}\left\langle \bigotimes_{i=1}^{m}(-1)^{F_{i}},\chi\cdot\Lambda_{A}\right\rangle \right|,\label{eq:intermediate-disc-cont}
\end{align}
where $\Lambda_{A}$ is given by $\Lambda_{A}(x_{1},x_{2}\ldots,x_{m})=\prod_{i\in A}\tilde{\mu}_{i}(x_{i})\cdot\prod_{i\notin A}(\mu_{i}(x_{i})-\tilde{\mu}_{i}(x_{i})).$
Continuing,
\begin{align}
 & \left|\left\langle \bigotimes_{i=1}^{m}(-1)^{F_{i}},\chi\cdot\Lambda_{A}\right\rangle \right|\nonumber \\
 & \qquad=\left|\sum_{x_{1},\ldots,x_{m}}\chi(x)\prod_{i\in A}(-1)^{F_{i}(x_{i})}\tilde{\mu}_{i}(x_{i})\cdot\prod_{i\notin A}(-1)^{F_{i}(x_{i})}(\mu_{i}(x_{i})-\tilde{\mu}_{i}(x_{i}))\right|\nonumber \\
 & \qquad\leq\sum_{x_{i}:i\notin A}\left|\sum_{x_{i}:i\in A}\chi(x)\prod_{i\in A}(-1)^{F_{i}(x_{i})}\tilde{\mu}_{i}(x_{i})\right|\prod_{i\notin A}|\mu_{i}(x_{i})-\tilde{\mu}_{i}(x_{i})|\nonumber \\
 & \qquad\leq\sum_{x_{i}:i\notin A}\disc_{S}\left(\bigoplus_{i\in A}F_{i},\bigotimes_{i\in A}\tilde{\mu}_{i}\right)\prod_{i\notin A}|\mu_{i}(x_{i})-\tilde{\mu}_{i}(x_{i})|\nonumber \\
 & \qquad=\disc_{S}\left(\bigoplus_{i\in A}F_{i},\bigotimes_{i\in A}\tilde{\mu}_{i}\right)\prod_{i\notin A}\|\mu_{i}-\tilde{\mu}_{i}\|_{1},\label{eq:lambda-A-bound-disc-cont}
\end{align}
where the next to last step is legitimate because for any fixing of
$x_{i}$ for $i\notin A$, the function $\chi$ continues to be an
$S$-cylinder intersection with respect to the remaining coordinates.
In view of~(\ref{eq:intermediate-disc-cont}) and~(\ref{eq:lambda-A-bound-disc-cont}),
the proof is complete.
\end{proof}

\section{\label{sec:gip}Inner product}

In this section, we determine the communication complexity of the
inner product problem $\GIP_{n,k}$ for $k\geq\log n$ players. Our
proofs build on the classic lower and upper bounds for this problem
for $k\leq\log n,$ due to Babai et al.~\cite{bns92} and Grolmusz~\cite{grolmusz94multi-upper},
respectively.

\subsection{\label{sec:gip-cc-lower}Lower bound}

For the lower bound, we use the generalization of the discrepancy
method given by~Theorem~\ref{thm:dm}. We will work with the following
input distribution.
\begin{defn}
Let $\upsilon_{n,k,\ell}$ denote the probability distribution on
$n\times k$ Boolean matrices whereby each row is chosen independently
and uniformly at random from the set $\{u\in\zoo^{k}:|u|\geq k-\ell\}.$ 
\end{defn}

\noindent In particular, $\upsilon_{n,k,k}$ is the uniform probability
distribution on $\zoo^{n\times k}$. In this special case, a strong
upper bound on the discrepancy of generalized inner product was obtained
in the seminal work of Babai, Nisan, and Szegedy~\cite{bns92}.
\begin{thm}[Babai, Nisan, and Szegedy]
\label{thm:BNS}For any positive integers $n$ and $k,$ 
\[
\disc(\GIP_{n,k},\upsilon_{n,k,k})\leq\left(1-\frac{1}{4^{k-1}}\right)^{n}.
\]
\end{thm}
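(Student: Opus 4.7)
The plan is to establish this as the classical Babai--Nisan--Szegedy estimate via the iterated--Cauchy--Schwarz ``cube'' trick. Let $\chi=\prod_{j=1}^{k}\chi_{j}$ be a cylinder intersection achieving the discrepancy, so the task reduces to bounding
\[
D := \left|\Exp_{X} \chi(X)\,(-1)^{\GIP_{n,k}(X)}\right| \leq (1-4^{1-k})^{n},
\]
where $X$ is uniform over $\zoo^{n\times k}$.

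First, I would apply Cauchy--Schwarz once for each of the columns $x_{1},\dots,x_{k-1}$. Each application squares the current quantity, doubles the corresponding column into $x_{j}^{0},x_{j}^{1}$, and eliminates the factor $\chi_{j}$ (which does not depend on $x_{j}$) using $|\chi_{j}|\leq 1$. After $k-1$ such rounds one obtains
\[
D^{2^{k-1}} \leq \Exp \prod_{\epsilon\in\zoo^{k-1}} \chi_{k}(X^{\epsilon}_{-k})\cdot (-1)^{\GIP_{n,k}(X^{\epsilon})},
\]
where $X^{\epsilon}$ is the matrix whose $j$th column is $x_{j}^{\epsilon_{j}}$ for $j<k$ and whose $k$th column is $x_{k}$, and the expectation is uniform in all $x_{j}^{b}$ and in $x_{k}$. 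Since $\chi_{k}(X^{\epsilon}_{-k})$ is independent of $x_{k}$, I would pull the product $\prod_{\epsilon}\chi_{k}$ outside the average over $x_{k}$ and bound it by $1$ in absolute value, obtaining
\[
D^{2^{k-1}} \leq \Exp_{\{x_{j}^{b}\}_{j<k}}\left|\Exp_{x_{k}}\prod_{\epsilon\in\zoo^{k-1}} (-1)^{\GIP_{n,k}(X^{\epsilon})}\right|.
\]

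Next, using $\GIP_{n,k}(X)=\bigoplus_{i}\AND_{k}(X_{i})$ to decompose over rows, a direct per-row parity check shows that $\prod_{\epsilon}(-1)^{x_{i,k}\prod_{j<k}x_{i,j}^{\epsilon_{j}}}$ equals $-1$ exactly when $x_{i,k}=1$ and $x_{i,j}^{0}\neq x_{i,j}^{1}$ for every $j<k$, and equals $+1$ otherwise. Averaging over the uniform bit $x_{i,k}$ therefore annihilates row $i$ unless the event $E_{i}=\{x_{i,j}^{0}=x_{i,j}^{1}\text{ for some }j<k\}$ occurs, in which case row $i$ contributes $1$. By independence across rows,
\[
D^{2^{k-1}} \leq \Pr\!\left[\bigcap_{i=1}^{n}E_{i}\right] = (1-2^{1-k})^{n},
\]
since $\Pr[E_{i}^{c}]=2^{1-k}$. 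Finally, Bernoulli's inequality gives $(1-4^{1-k})^{2^{k-1}}\geq 1-2^{k-1}\cdot 4^{1-k}=1-2^{1-k}$, hence $(1-2^{1-k})^{n/2^{k-1}}\leq (1-4^{1-k})^{n}$, and the theorem follows.

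The main obstacle is the bookkeeping in the $(k-1)$-fold Cauchy--Schwarz step: at each round one must track which column has just been doubled, which $\chi_{j}$ has just been absorbed, and what the current ``cube'' of coupled matrices $\{X^{\epsilon}\}$ looks like. Once the cube structure is correctly set up, the per-row parity calculation and the closing Bernoulli step are both short.
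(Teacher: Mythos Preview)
The paper does not prove this theorem; it quotes it as the classical Babai--Nisan--Szegedy bound and uses it as a black box. Your argument is correct and is precisely the standard BNS proof: $k-1$ rounds of Cauchy--Schwarz to eliminate $\chi_{1},\dots,\chi_{k-1}$ and double the first $k-1$ columns, followed by the per-row parity computation $\sum_{\epsilon}\prod_{j<k}x_{i,j}^{\epsilon_{j}}\equiv\prod_{j<k}(x_{i,j}^{0}+x_{i,j}^{1})\pmod 2$, and finally the Bernoulli step $(1-4^{1-k})^{2^{k-1}}\geq 1-2^{1-k}$ to pass from $(1-2^{1-k})^{n/2^{k-1}}$ to $(1-4^{1-k})^{n}$. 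The only thing worth tightening in a write-up is the first displayed inequality after the $(k-1)$-fold Cauchy--Schwarz: as stated it should carry an absolute value (or be understood as a bound on $D^{2^{k-1}}$ by a real quantity that may a priori be negative), which you correctly handle in the very next line by inserting $|\cdot|$ around the inner expectation over $x_{k}$.
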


We generalize this discrepancy bound to $\upsilon_{n,k,\ell}$ for
any $\ell.$
\begin{thm}
\label{thm:gip-disc}For any positive integers $n,k,\ell$ with $\ell\leq k,$
\[
\disc_{\ell}(\GIP_{n,k},\upsilon_{n,k,\ell})\leq\left(1-\frac{1}{2^{\ell-1}\binom{k}{\leq\ell}}\right)^{n}.
\]
\end{thm}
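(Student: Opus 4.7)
The plan is to adapt the Babai--Nisan--Szegedy Cauchy--Schwarz argument (Theorem~\ref{thm:BNS}) to the non-product row distribution $\upsilon_{n,k,\ell}=\nu_{\ell}^{\otimes n}$, where $\nu_{\ell}$ is the uniform distribution on $\{u\in\zoo^{k}:|u|\geq k-\ell\}$, while at the same time tracking only the $\ell$ active coordinates of the cylinder intersection. Fix $S=\{j_{1},\ldots,j_{\ell}\}\subseteq\onetok$ of size $\ell$ (the case $|S|<\ell$ only gives a stronger bound) and an $S$-cylinder intersection $\chi=\prod_{t=1}^{\ell}\chi_{j_{t}}$; set $D=\langle(-1)^{\GIP_{n,k}},\upsilon_{n,k,\ell}\cdot\chi\rangle$. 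I would perform $\ell$ rounds of Cauchy--Schwarz, one for each column in $S$: at round $r$ we pull $\chi_{j_{r}}$ (which does not depend on $X_{j_{r}}$) out of the $X_{j_{r}}$-expectation and bound it in absolute value by $1$, and replace the column $X_{j_{r}}$ by two independent copies $X_{j_{r}}^{0},X_{j_{r}}^{1}$ drawn from the conditional distribution of $X_{j_{r}}$ given the remaining, possibly already duplicated, columns. Because Cauchy--Schwarz acts column-wise and the rows of $\upsilon_{n,k,\ell}$ are i.i.d., row-independence is preserved, and after $\ell$ rounds one obtains
\[
|D|^{2^{\ell}}\leq\left(\Exp\prod_{\omega\in\zoo^{\ell}}(-1)^{\AND_{k}(u^{(\omega)})}\right)^{n},
\]
where $u^{(\omega)}$ has $u^{(\omega)}_{j_{t}}=u_{j_{t}}^{\omega_{t}}$ for $1\leq t\leq\ell$ and $u^{(\omega)}_{j}=u_{j}$ for $j\notin S$.

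To evaluate the per-row expectation, I would use $1-2y=(-1)^{y}$ on $\zoo$ together with the identity $\sum_{\omega}\prod_{t}u_{j_{t}}^{\omega_{t}}=\prod_{t}(u_{j_{t}}^{0}+u_{j_{t}}^{1})$. If $u_{-S}$ has any zero coordinate then $\AND_{k}(u^{(\omega)})=0$ for every $\omega$ and the product is $1$; otherwise $u_{-S}=\mathbf{1}_{-S}$, and the product equals $(-1)^{\prod_{t}(u_{j_{t}}^{0}+u_{j_{t}}^{1})}$, which is $-1$ precisely when each $u_{j_{t}}^{0}+u_{j_{t}}^{1}=1$ (i.e.\ exactly one of the two copies is $1$) and $+1$ otherwise. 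So the per-row expectation equals $1-2p$, where $p$ is the probability of the event ``$u_{-S}=\mathbf{1}_{-S}$ and $u_{j_{t}}^{0}+u_{j_{t}}^{1}=1$ for every $t$'' under the post-cube distribution. The decisive observation is that on the event $u_{-S}=\mathbf{1}_{-S}$, the support constraint $|u|\geq k-\ell$ is already slack, so the conditional law of $u_{S}$ under $\nu_{\ell}$ is uniform on $\zoo^{\ell}$. A straightforward induction on the Cauchy--Schwarz rounds then shows that, conditional on $u_{-S}=\mathbf{1}_{-S}$, the $2\ell$ variables $(u_{j_{t}}^{0},u_{j_{t}}^{1})_{t=1}^{\ell}$ are i.i.d.\ uniform on $\zoo$: in the inductive step, uniformity and independence in the hypothesis force the conditional distribution of the next variable to be duplicated to be uniform regardless of the current ``others'', so the two new copies are also i.i.d.\ uniform. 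Hence $\Pr[u_{j_{t}}^{0}+u_{j_{t}}^{1}=1\text{ for all }t\mid u_{-S}=\mathbf{1}_{-S}]=2^{-\ell}$, and since exactly $2^{\ell}$ support vectors of $\nu_{\ell}$ have $u_{-S}=\mathbf{1}_{-S}$, we get $p=(2^{\ell}/\binom{k}{\leq\ell})\cdot 2^{-\ell}=1/\binom{k}{\leq\ell}$.

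Assembling the pieces yields $|D|^{2^{\ell}}\leq(1-2/\binom{k}{\leq\ell})^{n}$, and Bernoulli's inequality $(1-x)^{1/2^{\ell}}\leq 1-x/2^{\ell}$ for $x\in[0,1]$ gives $|D|\leq(1-1/(2^{\ell-1}\binom{k}{\leq\ell}))^{n}$, which is the claimed bound. The step I expect to require the most care is the iterated Cauchy--Schwarz against the non-product distribution $\nu_{\ell}$: the conditional distributions evolve from round to round, and one has to verify both that row-independence is preserved and that the clean ``i.i.d.\ uniform on $\zoo$ conditional on $u_{-S}=\mathbf{1}_{-S}$'' structure actually emerges at the end. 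The slackness of the support constraint at $u_{-S}=\mathbf{1}_{-S}$ is exactly what makes the induction go through without messy conditional-distribution bookkeeping.
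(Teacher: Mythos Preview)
Your proof is correct, but it follows a genuinely different route from the paper's.

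The paper never re-runs the Cauchy--Schwarz cube argument. Instead it uses Theorem~\ref{thm:BNS} as a black box together with convexity of discrepancy (Proposition~\ref{prop:disc-convex}). Concretely, the paper conditions on the set $T\subseteq\{1,\dots,n\}$ of rows for which $u_{-S}=\mathbf{1}_{-S}$; rows outside $T$ contribute nothing to $\GIP$, while on the rows in $T$ the first $\ell$ coordinates are uniform, so the conditioned problem is exactly $\GIP_{|T|,\ell}$ under the uniform distribution. Theorem~\ref{thm:BNS} bounds that discrepancy by $(1-4^{1-\ell})^{|T|}$; since $|T|\sim B(n,p)$ with $p=2^{\ell}/\binom{k}{\le\ell}$, averaging via the binomial moment generating function gives the stated bound.

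You exploit the very same structural fact---that on $u_{-S}=\mathbf{1}_{-S}$ the $S$-coordinates become unconstrained and uniform---but you invoke it \emph{after} the $\ell$ Cauchy--Schwarz rounds, at the level of the per-row cube expectation, rather than \emph{before}, at the level of the discrepancy. Your induction (that conditioning on $u_{-S}=\mathbf{1}_{-S}$ keeps all already-duplicated and not-yet-duplicated $S$-coordinates i.i.d.\ uniform through every round) is correct and is indeed the delicate point; it works precisely because the round-$r$ conditional of $u_{j_r}$ is taken given the realized value of $u_{-S}$, so on the event $u_{-S}=\mathbf{1}_{-S}$ the inductive hypothesis applies directly. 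The paper's approach is more modular (it transplants immediately to other problems, as Section~\ref{sec:disj} shows), while yours is self-contained and avoids the convexity/black-box layer; both hinge on the same slackness observation.
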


\noindent For $\ell=k,$ this discrepancy bound is identical to that
of Theorem~\ref{thm:BNS}. In the setting $\ell\ll k$ of interest
to us, however, the new bound is substantially stronger. 
\begin{proof}[Proof of Theorem~\emph{\ref{thm:gip-disc}}]
Since the communication problem $\GIP_{n,k}$ and the probability
distribution $\upsilon_{n,k,\ell}$ are both symmetric with respect
to the $k$ players, we have
\begin{equation}
\disc_{\ell}(\GIP_{n,k},\upsilon_{n,k,\ell})=\disc_{\{1,2,\dots,\ell\}}(\GIP_{n,k},\upsilon_{n,k,\ell}).\label{eq:gip-unsymmetrize}
\end{equation}
For a given set $S\subseteq\{1,2,\dots,n\}$ and a probability distribution
$\sigma$ on matrices $X\in\zoo^{n\times k},$ let $\sigma|_{S}$
stand for the probability distribution induced by $\sigma$ after
conditioning on the event that $(X_{i,\ell+1},X_{i,\ell+2},\dots,X_{i,k})=(1,1,\dots,1)$
if and only if $i\in S.$ Observe that $\upsilon_{n,k,\ell}|_{S}$
is a probability distribution on matrices $X\in\zoo^{n\times k}$
whereby $X|_{S}$ and $X|_{\overline{S}}$ are distributed independently
such that 
\begin{align}
 & X|_{S,\{1,2,\dots,\ell\}}\sim\upsilon_{|S|,\ell,\ell},\label{eq:X-not-S-active-players-gip}\\
 & X|_{S,\{\ell+1,\ell+2,\dots,k\}}=\begin{bmatrix}1 & 1 & \cdots & 1\\
1 & 1 & \cdots & 1\\
\vdots & \vdots & \ddots & \vdots\\
1 & 1 & \cdots & 1
\end{bmatrix},\label{eq:X-not-S-passive-players-gip}
\end{align}
and $X|_{\overline{S}}$ does not have an all-ones row. In particular,
the rows in $X|_{\overline{S}}$ do not affect the value of the function.
The conditional probability distribution of $X|_{S}$ given any value
of $X|_{\overline{S}}$ is always (\ref{eq:X-not-S-active-players-gip})\textendash (\ref{eq:X-not-S-passive-players-gip}).
Viewing $\upsilon_{n,k,\ell}|_{S}$ as the convex combination of these
conditional probability distributions, corresponding to every possible
value of $X|_{\overline{S}}$, we conclude by Proposition~\ref{prop:disc-convex}
that
\begin{align}
\disc_{\{1,2,\dots,\ell\}}(\GIP_{n,k},\upsilon_{n,k,\ell}|_{S}) & \leq\disc(\GIP_{|S|,\ell},\upsilon_{|S|,\ell,\ell}).\label{eq:gip-condition}
\end{align}
We will now express $\upsilon_{n,k,\ell}$ as a convex combination
of probability distributions $\upsilon_{n,k,\ell}|_{S}$ and use the
convexity of discrepancy to complete the proof. Specifically, let
$p=2^{\ell}/\binom{k}{\leq\ell}.$ Then
\begin{align*}
\disc_{\{1,2,\dots,\ell\}} & (\GIP_{n,k},\upsilon_{n,k,\ell})\\
 & =\disc_{\{1,2,\dots,\ell\}}\left(\GIP_{n,k},\Exp_{s\sim B(n,p)}\Exp_{\substack{S\subseteq\{1,2,\dots,n\}\\
|S|=s
}
}\upsilon_{n,k,\ell}|_{S}\right)\\
 & \tag*{{\text{by definition of \ensuremath{\upsilon_{n,k,\ell}}}}}\rule[-5mm]{0mm}{11mm}\\
 & \leq\Exp_{s\sim B(n,p)}\Exp_{\substack{S\subseteq\{1,2,\dots,n\}\\
|S|=s
}
}\disc_{\{1,2,\dots,\ell\}}(\GIP_{n,k},\upsilon_{n,k,\ell}|_{S})\\
 & \tag*{{\text{by Proposition\,\ref{prop:disc-convex}}}}\rule[-5mm]{0mm}{8mm}\\
 & \leq\Exp_{s\sim B(n,p)}\disc(\GIP_{s,k},\upsilon_{s,\ell,\ell})\tag*{{\text{by \eqref{eq:gip-condition}}}}\\
 & \leq\Exp_{s\sim B(n,p)}\left(1-\frac{1}{4^{\ell-1}}\right)^{s}\tag*{{\text{by Theorem\,\ref{thm:BNS}}}}\\
 & =\sum_{s=0}^{n}\binom{n}{s}\left(1-\frac{1}{4^{\ell-1}}\right)^{s}p^{s}(1-p)^{n-s}\\
 & =\left(1-\frac{p}{4^{\ell-1}}\right)^{n}\\
 & =\left(1-\frac{1}{2^{\ell-1}\binom{k}{\leq\ell}}\right)^{n}.
\end{align*}
In light of (\ref{eq:gip-unsymmetrize}), the proof is complete.
\end{proof}
As a corollary to the new bound on the discrepancy of generalized
inner product, we obtain our claimed communication lower bound for
this function.
\begin{thm}
\label{thm:gip-cc-lower}Abbreviate $R=R_{1/3}(\GIP_{n,k}).$ Then
\begin{equation}
\binom{k}{\mathord\leq R}^{2}R\geq\Omega(n).\label{eq:gip-master-formula}
\end{equation}
In particular,
\begin{align}
R_{1/3}(\GIP_{n,k}) & =\Omega\left(\frac{\log n}{\log\left\lceil 1+\frac{k}{\log n}\right\rceil }+1\right).\label{eq:bns-cc-PS}
\end{align}
\end{thm}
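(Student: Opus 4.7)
The plan is to invoke the refined discrepancy method, inequality~(\ref{eq:dm-l-party}) of Theorem~\ref{thm:dm}, together with the sharpened discrepancy bound just proved in Theorem~\ref{thm:gip-disc}. I will take the distribution $\upsilon_{n,k,\ell}$ with $\ell=\min\{R,k\},$ so that the exponent $\ell$ matches the number of players that can actually participate in a protocol of cost~$R.$ The two ingredients combine to give
$$
2^R \;\geq\; \frac{1/3}{\disc_\ell(\GIP_{n,k},\upsilon_{n,k,\ell})} \;\geq\; \frac{1}{3}\left(1-\frac{1}{2^{\ell-1}\binom{k}{\mathord\leq\ell}}\right)^{\!-n}.
$$
Taking logarithms and using $-\ln(1-x)\geq x$ rearranges this into $(R+O(1))\cdot 2^{\ell-1}\binom{k}{\mathord\leq\ell}\geq \Omega(n),$ and hence $R\cdot 2^\ell\binom{k}{\mathord\leq\ell}\geq \Omega(n).$

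To pass from this inequality to the master formula~(\ref{eq:gip-master-formula}), I will use two elementary facts. First, $2^\ell=\sum_{i\leq\ell}\binom{\ell}{i}\leq\sum_{i\leq\ell}\binom{k}{i}=\binom{k}{\mathord\leq\ell}$ whenever $\ell\leq k,$ so $R\binom{k}{\mathord\leq\ell}^2\geq\Omega(n).$ Second, $\binom{k}{\mathord\leq\ell}\leq\binom{k}{\mathord\leq R}$ since $\ell\leq R,$ producing $R\binom{k}{\mathord\leq R}^2\geq\Omega(n),$ which is~(\ref{eq:gip-master-formula}). The boundary case $R\geq k$ is automatic, because then $\ell=k$ and $\binom{k}{\mathord\leq R}=2^k.$

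Next I will derive the explicit bound~(\ref{eq:bns-cc-PS}) from~(\ref{eq:gip-master-formula}). The regime $R\geq\log n$ is immediate, since $k\geq\log n$ forces $\log\lceil 1+k/\log n\rceil\geq 1.$ So I assume $R<\log n,$ which also ensures $R<k.$ The binomial estimate~(\ref{eq:binom-sum-bound}) then converts~(\ref{eq:gip-master-formula}) into $(ek/R)^{2R}R\geq\Omega(n),$ and taking logarithms while absorbing $\log R\leq \log\log n=o(\log n)$ into the right side yields $R(1+\log(k/R))\geq\Omega(\log n).$ Since $k\geq R,$ the elementary bound $1+\log(k/R)\leq 2\log(1+k/R)$ sharpens this to $R\log(1+k/R)\geq\Omega(\log n).$

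The final step is to extract $R\geq c\log n/\log\lceil 1+k/\log n\rceil$ for a small constant $c>0.$ Since the function $r\mapsto r\log(1+k/r)$ is easily checked to be non-decreasing in $r>0,$ it suffices to upper-bound $R^{*}\log(1+k/R^{*})$ at $R^{*}:=c\log n/\log(1+t),$ where $t:=k/\log n\geq 1,$ and show it falls below the constant hidden in the $\Omega(\log n)$ lower bound. A direct substitution gives $k/R^{*}=t\log(1+t)/c,$ and then the trivial estimates $\log t\leq\log(1+t)$ and $\log\log(1+t)\leq\log(1+t)$ give $\log(1+k/R^{*})\leq 2\log(1+t)+\log(1/c)+O(1),$ so $R^{*}\log(1+k/R^{*})\leq c\log n\,(O(1)+\log(1/c)).$ Choosing $c$ sufficiently small makes this strictly less than $\Omega(\log n),$ contradicting $R<R^{*}.$ Conceptually, the $\ell$-party discrepancy bound of Theorem~\ref{thm:gip-disc} is doing all the work; everything afterwards is bookkeeping, with the only place requiring care being this last calculus estimate, which confirms that $\log(k/R)$ does not outrun $\log(1+k/\log n)$ by more than an additive $O(\log(1/c))$ as $R$ shrinks.
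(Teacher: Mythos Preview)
Your proof is correct and follows essentially the same route as the paper. The derivation of the master formula~(\ref{eq:gip-master-formula}) is identical: apply~(\ref{eq:dm-l-party}) with $\ell=\min\{R,k\}$, plug in Theorem~\ref{thm:gip-disc}, and absorb $2^{\ell}$ into $\binom{k}{\mathord\leq\ell}$ via $2^{\ell}\leq\binom{k}{\mathord\leq\ell}$; the paper does this in one line by writing $2^{\min\{R,k\}}\binom{k}{\leq\min\{R,k\}}\leq\binom{k}{\leq R}^{2}$. For the passage to~(\ref{eq:bns-cc-PS}), the paper merely substitutes $\binom{k}{\mathord\leq R}\leq e^{R}\lceil k/R\rceil^{R}$ and writes ``whence~(\ref{eq:bns-cc-PS}),'' whereas you supply the calculus explicitly. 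Your monotonicity argument for $r\mapsto r\log(1+k/r)$ and the final estimate are correct. One small remark: you invoke $k\geq\log n$ twice (to conclude $R<k$ from $R<\log n$, and to ensure $t\geq1$); this is harmless in the paper's setting but is not actually a hypothesis of the theorem. The case $k<\log n$ is equally routine: then $\lceil1+k/\log n\rceil=2$, and the same inequality $(ek/R)^{2R}R\geq\Omega(n)$ together with $k<\log n$ forces $R\geq\Omega(\log n)$ directly.
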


\begin{proof}
We have
\begin{align*}
2^{R} & \geq\frac{1}{3\disc_{\min\{R,k\}}(\GIP_{n,k},\upsilon_{n,k,\min\{R,k\}})} &  & \text{by Theorem\,\ref{thm:dm}}\\
 & \geq\frac{1}{3}\exp\left(\frac{n}{2^{\min\{R,k\}}\binom{k}{\leq\min\{R,k\}}}\right) &  & \text{by Theorem\,\ref{thm:gip-disc}}\\
 & \geq\frac{1}{3}\exp\left(\frac{n}{\binom{k}{\mathord\leq R}^{2}}\right),
\end{align*}
settling (\ref{eq:gip-master-formula}). Now, recall from (\ref{eq:binom-sum-bound})
that 
\[
\binom{k}{\mathord\leq R}\leq\e^{R}\left\lceil \frac{k}{R}\right\rceil ^{R}.
\]
Substituting this estimate in (\ref{eq:gip-master-formula}) gives
$\e^{2R}\lceil k/R\rceil^{2R}R\geq\Omega(n),$ whence (\ref{eq:bns-cc-PS}).
\end{proof}

\subsection{Upper bound}

We now prove a matching upper bound on the communication complexity
of inner product for $k\geq\log n$ players. Our proof is based on
Grolmusz's well-known \emph{deterministic} protocol~\cite{grolmusz94multi-upper}
for this function, which we are able to speed up using shared randomness.
In addition to lower communication cost, the protocol below has the
advantage of being simultaneous, which was not the case in previous
work~\cite{grolmusz94multi-upper,babai-gal-kimmel-lokam95simultaneous-multiparty}.
\begin{thm}
\label{thm:gip-upper-cc}For any $k\geq\log n$ and any constant $\epsilon>0,$
\begin{align*}
R_{\epsilon}(\GIP_{n,k}) & =O\left(\frac{\log n}{\log\left\lceil 1+\frac{k}{\log n}\right\rceil }+1\right).
\end{align*}
Moreover, this upper bound is achieved by a simultaneous protocol.
\end{thm}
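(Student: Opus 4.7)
Our approach is to adapt Grolmusz's deterministic protocol for $\GIP_{n,k}$~\cite{grolmusz94multi-upper} to a randomized \emph{simultaneous} protocol, speeding it up using public randomness. Recall that Grolmusz's deterministic protocol has cost $O(k\lceil n/2^{k}\rceil)$, which simplifies to $O(k)$ when $k\geq\log n$. This is far from the target $O(\log n/\log\lceil 1+k/\log n\rceil)$ once $k\gg\log n$, so a genuine speedup is needed.

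First, I would use public randomness to select a random subset $T\subseteq\{1,2,\dots,k\}$ of ``active'' players, with $s=|T|$ tuned to the target cost; all players outside $T$ remain silent. The key structural observation is that every active player sees all columns in $[k]\setminus T$, since each of them is missing only their own column, which lies inside $T$. These outside columns are therefore common knowledge to the players of $T$. Consequently they can restrict attention to the set $I\subseteq\{1,\dots,n\}$ of rows that are all-ones in the columns of $[k]\setminus T$, and we obtain the simplification
\[
\GIP_{n,k}(X)=\bigoplus_{i\in I}\AND_T(X_i),
\]
a $\GIP$ instance on the sub-matrix $X|_{I,T}$ of dimensions $|I|\times s$.

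Next, the players in $T$ execute Grolmusz's (simultaneous) protocol on $X|_{I,T}$, incurring cost $O(s\lceil |I|/2^{s}\rceil)$. Choosing $s$ slightly larger than $\log|I|$ forces the ceiling to be $O(1)$, and the total cost becomes $O(s)$; balancing $s$ against $k$ so that $2^s$ is roughly $\lceil 1+k/\log n\rceil$ times the ``effective'' size of $I$ over a random $T$ should yield the claimed bound. Note that here shared randomness is genuinely needed because no single deterministic $T$ can exploit the ``extra'' $k-\log n$ players—randomizing over $T$ is exactly what lets us trade active-player count against cost-per-player in a way parallel to the lower-bound analysis of Section~\ref{sec:gip-cc-lower}, where $R$-cylinder intersections are averaged over subsets of size $R$.

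The main obstacle I anticipate is that $|I|$ can be as large as $n$ on adversarial inputs (e.g.\ when $X$ has many rows of weight close to $k$), giving cost $\Omega(\log n)$ per run of the subprotocol—acceptable only when the target cost is itself $\Omega(\log n)$. For larger $k$ a more refined argument is needed: one must exploit the fact that, under a random choice of $T$ of the right size, the expected value of $|I|$ is bounded in a way that mirrors the expectations computed in the proof of Theorem~\ref{thm:gip-disc} (using the binomial estimates from Fact~\ref{fact:binomial-expectations}), combined with convexity of the protocol cost in the distribution over $T$. Amplification from expected to high-probability cost can then be handled by a Markov/repetition step, preserving simultaneity.
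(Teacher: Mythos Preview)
Your proposal has a genuine gap: the quantity $|I|$ cannot be controlled by randomizing over $T$, because the input $X$ is adversarial. Take $X$ to be the all-ones matrix. Then for \emph{every} choice of $T$, every row is all-ones on the columns outside $T$, so $|I|=n$ deterministically. The reduced instance $X|_{I,T}$ is an $n\times s$ matrix, and Grolmusz's protocol on it costs $\Omega(s\lceil n/2^{s}\rceil)\geq\Omega(\log n)$ no matter how you choose $s$. Your suggested fix---bounding $\Exp_{T}|I|$ via binomial estimates ``as in Theorem~\ref{thm:gip-disc}''---conflates two different sources of randomness: in the lower-bound proof the randomness is over the \emph{input distribution} $\upsilon_{n,k,\ell}$, whereas here the input is fixed by the adversary and only $T$ is random. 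No Markov or repetition argument can salvage this, since $|I|=n$ with probability~$1$ on this input.

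The paper's idea is different and avoids the problem entirely. Rather than selecting a random set of players and hoping few rows survive the restriction, the players use shared randomness to pick a uniformly random \emph{vector} $y\in\{0,1\}^{k}$ of Hamming weight at least $k-\ell$, where $\ell$ is minimal with $\binom{k}{\mathord\leq\ell}\geq 3n$. Since there are at least $3n$ such vectors but only $n$ rows in $X$, with probability $\geq 2/3$ the vector $y$ is absent from the rows of $X$. Renumbering so that $y=(0^{j},1^{k-j})$ with $j\leq\ell$, player $i$ (for $i=1,\dots,j$) broadcasts the single bit $(n_{i-1}+n_{i})\bmod 2$, where $n_{i}$ is the number of rows equal to $(0^{i},1^{k-i})$; this is known to player $i$ since it depends only on columns other than $i$. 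Because $n_{j}=0$ (the row $y$ is absent), the telescoping sum recovers $n_{0}\bmod 2=\GIP_{n,k}(X)$. The total cost is $j\leq\ell=O(\log n/\log\lceil 1+k/\log n\rceil)$, and the protocol is simultaneous. The point you were missing is that the randomness should be used to locate an \emph{absent high-weight row pattern}, not to shrink the row set.
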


\begin{proof}
We first consider the case 
\begin{equation}
k\geq\log3n.\label{eq:k-large}
\end{equation}
Recall that the generalized inner product problem, $\GIP_{n,k},$
is the $k$-party problem of determining whether a given $n\times k$
Boolean matrix $X$ contains an odd number of all-ones rows, where
the $i$th party $(1\leq i\leq k)$ sees all the columns of $X$ except
for the $i$th column. Let $\ell$ denote the smallest natural number,
$0\leq\ell\leq k,$ with the property that
\begin{equation}
\binom{k}{\mathord\leq\ell}\geq3n.\label{eq:ell-defined}
\end{equation}
Such $\ell$ exists by (\ref{eq:k-large}). Moreover, in view of the
lower bound  in~(\ref{eq:binom-sum-bound}), it is straightforward
to verify that
\begin{equation}
\ell=O\left(\frac{\log n}{\log\left\lceil 1+\frac{k}{\log n}\right\rceil }+1\right).\label{eq:ell-not-too-large}
\end{equation}

As the first step of the protocol, the players use their shared randomness
to pick a uniformly random row vector $y\in\zoo^{k}$ with Hamming
weight at least $k-\ell.$ The defining property (\ref{eq:ell-defined})
of $\ell$ ensures that with probability $2/3$ or higher, $y$ is
distinct from every row of the input matrix $X.$ We will prove that
conditioned on this event, the protocol is guaranteed to output the
correct answer. We emphasize that this first step requires no communication.
Indeed, it is our only departure from Grolmusz's protocol~\cite{grolmusz94multi-upper},
where the corresponding vector was computed deterministically and
counted toward the communication cost.

The rest of the analysis is identical to \cite{grolmusz94multi-upper}.
Specifically, by renumbering if necessary the players and the columns
of $X$, we may assume that 
\[
y=(\underbrace{0,0,\dots,0}_{j},1,1,\dots,1)
\]
for some $j\leq\ell.$ Let $n_{i}$ denote the number of rows of $X$
of the form 
\[
(\underbrace{0,0,\dots,0}_{i},1,1,\dots,1).
\]
In this notation, $n_{j}=0$ by the assumption on $y,$ and the objective
of the protocol is to compute the quantity $n_{0}\bmod2.$ For $i=1,2,\dots,j,$
the protocol has the $i$th party broadcast the sum $(n_{i-1}+n_{i})\bmod2,$
which is known to him because he sees all but the $i$th coordinate
of every row of $X$. These $j$ broadcasts are sufficient to compute
the answer since
\begin{align*}
n_{0} & \equiv(n_{0}+n_{1})+(n_{1}+n_{2})+\cdots+(n_{j-1}+n_{j})+n_{j}\pmod2\\
 & \equiv(n_{0}+n_{1})+(n_{1}+n_{2})+\cdots+(n_{j-1}+n_{j})\pmod2.
\end{align*}
Observe that the described protocol is simultaneous, with communication
cost bounded by (\ref{eq:ell-not-too-large}). Its error probability
can be reduced from $1/3$ to any constant $\epsilon>0$ by running
several copies of the protocol in parallel and using the majority
answer. 

We handle the case $k\in[\log n,\log3n)$ in a manner analogous to
\cite{grolmusz94multi-upper}, using the composed structure (\ref{eq:gip-decompose})
of generalized inner product. Specifically, the players partition
the input matrix horizontally into submatrices with at most $n/3$
rows each and run the above protocol on the resulting submatrices
with a small constant error parameter, simultaneously and in parallel.
The protocol output is the XOR of these answers.
\end{proof}

\section{\label{sec:disj}Set disjointness}

We now turn to the set disjointness problem $\DISJ_{n,k}$, proving
matching lower and upper bounds on its communication complexity for
$k\geq\log n$ players. Analogous to the previous section, our lower
bound is a reduction to the case $k\leq\log n$ followed by an appeal
to a known lower bound for that setting~\cite{sherstov12mdisj}.
The treatment here is more technical than for inner product.

\subsection{$\ell$-discrepancy\label{sec:discrepancy-of-DISJ}}

For positive integers $n$ and $k$, let $\mu_{n,k}$ denote the uniform
probability distribution on matrices $X\in\zoo^{n\times k}$ such
that $X_{i,1}=X_{i,2}=\cdots=X_{i,k-1}=1$ for precisely one row $i.$
The following result~\cite[Theorem~4.2]{sherstov12mdisj} bounds
the multiparty discrepancy of the XOR of $m$ independent instances
of the set disjointness problem, each distributed according to the
probability distribution just defined.
\begin{thm}[Sherstov]
\label{thm:asymmetric-disj}For any integers $n_{1},n_{2},\dots,n_{m},$
\[
\disc\left(\bigoplus_{i=1}^{m}\DISJ_{n_{i},k},\bigotimes_{i=1}^{m}\mu_{n_{i},k}\right)\leq\frac{(2^{k-1}-1)^{m}}{\sqrt{n_{1}n_{2}\cdots n_{m}}}.
\]
\end{thm}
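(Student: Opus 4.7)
The plan is to prove this discrepancy bound via the Babai--Nisan--Szegedy Cauchy--Schwarz ``cube'' argument, exploiting the tensor-product structure of the distribution and the XOR structure of the function to factorize the analysis across the $m$ instances. Writing $F = \bigoplus_{i=1}^m \DISJ_{n_i,k}$ and $\mu = \bigotimes_{i=1}^m \mu_{n_i,k}$, the target bound is roughly $(2^{k-1}-1)^m$ times $1/\sqrt{n_1\cdots n_m}$, so the proof must produce a $\sqrt{n_i}$ saving per instance and a multiplicative constant that is $k$-dependent but instance-multiplicative.

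\textbf{Step 1 (Reduction to a cube expectation).} I would fix an arbitrary cylinder intersection $\chi$ witnessing $\disc(F,\mu)$ and apply Cauchy--Schwarz $k-1$ times in succession to peel off players $1,2,\dots,k-1$, exactly as in the classical BNS argument. This yields
\[
\disc(F,\mu)^{2^{k-1}} \;\leq\; \Phi,
\]
where $\Phi$ is an expectation of $\bigl|\Exp_{x_k}\prod_{\epsilon\in\{0,1\}^{k-1}}(-1)^{F(X^\epsilon)}\bigr|$ over a ``cube'' of $2^{k-1}$ inputs $X^\epsilon$ built from $2(k-1)$ independent samples from the appropriate marginal of $\mu$. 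Because $F$ XORs independent copies of $\DISJ_{n_i,k}$ on disjoint blocks and $\mu$ is a tensor product across those blocks, the integrand splits as a product over $i=1,\dots,m$, and hence $\Phi$ factorizes cleanly as $\Phi = \prod_{i=1}^m \Phi_i$, where $\Phi_i$ is the single-instance cube expectation for $(\DISJ_{n_i,k},\mu_{n_i,k})$.

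\textbf{Step 2 (Single-instance cube bound).} Next I would show $\Phi_i \leq (2^{k-1}-1)^{2^{k-1}}/n_i^{2^{k-2}}$. The key is the defining structure of $\mu_{n,k}$: every $X$ in its support has a unique special row $i^*(X)$ with $X_{i^*(X),1}=\dots=X_{i^*(X),k-1}=1$, and consequently $(-1)^{\DISJ_{n,k}(X)} = -(-1)^{X_{i^*(X),k}}$. Substituting into the cube product reduces it to $\prod_\epsilon (-1)^{X^\epsilon_{i^*(X^\epsilon),k}}$, a product of $2^{k-1}$ hidden bits indexed by the cube vertices. Because each bit is uniform and independent of the first $k-1$ columns (conditional on the corresponding $i^*$), the inner expectation over the $k$-th column vanishes unless the special-row indices pair up perfectly across the cube. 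A combinatorial accounting of the admissible pairings, combined with the $O(1/n)$ probability of each forced collision forced by the uniformity of $i^*$, plus the enumeration of the $2^{k-1}-1$ possible non-special prefix patterns in $\{0,1\}^{k-1}\setminus\{(1,\dots,1)\}$, yields the claimed bound.

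\textbf{Step 3 (Combine).} Assembling the previous steps and extracting the $2^{k-1}$-th root gives
\[
\disc(F,\mu) \;\leq\; \Bigl(\prod_{i=1}^m \Phi_i\Bigr)^{1/2^{k-1}} \;\leq\; \prod_{i=1}^m \frac{2^{k-1}-1}{\sqrt{n_i}} \;=\; \frac{(2^{k-1}-1)^m}{\sqrt{n_1 n_2 \cdots n_m}}.
\]
The main obstacle is Step~2: tracking precisely which alignments of $(i^*(X^\epsilon))_\epsilon$ contribute, and showing that the combined count of alignments together with the normalization of $\mu_{n,k}$ produces exactly the factor $(2^{k-1}-1)^{2^{k-1}}$ rather than a different constant-to-the-$2^{k-1}$. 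I would expect the $2^{k-1}-1$ (rather than $2^{k-1}$) to arise because the non-special rows of $\mu_{n,k}$ live in $\{0,1\}^{k-1}\setminus\{(1,\dots,1)\}$, which has size $2^{k-1}-1$ and governs the measure of the cube configurations that survive the inner averaging over the $k$-th column.
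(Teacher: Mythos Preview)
This theorem is not proved in the present paper; it is imported as \cite[Theorem~4.2]{sherstov12mdisj} and used as a black box for the subsequent Theorems~\ref{thm:symmetric-disj} and~\ref{thm:symmetric-disj-few-parties}. So there is no in-paper argument to compare against, only the cited reference.

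That said, your plan is the right one and matches the argument in the reference: the BNS/Raz cube inequality, factorization over the $m$ instances via the tensor structure of $\mu$ and the XOR structure of $F$, and the observation that on the support of $\mu_{n,k}$ one has $(-1)^{\DISJ_{n,k}(X)}=-(-1)^{X_{i^*,k}}$ with $i^*$ determined by the first $k-1$ columns. Your diagnosis of where the factors $2^{k-1}-1$ and $\sqrt{n_i}$ enter---the former from the $2^{k-1}-1$ admissible non-special prefixes in $\{0,1\}^{k-1}\setminus\{1^{k-1}\}$, the latter from the collision probability of the special-row index---is also on target.

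One point in Step~1 needs care. You describe the cube as built from ``$2(k-1)$ independent samples from the appropriate marginal of $\mu$,'' but the marginal of $\mu_{n,k}$ on columns $1,\dots,k-1$ is \emph{not} a product across those columns: the constraint ``exactly one all-ones row'' couples them, so one cannot resample the first $k-1$ columns independently and remain on the support. The standard remedy, and what the original proof does, is to absorb the density $\mu/\text{uniform}$ into the function before invoking the cube inequality (Fact~\ref{fact:bns-bound-on-discrepancy}, or its $(k-1)$-step variant that leaves the last coordinate unpeeled) over the \emph{uniform} measure. The single-instance cube expectation then reduces, after the inner average over column~$k$, to the event that the special-row indices $i^*(X^\epsilon)$ occur with even multiplicity across the $2^{k-1}$ cube vertices---exactly the pairing you describe---weighted by the density factors. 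That calculation produces the bound $(2^{k-1}-1)^{2^{k-1}}/n_i^{2^{k-2}}$ you need. With this adjustment your outline goes through.
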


\noindent For the purposes of this paper, we will slightly adapt Theorem~\ref{thm:asymmetric-disj}
to obtain a discrepancy bound under a more symmetric distribution.
Specifically, let 
\begin{equation}
\sigma_{n,k}=\frac{1}{2}\sigma_{n,k}^{0}+\frac{1}{2}\sigma_{n,k}^{1},\label{eq:sigma-n-k-definition}
\end{equation}
where $\sigma_{n,k}^{0}$ is the uniform probability distribution
on $n\times k$ Boolean matrices without an all-ones row, and $\sigma_{n,k}^{1}$
is the uniform probability distribution on $n\times k$ Boolean matrices
with precisely one all-ones row.
\begin{thm}
\label{thm:symmetric-disj}For any integers $n_{1},n_{2},\dots,n_{m},$
\begin{multline*}
\disc\left(\bigoplus_{i=1}^{m}\DISJ_{n_{i},k},\bigotimes_{i=1}^{m}\sigma_{n_{i},k}\right)\\
\leq\left(\frac{(\sqrt{2^{k}-1}+1)\sqrt{2^{k}-2}}{2}\right)^{m}\frac{1}{\sqrt{n_{1}n_{2}\cdots n_{m}}}.\qquad
\end{multline*}
\end{thm}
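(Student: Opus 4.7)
The first step is to recast the $\sigma$-discrepancy in a form parallel to Theorem~\ref{thm:asymmetric-disj}. Since $\DISJ_{n,k}$ is identically $1$ on $\supp\sigma_{n,k}^{0}$ and identically $0$ on $\supp\sigma_{n,k}^{1}$, the two supports being disjoint, we have the pointwise identity
\[
(-1)^{\DISJ_{n,k}(X)}\,\sigma_{n,k}(X)\;=\;\tfrac{1}{2}\bigl(\sigma_{n,k}^{1}-\sigma_{n,k}^{0}\bigr)(X),
\]
which on taking tensor products gives
\[
\prod_{i=1}^{m}(-1)^{\DISJ_{n_i,k}(X_{i})}\,\sigma_{n_i,k}(X_{i})\;=\;2^{-m}\prod_{i=1}^{m}\bigl(\sigma_{n_i,k}^{1}-\sigma_{n_i,k}^{0}\bigr)(X_{i}).
\]
Consequently, the discrepancy we must bound equals $2^{-m}$ times the maximum of $\bigl|\langle\chi,\bigotimes_{i}(\sigma_{n_i,k}^{1}-\sigma_{n_i,k}^{0})\rangle\bigr|$ over cylinder intersections $\chi$. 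The same bookkeeping applied to $\mu_{n,k}$, writing $\mu_{n,k}=\tfrac{1}{2}\mu_{n,k}^{0}+\tfrac{1}{2}\mu_{n,k}^{1}$ according to the value of the designated row's $k$-th coordinate, converts Theorem~\ref{thm:asymmetric-disj} into the bound $\bigl|\langle\chi,\bigotimes_{i}(\mu_{n_i,k}^{1}-\mu_{n_i,k}^{0})\rangle\bigr|\le 2^{m}(2^{k-1}-1)^{m}/\sqrt{n_{1}\cdots n_{m}}$.

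The core step is to relate the signed measure $\sigma^{1}-\sigma^{0}$ to a symmetrization of $\mu^{1}-\mu^{0}$. For each $j\in\{1,2,\dots,k\}$, let $\mu_{n,k}^{(j)}$ denote the column-permuted analog of $\mu_{n,k}$ in which column $j$ plays the role of column $k$; by relabeling the players, Theorem~\ref{thm:asymmetric-disj} carries over verbatim to any product $\bigotimes_{i}\mu_{n_i,k}^{(j_{i})}$. The plan is to express $\sigma_{n,k}$ as a convex combination of distributions built from the $\mu_{n,k}^{(j)}$'s, modulo a controlled perturbation, and then to expand $\bigotimes_{i}\sigma_{n_i,k}$ as a convex combination of tensor products of such $\mu^{(j)}$'s. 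Proposition~\ref{prop:disc-convex} then pushes the $\mu$-discrepancy bound through the combination term by term, while Proposition~\ref{prop:disc-continuous} absorbs the residual perturbation into the constant.

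The somewhat weaker constant $\tfrac{1}{2}(\sqrt{2^{k}-1}+1)\sqrt{2^{k}-2}$, as compared with $2^{k-1}-1=\tfrac{1}{2}(2^{k}-2)$ from Theorem~\ref{thm:asymmetric-disj}, should emerge from the different per-row normalizations: non-designated rows under $\sigma$ are uniform on $2^{k}-1$ vectors, while under $\mu$ they are uniform on $2^{k}-2$ vectors, producing the extra multiplicative factor $(\sqrt{2^{k}-1}+1)/\sqrt{2^{k}-2}$ per instance through a Cauchy–Schwarz step on the designated row.

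The main obstacle I anticipate is the precise combinatorial passage from $\sigma$ to the $\mu^{(j)}$'s, because the supports are incomparable: $\supp\sigma_{n,k}$ contains matrices with no almost-all-ones row (lying outside every $\supp\mu_{n,k}^{(j)}$), and conversely $\supp\mu_{n,k}^{(j)}$ contains matrices with two or more almost-all-ones rows (lying outside $\supp\sigma_{n,k}$). Cleanly handling both overflows, likely by a two-sided split of $\sigma_{n,k}=\tfrac{1}{2}\sigma^{0}+\tfrac{1}{2}\sigma^{1}$ matched against the analogous decomposition of each $\mu_{n,k}^{(j)}$ and then using continuity of discrepancy to swallow the $L_{1}$ error, is the technically delicate part; if that is too lossy, a fallback is to redo directly the Cauchy–Schwarz unfolding that underlies Theorem~\ref{thm:asymmetric-disj} with $\sigma$ in place of $\mu$, which introduces exactly the announced constants and bypasses the support-matching issue altogether.
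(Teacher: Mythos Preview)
Your proposal correctly identifies the high-level toolkit---express $\sigma$ in terms of $\mu$, then invoke convexity (Proposition~\ref{prop:disc-convex}) and continuity (Proposition~\ref{prop:disc-continuous}) of discrepancy on top of Theorem~\ref{thm:asymmetric-disj}---but the concrete mechanism you propose does not go through, and you effectively concede this yourself. Symmetrizing over column-permuted copies $\mu^{(j)}$ does not produce $\sigma$: the non-designated rows under $\mu^{(j)}$ are uniform over a $j$-dependent set of $2^{k}-2$ vectors, so no convex combination over $j$ yields the $j$-independent uniform distribution on the $2^{k}-1$ non-$1^{k}$ vectors that $\sigma$ requires. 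The mismatch is not a small $L_{1}$ perturbation either (a constant fraction of the mass sits in the wrong place in every row), so ``swallowing it with continuity'' would cost a factor growing with $n_{i}$ and destroy the bound. Your fallback of redoing the Cauchy--Schwarz argument underlying Theorem~\ref{thm:asymmetric-disj} directly for $\sigma$ is plausible in principle but is a separate proof, not a reduction, and you have not indicated how the unfolding would actually close.

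The paper's proof uses the same three tools but a different and sharper coupling that avoids column permutations entirely. The key observation you are missing is this: condition $\sigma_{n,k}$ on the set $S$ of rows that equal the \emph{specific} vector $1^{k-1}0$. Then the mixture $\tfrac{1}{2}\sigma_{n,k}^{1}|_{S}+\tfrac{1}{2}\Exp_{S'\supset S,\,|S'|=|S|+1}\sigma_{n,k}^{0}|_{S'}$ is \emph{exactly} the distribution that fixes the rows of $S$ to $1^{k-1}0$ and puts $\mu_{n-|S|,k}$ on the remaining rows---no permutation, no support mismatch. This reduces straight to Theorem~\ref{thm:asymmetric-disj}. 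One then packages these conditionals into an approximant $\tilde{\sigma}_{n,k}$ by drawing $|S|\sim B(n-1,p)$ with $p=1/(2^{k}-1)$, bounds $\disc$ under $\bigotimes\tilde{\sigma}_{n_i,k}$ via convexity and Fact~\ref{fact:binomial-expectations}, and finally controls $\|\sigma_{n,k}-\tilde{\sigma}_{n,k}\|_{1}\le\tfrac{1}{2}\sqrt{(1-p)/(pn)}$ to invoke continuity. The announced constant $\tfrac{1}{2}(\sqrt{2^{k}-1}+1)\sqrt{2^{k}-2}$ is precisely the sum of the two per-factor contributions $\tfrac{\sqrt{1-p}}{2p}+\tfrac{\sqrt{1-p}}{2\sqrt{p}}$ arising from these two pieces, not from a Cauchy--Schwarz step on the designated row as you conjectured.
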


\begin{proof}
Instead of analyzing the discrepancy with respect to the tensor product
$\bigotimes\sigma_{n_{i},k},$ we will define a different family of
probability distributions $\tilde{\sigma}_{n_{i},k}$ and bound the
discrepancy with respect to $\bigotimes\tilde{\sigma}_{n_{i},k}$
using Theorem~\ref{thm:asymmetric-disj}. We will then prove that
$\sigma_{n_{i},k}$ and $\tilde{\sigma}_{n_{i},k}$ are close in statistical
distance for each $i$, and appeal to Proposition~\ref{prop:disc-continuous}
to complete the proof.

Abbreviate $p=1/(2^{k}-1).$ For a given set $S\subseteq\{1,2,\dots,n\}$
and a probability distribution $\sigma$ on matrices $X\in\zoo^{n\times k},$
let $\sigma|_{S}$ stand for the probability distribution induced
by $\sigma$ after conditioning on the event that $(X_{i,1},X_{i,2},\dots,X_{i,k})=(1,1,\dots,1,0)$
if and only if $i\in S.$ Observe that for a fixed set $S\subsetneq\{1,2,\dots,n\},$
the convex combination
\begin{equation}
\frac{\sigma_{n,k}^{1}|_{S}}{2}+\Exp_{\substack{S'\supset S\\
|S'|=|S|+1
}
}\frac{\sigma_{n,k}^{0}|_{S'}}{2}\label{eq:sigmas-vs-mu}
\end{equation}
is a probability distribution on matrices $X\in\zoo^{n\times k}$
whereby 
\begin{align*}
X|_{S} & =\begin{bmatrix}1 & 1 & \cdots & 1 & 0\\
1 & 1 & \cdots & 1 & 0\\
\vdots & \vdots & \ddots & \vdots & \vdots\\
1 & 1 & \cdots & 1 & 0
\end{bmatrix}\intertext{and}\\
X|_{\overline{S}} & \sim\mu_{n-|S|,k}.
\end{align*}
In other words, the rows with indices in $S$ are fixed to non-$1^{k}$
values and can therefore be disregarded from the point of view of
set disjointness, whereas the remaining rows have joint probability
distribution $\mu_{n-|S|,k}$. It follows that for any sets $S_{1},S_{2},\dots,S_{m}$
with $S_{i}\subsetneq\{1,2,\dots,n_{i}\},$
\begin{align*}
\disc & \left(\bigoplus_{i=1}^{m}\DISJ_{n_{i},k},\bigotimes_{i=1}^{m}\left(\frac{\sigma_{n_{i},k}^{1}|_{S_{i}}}{2}+\Exp_{\substack{S_{i}'\supset S_{i}\\
|S_{i}'|=|S_{i}|+1
}
}\frac{\sigma_{n_{i},k}^{0}|_{S_{i}'}}{2}\right)\right)\\
 & \qquad\qquad\qquad\qquad\qquad\qquad=\disc\left(\bigoplus_{i=1}^{m}\DISJ_{n_{i}-|S_{i}|,k},\bigotimes_{i=1}^{m}\mu_{n_{i}-|S_{i}|,k}\right)\\
 & \qquad\qquad\qquad\qquad\qquad\qquad\leq\prod_{i=1}^{m}\frac{2^{k-1}-1}{\sqrt{n_{i}-|S_{i}|}}\\
 & \qquad\qquad\qquad\qquad\qquad\qquad=\prod_{i=1}^{m}\frac{1-p}{2p\sqrt{n_{i}-|S_{i}|}},
\end{align*}
where the second step uses Theorem~\ref{thm:asymmetric-disj}. Proposition~\ref{prop:disc-convex}
now implies that for any integers $s_{1},s_{2},\dots,s_{m}$ with
$0\leq s_{i}<n_{i},$ 
\begin{align}
\disc & \left(\bigoplus_{i=1}^{m}\DISJ_{n_{i},k},\bigotimes_{i=1}^{m}\left(\Exp_{|S_{i}|=s_{i}}\frac{\sigma_{n_{i},k}^{1}|_{S_{i}}}{2}+\Exp_{|S_{i}|=s_{i}+1}\frac{\sigma_{n_{i},k}^{0}|_{S_{i}}}{2}\right)\right)\nonumber \\
 & \qquad\qquad\qquad\qquad\qquad\qquad\leq\prod_{i=1}^{m}\frac{1-p}{2p\sqrt{n_{i}-s_{i}}}.\label{eq:disc-mu-tilde-elementary}
\end{align}

We now define an approximation to the probability distribution $\sigma_{n,k}$,
namely, 
\begin{equation}
\tilde{\sigma}_{n,k}=\Exp_{s\sim B(n-1,p)}\left[\Exp_{|S|=s}\frac{\sigma_{n,k}^{1}|_{S}}{2}+\Exp_{|S|=s+1}\frac{\sigma_{n,k}^{0}|_{S}}{2}\right].\label{eq:sigma-tilde-n-k-definition}
\end{equation}
For random integers $s_{1},s_{2},\dots,s_{m}$ distributed independently
according to $s_{i}\sim B(n_{i}-1,p),$ 
\begin{align*}
\!\!\!\!\!\!\disc & \left(\bigoplus_{i=1}^{m}\DISJ_{n_{i},k},\bigotimes_{i=1}^{m}\tilde{\sigma}_{n_{i},k}\right)\\
 & \leq\Exp_{s_{1},\dots,s_{m}}\;\disc\left(\bigoplus_{i=1}^{m}\DISJ_{n_{i},k},\bigotimes_{i=1}^{m}\left(\Exp_{|S_{i}|=s_{i}}\frac{\sigma_{n_{i},k}^{1}|_{S_{i}}}{2}+\Exp_{|S_{i}|=s_{i}+1}\frac{\sigma_{n_{i},k}^{0}|_{S_{i}}}{2}\right)\right)\\
 & \tag*{{\text{by Proposition\,\ref{prop:disc-convex}}}}\\
 & \leq\Exp_{s_{1},\dots,s_{m}}\;\prod_{i=1}^{m}\frac{1-p}{2p\sqrt{n_{i}-s_{i}}}\tag*{{\text{by \eqref{eq:disc-mu-tilde-elementary}}}}\\
 & =\prod_{i=1}^{m}\Exp_{s_{i}}\;\frac{1-p}{2p\sqrt{n_{i}-s_{i}}}\tag*{{\text{by independence}}}\\
 & \leq\prod_{i=1}^{m}\frac{\sqrt{1-p}}{2p\sqrt{n_{i}}}\tag*{{\text{by Fact\,\ref{fact:binomial-expectations}.}}}
\end{align*}
Of course, this calculation shows more generally that
\begin{equation}
\disc\left(\bigoplus_{i\in A}\DISJ_{n_{i},k},\bigotimes_{i\in A}\tilde{\sigma}_{n_{i},k}\right)\leq\prod_{i\in A}\frac{\sqrt{1-p}}{2p\sqrt{n_{i}}}\label{eq:disc-under-tilde-sigma}
\end{equation}
for any set $A\subseteq\{1,2,\dots,m\}.$ This completes the first
part of the program.

We proceed to bound the statistical distance between $\sigma_{n,k}$
and $\tilde{\sigma}_{n,k}.$ To start with,
\begin{align*}
\sigma_{n,k}^{0} & =\sum_{s=0}^{n}\binom{n}{s}p^{s}(1-p)^{n-s}\Exp_{|S|=s}\sigma_{n,k}^{0}|_{S},\\
\sigma_{n,k}^{1} & =\sum_{s=0}^{n-1}\binom{n-1}{s}p^{s}(1-p)^{n-1-s}\Exp_{|S|=s}\sigma_{n,k}^{1}|_{S},\\
\tilde{\sigma}_{n,k} & =\frac{1}{2}\sum_{s=1}^{n}\binom{n-1}{s-1}p^{s-1}(1-p)^{n-s}\Exp_{|S|=s}\sigma_{n,k}^{0}|_{S}\\
 & \qquad\qquad+\frac{1}{2}\sum_{s=0}^{n-1}\binom{n-1}{s}p^{s}(1-p)^{n-1-s}\Exp_{|S|=s}\sigma_{n,k}^{1}|_{S},
\end{align*}
where the first two equations hold by definition, and the third is
a restatement of (\ref{eq:sigma-tilde-n-k-definition}). Then
\begin{align}
\|\sigma_{n,k}-\tilde{\sigma}_{n,k}\|_{1} & =\left\Vert \frac{\sigma_{n,k}^{0}+\sigma_{n,k}^{1}}{2}-\tilde{\sigma}_{n,k}\right\Vert _{1}\nonumber \\
 & =\frac{1}{2}\left\Vert \sum_{s=0}^{n}\binom{n}{s}p^{s}(1-p)^{n-s}\Exp_{|S|=s}\sigma_{n,k}^{0}|_{S}\right.\nonumber \\
 & \qquad\qquad\left.-\sum_{s=1}^{n}\binom{n-1}{s-1}p^{s-1}(1-p)^{n-s}\Exp_{|S|=s}\sigma_{n,k}^{0}|_{S}\right\Vert _{1}\nonumber \\
 & =\frac{1}{2}\sum_{s=1}^{n}\left|\binom{n}{s}p^{s}(1-p)^{n-s}-\binom{n-1}{s-1}p^{s-1}(1-p)^{n-s}\right|\nonumber \\
 & \qquad\qquad+\frac{(1-p)^{n}}{2}\nonumber \\
 & =\frac{1}{2pn}\sum_{s=0}^{n}\binom{n}{s}p^{s}(1-p)^{n-s}|s-pn|\nonumber \\
 & =\frac{1}{2pn}\Exp_{s\sim B(n,p)}\;|s-pn|\nonumber \\
 & \leq\frac{1}{2}\sqrt{\frac{1-p}{pn}},\label{eq:sigma-sigma-tilde-are-close}
\end{align}
where the first and last steps use (\ref{eq:sigma-n-k-definition})
and Fact~\ref{fact:binomial-expectations}, respectively. This completes
the second part of the proof program.

It remains to put together the above ingredients and appeal to the
continuity of discrepancy: 
\begin{align*}
\disc & \left(\bigoplus_{i=1}^{m}\DISJ_{n_{i},k},\bigotimes_{i=1}^{m}\sigma_{n_{i},k}\right)\\
 & \qquad\;\leq\sum_{A\subseteq\{1,2,\dots,m\}}\disc\left(\bigoplus_{i\in A}\DISJ_{n_{i},k},\bigotimes_{i\in A}\tilde{\sigma}_{n_{i},k}\right)\prod_{i\notin A}\|\sigma_{n_{i},k}-\tilde{\sigma}_{n_{i},k}\|_{1}\\
 & \qquad\;\leq\sum_{A\subseteq\{1,2,\dots,m\}}\;\prod_{i\in A}\frac{\sqrt{1-p}}{2p\sqrt{n_{i}}}\cdot\prod_{i\notin A}\frac{\sqrt{1-p}}{2\sqrt{pn_{i}}}\\
 & \qquad\;=\prod_{i=1}^{m}\left(\frac{\sqrt{1-p}}{2p\sqrt{n_{i}}}+\frac{\sqrt{1-p}}{2\sqrt{pn_{i}}}\right)\\
 & \qquad\;=\prod_{i=1}^{m}\frac{(\sqrt{2^{k}-1}+1)\sqrt{2^{k}-2}}{2\sqrt{n_{i}}},
\end{align*}
where the first inequality uses Proposition~\ref{prop:disc-continuous},
and the second inequality follows from the upper bounds in (\ref{eq:disc-under-tilde-sigma})
and (\ref{eq:sigma-sigma-tilde-are-close}).
\end{proof}
We now refine the previous theorem for cylinder intersections that
contain significantly fewer cylinders than there are players. Formally,
define $\sigma_{n,k,\ell}^{0}$ to be the probability distribution
on $n\times k$ Boolean matrices whereby the rows are chosen independently
and uniformly at random from the set $\{u\in\zoo^{k}:k-\ell\leq|u|\leq k-1\}.$
Define $\sigma_{n,k,\ell}^{1}$ to be the probability distribution
on $n\times k$ Boolean matrices whereby a randomly chosen row is
set to $1^{k}$ and the remaining rows are chosen independently and
uniformly at random from the set $\{u\in\zoo^{k}:k-\ell\leq|u|\leq k-1\}.$
We will analyze the $\ell$-discrepancy of set disjointness with respect
to the probability distribution 
\begin{equation}
\sigma_{n,k,\ell}=\frac{1}{2}\sigma_{n,k,\ell}^{0}+\frac{1}{2}\sigma_{n,k,\ell}^{1}.\label{eq:sigma-n-k-l-definition}
\end{equation}
This setup is indeed a generalization of the case $\ell=k$ dealt
with above. Specifically,  
\begin{align*}
\sigma_{n,k,k}^{0} & =\sigma_{n,k}^{0},\\
\sigma_{n,k,k}^{1} & =\sigma_{n,k}^{1},\\
\sigma_{n,k,k} & =\sigma_{n,k}.
\end{align*}
\begin{thm}
\label{thm:symmetric-disj-few-parties}For any integers $n_{1},n_{2},\dots,n_{m}\geq1$
and $k\geq\ell\geq1,$ 
\begin{multline*}
\disc_{\ell}\left(\bigoplus_{i=1}^{m}\DISJ_{n_{i},k},\bigotimes_{i=1}^{m}\sigma_{n_{i},k,\ell}\right)\\
\leq(2^{\ell}-1)^{m/2}\left(\binom{k}{1}+\binom{k}{2}+\cdots+\binom{k}{\ell}\right)^{m/2}\frac{1}{\sqrt{n_{1}n_{2}\cdots n_{m}}}.
\end{multline*}
\end{thm}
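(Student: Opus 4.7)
The plan is to extend the head-tail decomposition idea to reduce the $k$-party, $\ell$-discrepancy problem to the $\ell$-party, full-discrepancy problem addressed by Theorem~\ref{thm:symmetric-disj}. First, since $\DISJ_{n,k}$ and $\sigma_{n,k,\ell}$ are invariant under column permutations (which correspond to permutations of players), I would reduce $\disc_\ell$ to $\disc_{\{1,2,\dots,\ell\}}$. For each instance $i$, decompose the input matrix as $X^{(i)}=(H^{(i)},T^{(i)})$, where $H^{(i)}\in\zoo^{n_i\times\ell}$ is the ``head'' (first $\ell$ columns) and $T^{(i)}\in\zoo^{n_i\times(k-\ell)}$ is the ``tail.'' Call a row \emph{tail-full} when $T^{(i)}_j=1^{k-\ell}$, and let $\tau_i$ be the set of tail-full rows of instance $i$. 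Two observations are crucial: (i) $\DISJ_{n_i,k}(X^{(i)})=\DISJ_{|\tau_i|,\ell}(H^{(i)}|_{\tau_i})$ because only tail-full rows can equal $1^k$; and (ii) after any fixing of the tails, a $\{1,\dots,\ell\}$-cylinder intersection on $X$ becomes a $\{1,\dots,\ell\}$-cylinder intersection on $H$ in the $\ell$-party model.

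Next, I would introduce an auxiliary distribution $\tilde{\sigma}_{n,k,\ell}$ tailored so that conditional on the tails, the joint distribution of tail-full heads is exactly $\sigma_{|\tau|,\ell}$. Mirroring~(\ref{eq:sigma-tilde-n-k-definition}), $\tilde{\sigma}$ first samples $s\sim B(n-1,q)$ with
\[
q=\frac{2^{\ell}-1}{\binom{k}{1}+\binom{k}{2}+\cdots+\binom{k}{\ell}}
\]
(the probability that a uniformly sampled row from $\{u:k-\ell\le|u|\le k-1\}$ is tail-full), then selects a tail-full set of size $s$ or $s+1$ uniformly at random, assigns the tail-full heads jointly from $\sigma_{|\tau|,\ell}$, and fills the remaining rows with independent non-tail-full rows from the appropriate marginal. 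With this setup, conditioning on the tails and on the (DISJ-irrelevant) non-tail-full heads, Theorem~\ref{thm:symmetric-disj} applied with $\ell$ players and row counts $|\tau_i|$ yields a conditional discrepancy bounded by $\prod_{i}\frac{(\sqrt{2^{\ell}-1}+1)\sqrt{2^{\ell}-2}}{2\sqrt{|\tau_i|}}$. Taking expectation over the tails, using independence across instances, and bounding $\Exp\,1/\sqrt{|\tau_i|}\le 1/\sqrt{qn_i}$ via~(\ref{eq:exp-bin-s+1}), gives the desired $\tilde{\sigma}$-discrepancy bound, and the identical argument restricted to any subset $A\subseteq\{1,\dots,m\}$ supplies the sub-product bounds required by Proposition~\ref{prop:disc-continuous}.

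Finally, I would compute $\|\sigma_{n,k,\ell}-\tilde{\sigma}_{n,k,\ell}\|_{1}$ via a calculation parallel to~(\ref{eq:sigma-sigma-tilde-are-close}), obtaining a bound of order $\sqrt{(1-q)/(qn)}$ from~(\ref{eq:exp-bin-L1}) since the two distributions differ only through the marginal of $|\tau|$ ($B(n,q)$ versus $1+B(n-1,q)$). Applying Proposition~\ref{prop:disc-continuous} then converts the $\tilde{\sigma}$-bound into the claimed $\sigma$-bound, with the $1/\sqrt{q}$ factor producing the $\sqrt{\sum_j\binom{k}{j}/(2^{\ell}-1)}$ inflation that, combined with the $(\sqrt{2^{\ell}-1}+1)\sqrt{2^{\ell}-2}/2$ factor from Theorem~\ref{thm:symmetric-disj}, fits comfortably inside $\sqrt{(2^{\ell}-1)\sum_j\binom{k}{j}}$. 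The main obstacle I anticipate is in Paragraph~2: one must define $\tilde{\sigma}$ carefully enough that the conditional distribution of tail-full heads is \emph{exactly} $\sigma_{|\tau|,\ell}$ (so that Theorem~\ref{thm:symmetric-disj} applies as a black box), while the marginal of $|\tau|$ remains close enough to that under $\sigma$ to keep $\|\sigma-\tilde{\sigma}\|_1$ negligible relative to the target bound.
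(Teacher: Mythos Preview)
Your proposal is correct and follows essentially the same approach as the paper: reduce to $\disc_{\{1,\dots,\ell\}}$ by symmetry, condition on the set of tail-full rows so that Theorem~\ref{thm:symmetric-disj} applies to the $\ell$-column heads, define $\tilde{\sigma}_{n,k,\ell}$ via $s\sim B(n-1,q)$ with the same $q$, bound $\|\sigma_{n,k,\ell}-\tilde{\sigma}_{n,k,\ell}\|_{1}$ as in~(\ref{eq:sigma-sigma-tilde-are-close}), and finish with Proposition~\ref{prop:disc-continuous}. The only cosmetic discrepancy is that the paper takes $|S|=s+1$ for both $\sigma^{0}$ and $\sigma^{1}$ in~(\ref{eq:sigma-tilde-n-k-definition-1}) rather than the ``$s$ or $s+1$'' split you mention from~(\ref{eq:sigma-tilde-n-k-definition}); this does not affect the argument.
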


\noindent In the setting $\ell\ll k$ of interest to us, the new bound
is considerably stronger than the bound of Theorem~\ref{thm:symmetric-disj}.
The proof is essentially a reprise of Theorem~\ref{thm:symmetric-disj}.
Indeed, we could have combined the two theorems for a more economical
presentation. Treating them separately, as we do in this paper, has
the advantage of simplifying the notation and illustrating the proof
idea in a simpler setting first. 
\begin{proof}[Proof of Theorem~\emph{\ref{thm:symmetric-disj-few-parties}}]
We will closely follow the proof of the previous theorem. Specifically,
instead of analyzing the discrepancy with respect to the tensor product
$\bigotimes\sigma_{n_{i},k,\ell},$ we will define a different family
of probability distributions $\tilde{\sigma}_{n_{i},k,\ell}$ and
bound the discrepancy with respect to $\bigotimes\tilde{\sigma}_{n_{i},k,\ell}$.
We will then prove that $\sigma_{n_{i},k,\ell}$ and $\tilde{\sigma}_{n_{i},k,\ell}$
are close in statistical distance for each $i$, and appeal to Proposition~\ref{prop:disc-continuous}
to complete the proof.

Since the communication problem $\DISJ_{n,k}$ and the probability
distribution $\sigma_{n,k,\ell}$ are both symmetric with respect
to the $k$ players, we have
\begin{multline}
\disc_{\ell}\left(\bigoplus_{i=1}^{m}\DISJ_{n_{i},k},\bigotimes_{i=1}^{m}\sigma_{n_{i},k,\ell}\right)\\
=\disc_{\{1,2,\dots,\ell\}}\left(\bigoplus_{i=1}^{m}\DISJ_{n_{i},k},\bigotimes_{i=1}^{m}\sigma_{n_{i},k,\ell}\right).\label{eq:maximizing-cylinder-intersection}
\end{multline}
For a given set $S\subseteq\{1,2,\dots,n\}$ and a probability distribution
$\sigma$ on matrices $X\in\zoo^{n\times k},$ let $\sigma|_{S}$
stand for the probability distribution induced by $\sigma$ after
conditioning on the event that $(X_{i,\ell+1},X_{i,\ell+2},\dots,X_{i,k})=(1,1,\dots,1)$
if and only if $i\in S.$ Observe that for a fixed nonempty set $S\subseteq\{1,2,\dots,n\},$
the convex combination
\begin{equation}
\frac{\sigma_{n,k,\ell}^{1}|_{S}}{2}+\frac{\sigma_{n,k,\ell}^{0}|_{S}}{2}\label{eq:sigmas-vs-mu-1}
\end{equation}
is the probability distribution on matrices $X\in\zoo^{n\times k}$
whereby $X|_{S}$ and $X|_{\overline{S}}$ are distributed independently
such that 
\begin{align}
 & X|_{S,\{1,2,\dots,\ell\}}\sim\sigma_{|S|,\ell},\label{eq:X-not-S-active-players}\\
 & X|_{S,\{\ell+1,\ell+2,\dots,k\}}=\begin{bmatrix}1 & 1 & \cdots & 1\\
1 & 1 & \cdots & 1\\
\vdots & \vdots & \ddots & \vdots\\
1 & 1 & \cdots & 1
\end{bmatrix},\label{eq:X-not-S-passive-players}
\end{align}
and $X|_{\overline{S}}$ does not have an all-ones row. In particular,
the rows in $X|_{\overline{S}}$ do not affect the value of the function.
Since the conditional probability distribution of $X|_{S}$ given
$X|_{\overline{S}}$ is always (\ref{eq:X-not-S-active-players})\textendash (\ref{eq:X-not-S-passive-players}),
we arrive at the following conclusion:  for any nonempty sets $S_{1},S_{2},\dots,S_{m}$
with $S_{i}\subseteq\{1,2,\dots,n_{i}\},$
\begin{align*}
\disc_{\{1,2,\dots,\ell\}} & \left(\bigoplus_{i=1}^{m}\DISJ_{n_{i},k},\bigotimes_{i=1}^{m}\left(\frac{\sigma_{n_{i},k,\ell}^{1}|_{S_{i}}}{2}+\frac{\sigma_{n_{i},k,\ell}^{0}|_{S_{i}}}{2}\right)\right)\\
 & \qquad\qquad\qquad\qquad\qquad\leq\disc\left(\bigoplus_{i=1}^{m}\DISJ_{|S_{i}|,\ell},\bigotimes_{i=1}^{m}\sigma_{|S_{i}|,\ell}\right)\\
 & \qquad\qquad\qquad\qquad\qquad\leq\prod_{i=1}^{m}\frac{(\sqrt{2^{\ell}-1}+1)\sqrt{2^{\ell}-2}}{2\sqrt{|S_{i}|}},
\end{align*}
where the second step holds by Theorem~\ref{thm:symmetric-disj}.
Proposition~\ref{prop:disc-convex} now implies that for any integers
$s_{1},s_{2},\dots,s_{m}$ with $1\leq s_{i}\leq n_{i},$ 
\begin{align}
\disc & _{\{1,2,\dots,\ell\}}\left(\bigoplus_{i=1}^{m}\DISJ_{n_{i},k},\bigotimes_{i=1}^{m}\left(\Exp_{|S_{i}|=s_{i}}\frac{\sigma_{n_{i},k,\ell}^{1}|_{S_{i}}}{2}+\Exp_{|S_{i}|=s_{i}}\frac{\sigma_{n_{i},k,\ell}^{0}|_{S_{i}}}{2}\right)\right)\nonumber \\
 & \qquad\qquad\qquad\qquad\qquad\qquad\leq\prod_{i=1}^{m}\frac{(\sqrt{2^{\ell}-1}+1)\sqrt{2^{\ell}-2}}{2\sqrt{s_{i}}}.\label{eq:disc-mu-tilde-elementary-1}
\end{align}

We now define the promised approximation $\tilde{\sigma}_{n,k,\ell}$
to the probability distribution $\sigma_{n,k,\ell}$, namely, 
\begin{equation}
\tilde{\sigma}_{n,k,\ell}=\Exp_{s\sim B(n-1,q)}\;\Exp_{|S|=s+1}\frac{\sigma_{n,k,\ell}^{0}|_{S}+\sigma_{n,k,\ell}^{1}|_{S}}{2}\label{eq:sigma-tilde-n-k-definition-1}
\end{equation}
where 
\[
q=\frac{2^{\ell}-1}{\binom{k}{1}+\binom{k}{2}+\cdots+\binom{k}{\ell}}.
\]
Then for random integers $s_{1},s_{2},\dots,s_{m}$ distributed independently
according to $s_{i}\sim B(n_{i}-1,q),$ we have
\begin{align*}
\!\!\!\!\!\!\disc & _{\{1,2,\dots,\ell\}}\left(\bigoplus_{i=1}^{m}\DISJ_{n_{i},k},\bigotimes_{i=1}^{m}\tilde{\sigma}_{n_{i},k,\ell}\right)\\
 & \leq\Exp_{s_{1},\dots,s_{m}}\;\disc_{\{1,2,\dots,\ell\}}\left(\bigoplus_{i=1}^{m}\DISJ_{n_{i},k},\bigotimes_{i=1}^{m}\left(\Exp_{|S_{i}|=s_{i}+1}\frac{\sigma_{n_{i},k,\ell}^{0}|_{S_{i}}+\sigma_{n_{i},k,\ell}^{1}|_{S_{i}}}{2}\right)\right)\\
 & \tag*{{\text{by Proposition\,\ref{prop:disc-convex}}}}\\
 & \leq\Exp_{s_{1},\dots,s_{m}}\;\prod_{i=1}^{m}\frac{(\sqrt{2^{\ell}-1}+1)\sqrt{2^{\ell}-2}}{2\sqrt{s_{i}+1}}\tag*{{\text{by \eqref{eq:disc-mu-tilde-elementary-1}}}}\\
 & =\prod_{i=1}^{m}\Exp_{s_{i}}\;\frac{(\sqrt{2^{\ell}-1}+1)\sqrt{2^{\ell}-2}}{2\sqrt{s_{i}+1}}\tag*{{\text{by independence}}}\\
 & \leq\prod_{i=1}^{m}\frac{(\sqrt{2^{\ell}-1}+1)\sqrt{2^{\ell}-2}}{2\sqrt{qn_{i}}}\tag*{{\text{by Fact\,\ref{fact:binomial-expectations}.}}}
\end{align*}
Of course, this calculation shows more generally that
\begin{multline}
\disc_{\{1,2,\dots,\ell\}}\left(\bigoplus_{i\in A}\DISJ_{n_{i},k},\bigotimes_{i\in A}\tilde{\sigma}_{n_{i},k,\ell}\right)\\
\leq\prod_{i\in A}\frac{(\sqrt{2^{\ell}-1}+1)\sqrt{2^{\ell}-2}}{2\sqrt{qn_{i}}}\qquad\label{eq:disc-under-tilde-sigma-few-parties}
\end{multline}
for any set $A\subseteq\{1,2,\dots,m\}.$ This completes the first
part of the program.

In this second part of the proof, we bound the statistical distance
between $\sigma_{n,k,\ell}$ and $\tilde{\sigma}_{n,k,\ell}.$ We
have
\begin{align*}
\sigma_{n,k,\ell}^{0} & =\sum_{s=0}^{n}\binom{n}{s}q^{s}(1-q)^{n-s}\Exp_{|S|=s}\sigma_{n,k,\ell}^{0}|_{S},\\
\sigma_{n,k,\ell}^{1} & =\sum_{s=1}^{n}\binom{n-1}{s-1}q^{s-1}(1-q)^{n-s}\Exp_{|S|=s}\sigma_{n,k,\ell}^{1}|_{S},\\
\tilde{\sigma}_{n,k,\ell} & =\frac{1}{2}\sum_{s=1}^{n}\binom{n-1}{s-1}q^{s-1}(1-q)^{n-s}\Exp_{|S|=s}\sigma_{n,k,\ell}^{0}|_{S}\\
 & \qquad\qquad+\frac{1}{2}\sum_{s=1}^{n}\binom{n-1}{s-1}q^{s-1}(1-q)^{n-s}\Exp_{|S|=s}\sigma_{n,k,\ell}^{1}|_{S},
\end{align*}
where the first two equations hold by definition, and the third is
a restatement of (\ref{eq:sigma-tilde-n-k-definition-1}). Then
\begin{align}
\|\sigma_{n,k,\ell}-\tilde{\sigma}_{n,k,\ell}\|_{1} & =\left\Vert \frac{\sigma_{n,k,\ell}^{1}+\sigma_{n,k,\ell}^{0}}{2}-\tilde{\sigma}_{n,k,\ell}\right\Vert _{1}\nonumber \\
 & =\frac{1}{2}\left\Vert \sum_{s=0}^{n}\binom{n}{s}q^{s}(1-q)^{n-s}\Exp_{|S|=s}\sigma_{n,k,\ell}^{0}|_{S}\right.\nonumber \\
 & \qquad\qquad\left.-\sum_{s=1}^{n}\binom{n-1}{s-1}q^{s-1}(1-q)^{n-s}\Exp_{|S|=s}\sigma_{n,k,\ell}^{0}|_{S}\right\Vert _{1}\nonumber \\
 & =\frac{1}{2}\sum_{s=1}^{n}\left|\binom{n}{s}q^{s}(1-q)^{n-s}-\binom{n-1}{s-1}q^{s-1}(1-q)^{n-s}\right|\nonumber \\
 & \qquad\qquad+\frac{(1-q)^{n}}{2}\nonumber \\
 & =\frac{1}{2qn}\sum_{s=0}^{n}\binom{n}{s}q^{s}(1-q)^{n-s}|s-qn|\nonumber \\
 & =\frac{1}{2qn}\Exp_{s\sim B(n,q)}\;|s-qn|\nonumber \\
 & \leq\frac{1}{2}\sqrt{\frac{1-q}{qn}},\label{eq:sigma-sigma-tilde-are-close-few-parties}
\end{align}
where the first and last steps use (\ref{eq:sigma-n-k-definition})
and Fact~\ref{fact:binomial-expectations}, respectively. This completes
the second part of the proof program.

It remains to put together the above ingredients and appeal to the
continuity of discrepancy: 
\begin{align*}
\disc_{\ell} & \left(\bigoplus_{i=1}^{m}\DISJ_{n_{i},k},\bigotimes_{i=1}^{m}\sigma_{n_{i},k,\ell}\right)\\
 & \;\;\;\;=\disc{}_{\{1,2,\dots,\ell\}}\left(\bigoplus_{i=1}^{m}\DISJ_{n_{i},k},\bigotimes_{i=1}^{m}\sigma_{n_{i},k,\ell}\right)\\
 & \;\;\;\;\leq\sum_{A\subseteq\{1,2,\dots,m\}}\disc_{\{1,2,\dots,\ell\}}\left(\bigoplus_{i\in A}\DISJ_{n_{i},k},\bigotimes_{i\in A}\tilde{\sigma}_{n_{i},k,\ell}\right)\\
 & \qquad\qquad\qquad\qquad\qquad\qquad\qquad\qquad\qquad\quad\times\prod_{i\notin A}\|\sigma_{n_{i},k,\ell}-\tilde{\sigma}_{n_{i},k,\ell}\|_{1}\\
 & \;\;\;\;\leq\sum_{A\subseteq\{1,2,\dots,m\}}\;\prod_{i\in A}\frac{(\sqrt{2^{\ell}-1}+1)\sqrt{2^{\ell}-2}}{2\sqrt{qn_{i}}}\cdot\prod_{i\notin A}\frac{\sqrt{1-q}}{2\sqrt{qn_{i}}}\\
 & \;\;\;\;=\prod_{i=1}^{m}\frac{(\sqrt{2^{\ell}-1}+1)\sqrt{2^{\ell}-2}+\sqrt{1-q}}{2\sqrt{qn_{i}}}\\
 & \;\;\;\;\leq\prod_{i=1}^{m}\frac{(2^{\ell}-1)}{\sqrt{qn_{i}}},
\end{align*}
where the first step is valid by (\ref{eq:maximizing-cylinder-intersection}),
the next step uses Proposition~\ref{prop:disc-continuous}, and the
third step follows from the upper bounds in (\ref{eq:disc-under-tilde-sigma-few-parties})
and (\ref{eq:sigma-sigma-tilde-are-close-few-parties}).
\end{proof}

\subsection{Lower bound}

We are now in a position to prove the claimed lower bound on the communication
complexity of set disjointness. Following~\cite{sherstov12mdisj},
we find it helpful to work in the more general setting of composed
communication problems $f\circ G,$ where $f$ is any Boolean function
with high approximate degree and $G$ is an instance of set disjointness
on a small number of variables. This approach is motivated by the
composed structure (\ref{eq:disj-decompose})\textendash (\ref{eq:udisj-decompose})
of the set disjointness problem. 

The following communication lower bound was obtained in~\cite[Theorem~5.1]{sherstov12mdisj}.
\begin{thm}[Sherstov]
\label{thm:randomized-S12} Let $f$ be a $($possibly partial$)$
Boolean function on $\zoon.$ Consider the $k$-party communication
problem $F=f\circ\UDISJ_{r,k}.$ Then for $\epsilon,\delta\geq0,$
\begin{align*}
2^{R_{\epsilon}(F)}\geq(\delta-\epsilon)\left(\frac{\deg_{\delta}(f)\sqrt{r}}{2^{k}\e n}\right)^{\deg_{\delta}(f)}.
\end{align*}
\end{thm}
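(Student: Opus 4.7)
The plan is to invoke the generalized discrepancy method with a signed measure obtained by ``lifting'' a dual witness for the approximate degree of $f$. Set $d=\deg_{\delta}(f)$. Linear-programming duality for approximate degree furnishes a function $\psi\colon\zoon\to\Re$ with $\|\psi\|_{1}=1$, with $\langle\psi,f\rangle\geq\delta$ (interpreted with a correction term when $f$ is partial), and with $\hat{\psi}(S)=0$ for every $|S|<d$, so that $\psi$ has pure high degree~$d$.

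First I would lift $\psi$ to a signed measure on the input space of $F=f\circ\UDISJ_{r,k}$. Write $\tau^{0}=\sigma_{r,k}^{1}$ and $\tau^{1}=\sigma_{r,k}^{0}$ in the notation of Section~\ref{sec:discrepancy-of-DISJ}, so that each $\tau^{b}$ is a probability distribution supported on $\DISJ_{r,k}^{-1}(b)\cap\dom\UDISJ_{r,k}$. Define
\[
\Psi(X_{1},\dots,X_{n})=\psi\bigl(\UDISJ_{r,k}(X_{1}),\dots,\UDISJ_{r,k}(X_{n})\bigr)\cdot\prod_{i=1}^{n}\tau^{\UDISJ_{r,k}(X_{i})}(X_{i})
\]
on $(\dom\UDISJ_{r,k})^{n}$. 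Because $\supp\tau^{0}$ and $\supp\tau^{1}$ are disjoint, a direct calculation yields $\|\Psi\|_{1}=\|\psi\|_{1}=1$ and $\langle F,\Psi\rangle\geq\delta$.

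Next I would Fourier-expand $\Psi$ using $\tau^{b}=\tau+\tfrac{1}{2}(-1)^{b+1}\eta$, where $\tau=\tfrac{1}{2}(\tau^{0}+\tau^{1})=\sigma_{r,k}$ and $\eta=\tau^{1}-\tau^{0}$. Multiplying out and collecting terms produces
\[
\Psi=\sum_{S\subseteq\{1,\dots,n\}}(-1)^{|S|}\,2^{\,n-|S|}\,\hat{\psi}(S)\,\theta_{S},\qquad\theta_{S}(X)=\prod_{i\in S}\eta(X_{i})\prod_{i\notin S}\tau(X_{i}),
\]
and the pure-high-degree property of $\psi$ eliminates every term with $|S|<d$. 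Since $\eta=-2\,(-1)^{\DISJ_{r,k}}\sigma_{r,k}$, one rewrites $\theta_{S}(X)=(-2)^{|S|}(-1)^{\bigoplus_{i\in S}\DISJ_{r,k}(X_{i})}\prod_{i}\sigma_{r,k}(X_{i})$. For an arbitrary cylinder intersection $\chi$ and any fixing of $(X_{i})_{i\notin S}$, the restriction of $\chi$ to the remaining coordinates is still a cylinder intersection, so Theorem~\ref{thm:symmetric-disj} gives
\[
|\langle\chi,\theta_{S}\rangle|\leq 2^{|S|}\disc\!\left(\bigoplus_{i\in S}\DISJ_{r,k},\bigotimes_{i\in S}\sigma_{r,k}\right)\leq\left(\frac{2^{k+1}}{\sqrt{r}}\right)^{|S|}.
\]
Combining with $2^{n}|\hat{\psi}(S)|\leq\|\psi\|_{1}=1$ and summing,
\[
|\langle\chi,\Psi\rangle|\leq\sum_{s\geq d}\binom{n}{s}\left(\frac{2^{k}}{\sqrt{r}}\right)^{s}\leq\sum_{s\geq d}\left(\frac{\e\, n\, 2^{k}}{s\sqrt{r}}\right)^{s}\leq\left(\frac{\e\, n\, 2^{k}}{d\sqrt{r}}\right)^{d},
\]
where the last step uses that $(c/s)^{s}$ is decreasing in $s$ once $c/s<1$ (the only regime in which the bound is nontrivial).

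Finally, the generalized discrepancy method --- obtained by writing any $R_\epsilon(F)$-cost protocol as a linear combination of cylinder intersections via Corollary~\ref{cor:cylinders-function} and pairing this approximant with $\Psi$ --- yields
\[
2^{R_{\epsilon}(F)}\geq\frac{\langle F,\Psi\rangle-\epsilon\|\Psi\|_{1}}{\max_{\chi}|\langle\chi,\Psi\rangle|}\geq(\delta-\epsilon)\left(\frac{d\sqrt{r}}{2^{k}\,\e\, n}\right)^{d},
\]
which is the claimed bound. The main obstacle is the careful handling of the partial function $f$: the dual witness $\psi$ may place mass on $z\notin\dom f$, which after lifting lies outside $\dom F$, and one must include the standard correction (either by restricting $\Psi$ or by adding a $-\sum_{z\notin\dom f}|\psi(z)|$ term to the duality inequality). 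Bookkeeping of the constants in the Fourier expansion and in the tail sum $\sum_{s\geq d}\binom{n}{s}(2^{k}/\sqrt{r})^{s}$ must also be tight enough to recover the clean exponent base $d\sqrt{r}/(2^{k}\e n)$ in the statement.
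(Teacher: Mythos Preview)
Your argument is correct and is the LP-dual of the proof the paper gives (for Theorem~\ref{thm:randomized-VP}, which it states is ``virtually identical'' to the proof of Theorem~\ref{thm:randomized-S12} in~\cite{sherstov12mdisj}). The paper works on the \emph{primal} side: it defines an averaging operator $L$ that sends a function on $(\zoo^{r\times k})^{n}$ to a function on $\zoon$, shows that the Fourier coefficients $\widehat{L\chi}(S)$ of a cylinder intersection decay like $(2^{k}/\sqrt{r})^{|S|}$ via the discrepancy bound, and then argues that any cost-$c$ protocol $\Pi$ for $F$ yields, after applying $L$ and truncating the Fourier expansion, a degree-$(d-1)$ approximant to $f$. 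You instead start from a dual witness $\psi$ for $\deg_{\delta}(f)$, lift it to $\Psi$, and bound $\langle\chi,\Psi\rangle$ directly. The two arguments are equivalent via the identity $\langle\chi,\Psi\rangle=\langle L\chi,\psi\rangle=\sum_{S}2^{n}\hat{\psi}(S)\,\widehat{L\chi}(S)$, and both rest on the same discrepancy input (Theorem~\ref{thm:symmetric-disj} here, Theorem~\ref{thm:asymmetric-disj} in the original). Your version has the advantage of making the generalized-discrepancy structure explicit; the paper's version avoids invoking LP duality and works entirely with the protocol. One minor point: your final tail estimate $\sum_{s\geq d}(c/s)^{s}\leq(c/d)^{d}$ loses a geometric-series factor; the paper absorbs this by taking $\min\{1,\cdot\}$ and picking up a factor of~$2$ in the base (hence the constant $2\binom{k}{\mathord\leq R}$ in Theorem~\ref{thm:randomized-VP}), and the original~\cite{sherstov12mdisj} recovers the sharper constant $2^{k}$ by lifting via the asymmetric distribution $\mu_{r,k}$ of Theorem~\ref{thm:asymmetric-disj} rather than $\sigma_{r,k}$.
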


\noindent Using the new discrepancy upper bound in this paper, we
are able to obtain the following improvement: 
\begin{thm}
\label{thm:randomized-VP} Let $f$ be a $($possibly partial$)$
Boolean function on $\zoon.$ Consider the $k$-party communication
problem $F=f\circ\UDISJ_{r,k}.$ Then for $\epsilon,\delta\geq0,$
\begin{align*}
2^{R_{\epsilon}(F)}\geq(\delta-\epsilon)\left(\frac{\deg_{\delta}(f)\sqrt{r}}{2\binom{k}{\leq R_{\epsilon}(F)}\e n}\right)^{\deg_{\delta}(f)}.
\end{align*}
\end{thm}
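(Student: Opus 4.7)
The plan is to reprise the proof of the earlier Theorem~\ref{thm:randomized-S12}, substituting the sharper $\ell$-discrepancy bound of Theorem~\ref{thm:symmetric-disj-few-parties} at the one place where that proof invoked the cruder full-discrepancy bound of Theorem~\ref{thm:symmetric-disj}. Abbreviate $R=R_\epsilon(F)$, $d=\deg_\delta(f)$, and $\ell=\min\{R,k\}$. By Corollary~\ref{cor:cylinders-function}, there is a representation $\Pi=\sum_\chi a_\chi\chi$ approximating $F$ to error $\epsilon$ on $\dom F$ and bounded by $1$ off the domain, in which each $\chi$ is an $\ell$-cylinder intersection and $\sum_\chi|a_\chi|\le 2^R$. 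The critical observation driving the improvement is precisely that the $\chi$'s are $\ell$-cylinder (rather than merely $k$-cylinder) intersections, since any protocol of cost $R$ can involve at most $R$ speaking players in a given branch.

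Next, invoke the standard LP-dual characterization of $\deg_\delta(f)=d$ to obtain a ``dual witness'' $\phi\colon\zoon\to\Re$ with $\|\phi\|_1=1$, $\hat\phi(S)=0$ for all $|S|<d$, and sufficient correlation with $f$ (including the appropriate penalty on $\zoon\setminus\dom f$ since $f$ may be partial). Following the block-composition construction in Sherstov's proof of Theorem~\ref{thm:randomized-S12}, lift $\phi$ to a signed measure $\Phi$ on $(\zoo^{r\times k})^n$ using the probability distribution $\sigma_{r,k,\ell}$ on $\UDISJ_{r,k}$-inputs, designed so that $\langle(-1)^F,\Phi\rangle\ge\delta$ while $\|\Phi\|_1\le 1$.

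Now combine both sides. On one hand, since $\Pi$ approximates $F$ to error $\epsilon$ on $\dom F$ and $|\Pi|\le1$ outside, $\langle(-1)^\Pi,\Phi\rangle\ge\delta-\epsilon$. On the other hand, expand $\phi=\sum_{|S|\ge d}\hat\phi(S)\chi_S$ and use the tensor structure of $\sigma_{r,k,\ell}^{\otimes n}$ to write each $\langle\chi,\Phi\rangle$ as a sum, over subsets $S\subseteq\{1,\dots,n\}$ of size at least $d$, of terms of the form
\[
\hat\phi(S)\cdot\disc_{\ell}\!\left(\bigoplus_{i\in S}\DISJ_{r,k},\bigotimes_{i\in S}\sigma_{r,k,\ell}\right).
\]
Theorem~\ref{thm:symmetric-disj-few-parties} bounds each such discrepancy by $((2^\ell-1)\binom{k}{\le\ell})^{|S|/2}r^{-|S|/2}$, and the trivial inequality $2^\ell\le 2\binom{k}{\le\ell}$ (valid for $\ell\le k$) gives a per-coordinate rate at most $\sqrt{2}\,\binom{k}{\le\ell}/\sqrt{r}$. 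Accounting for at most $\binom{n}{\le d}\le(\e n/d)^d$ subsets at the minimum size (with larger $|S|$ only helping), combining with $\sum_\chi|a_\chi|\le 2^R$ and the standard Fourier $L_1$ bound on $\phi$, produces
\[
\delta-\epsilon\le 2^R\left(\frac{\e n\cdot\sqrt{2}\,\binom{k}{\le\ell}}{d\sqrt{r}}\right)^{d},
\]
which, after absorbing the $\sqrt2$ into the constant $2$ in the stated denominator and using $\binom{k}{\le\ell}\le\binom{k}{\le R}$, rearranges to the claimed inequality.

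The main obstacle is the second step: faithfully executing the block-composition construction of $\Phi$ for a \emph{partial} $f$ so that (i) $\langle(-1)^F,\Phi\rangle\ge\delta-\epsilon$ holds (the ``outside-domain'' penalty in the dual of partial approximate degree has to cancel against the contribution of matrices $X_i\in\zoo^{r\times k}$ outside $\dom\UAND_r\circ\UDISJ_{r,k}$), and (ii) $\Phi$ retains the tensor-product form required to reduce $|\langle\chi,\Phi\rangle|$ coordinatewise to Theorem~\ref{thm:symmetric-disj-few-parties}. Once $\Phi$ is in hand, the remainder of the argument is routine bookkeeping, parallel to the proof of Theorem~\ref{thm:randomized-S12} but with the single quantitative substitution described above.
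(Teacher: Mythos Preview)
Your overall strategy is sound and would yield the theorem, but it takes a route genuinely different from the paper's (and from the proof of Theorem~\ref{thm:randomized-S12} that you say you are following). The paper argues \emph{primally}: it defines an averaging operator
\[
(L\chi)(z)=\Exp_{X_1\sim\sigma_{r,k,\ell}^{z_1}}\cdots\Exp_{X_n\sim\sigma_{r,k,\ell}^{z_n}}\chi(X_1,\dots,X_n),
\]
bounds the Fourier coefficients $|\widehat{L\chi}(S)|$ for $\ell$-cylinder intersections $\chi$ directly by Theorem~\ref{thm:symmetric-disj-few-parties}, and then observes that truncating the Fourier expansion of $L\Pi$ below level $d$ produces a polynomial of degree $d-1$ approximating $f$, forcing $E(f,d-1)\le\epsilon+2^{R}(2\e n\binom{k}{\le\ell}/(d\sqrt r))^{d}$. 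You instead argue \emph{dually}, via an LP witness $\phi$ lifted to a signed measure $\Phi$. The two routes are LP duals of one another and give the same bound, but the primal one completely sidesteps the ``main obstacle'' you flag: because $\sigma_{r,k,\ell}^{0}$ and $\sigma_{r,k,\ell}^{1}$ are supported inside $\dom\UDISJ_{r,k}$, the conditions $|L\Pi|\le1$ everywhere and $|f-L\Pi|\le\epsilon$ on $\dom f$ are automatic, so no separate handling of partial $f$ is needed.

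A few points in your sketch need tightening. The expression $\langle(-1)^{\Pi},\Phi\rangle$ is ill-formed since $\Pi$ is real-valued; you mean to compare $\langle\Pi,\Phi\rangle$ with $\langle(-1)^{F},\Phi\rangle$. The phrase ``larger $|S|$ only helping'' hides the step controlling $\sum_{i\ge d}\binom{n}{i}\gamma^{i}$; the paper does this via $\min\{1,\sum_{i\ge d}(\e n\gamma/d)^{i}\}\le(2\e n\gamma/d)^{d}$, and that is where the constant $2$ in the statement actually comes from---not from absorbing your $\sqrt{2}$ (in fact $2^{\ell}\le\binom{k}{\le\ell}$ already, so your per-coordinate rate can be taken as $\binom{k}{\le\ell}/\sqrt{r}$ without the $\sqrt{2}$). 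Finally, what you call the ``standard Fourier $L_{1}$ bound on $\phi$'' should be the pointwise bound $|\hat\phi(S)|\le 2^{-n}\|\phi\|_{1}=2^{-n}$; the Fourier $\ell_{1}$ norm of a dual witness is not controlled in general.
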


\noindent The lower bound of Theorem~\ref{thm:randomized-VP} is
always at least as strong as that of Theorem~\ref{thm:randomized-S12},
up to a small constant factor in the denominator. The improvement
becomes significant in the setting $k\gg R_{\epsilon}(F)$ of interest
to us, where the number of players far exceeds the communication requirements
of the problem. 
\begin{proof}[Proof of Theorem~\emph{\ref{thm:randomized-VP}}]
 The proof is virtually identical to that in \cite{sherstov12mdisj},
the only difference being the use of the improved discrepancy upper
bound in this paper (Theorem~\ref{thm:symmetric-disj-few-parties}).
The idea behind the proof is to show that any low-cost communication
protocol for $F$ can be converted into a low-degree polynomial approximating
$f$ in the infinity norm. Details follow.

Put $\ell=\min\{R_{\epsilon}(F),k\}$ and recall the probability distribution
\begin{equation}
\sigma_{r,k,\ell}=\frac{1}{2}\sigma_{r,k,\ell}^{0}+\frac{1}{2}\sigma_{r,k,\ell}^{1},\label{eq:sigma-r-k-l-restated}
\end{equation}
where $\sigma_{r,k,\ell}^{0}$ and $\sigma_{r,k,\ell}^{1}$ (both
defined in Section~\ref{sec:discrepancy-of-DISJ}) are probability
distributions on $\UDISJ_{r,k}^{-1}(0)$ and $\UDISJ_{r,k}^{-1}(1)$,
respectively. Consider the following averaging operator $L,$ which
linearly sends real functions $\chi$ on $(\zoo^{r\times k})^{n}$
to real functions on $\zoon$ according to 
\begin{align*}
(L\chi)(z)=\Exp_{X_{1}\sim\sigma_{r,k,\ell}^{z_{1}}}\cdots\Exp_{X_{n}\sim\sigma_{r,k,\ell}^{z_{n}}}\;\chi(X_{1},\dots,X_{n}) &  & (z\in\zoon).
\end{align*}
When $\chi$ is an $\ell$-cylinder intersection,
\begin{align}
|\widehat{L\chi}(S)| & =\left|\Exp_{z\in\zoon}\;\Exp_{X_{1}\sim\sigma_{r,k,\ell}^{z_{1}}}\cdots\Exp_{X_{n}\sim\sigma_{r,k,\ell}^{z_{n}}}\left[\chi(X_{1},\dots,X_{n})\prod_{i\in S}(-1)^{z_{i}}\right]\right|\nonumber \\
 & =\left|\Exp_{X_{1},\dots,X_{n}\sim\sigma_{r,k,\ell}}\left[\chi(X_{1},\dots,X_{n})\prod_{i\in S}(-1)^{\UDISJ_{r,k}(X_{i})}\right]\right|\nonumber \\
 & \leq\disc_{\ell}\left(\DISJ_{r,k}^{\oplus|S|},(\sigma_{r,k,\ell})^{\otimes|S|}\right)\nonumber \\
 & \leq\left(\frac{2^{\ell}-1}{r}\cdot\left(\binom{k}{1}+\binom{k}{2}+\cdots+\binom{k}{\ell}\right)\right)^{|S|/2}\nonumber \\
 & \leq\left(\frac{1}{\sqrt{r}}\cdot\binom{k}{\mathord\leq\ell}\right)^{|S|},\label{eqn:coeff-bound}
\end{align}
where the second and fourth steps use (\ref{eq:sigma-r-k-l-restated})
and Theorem~\ref{thm:symmetric-disj-few-parties}, respectively.

Now, fix a randomized communication protocol for $F$ with error $\epsilon$
and cost $R_{\epsilon}(F).$ Approximate $F$ as in Corollary~\ref{cor:cylinders-function}
by a linear combination of $\ell$-cylinder intersections $\Pi=\sum_{\chi}a_{\chi}\chi,$
where $\sum_{\chi}|a_{\chi}|\leq2^{R_{\epsilon}(F)}.$ We claim that
$L\Pi$ is approximable by a low-degree polynomial. Indeed, let $d$
be a positive integer to be chosen later. Discarding the Fourier coefficients
of $L\Pi$ of order $d$ and higher gives 
\begin{align}
E(L\Pi,d-1)%
 & \leq\min\left\{ 1,\;\sum_{\chi}|a_{\chi}|\sum_{|S|\geq d}|\widehat{L\chi}(S)|\right\} \nonumber \\
 & \leq\min\left\{ 1,\;2^{R_{\epsilon}(F)}\sum_{i=d}^{n}{n \choose i}\left(\frac{1}{\sqrt{r}}\cdot\binom{k}{\mathord\leq\ell}\right)^{i}\right\} \nonumber \\
 & \leq\min\left\{ 1,\;2^{R_{\epsilon}(F)}\sum_{i=d}^{n}\left(\frac{\e n}{d\sqrt{r}}\cdot\binom{k}{\mathord\leq\ell}\right)^{i}\right\} \nonumber \\
 & \leq2^{R_{\epsilon}(F)}\left(\frac{2\e n}{d\sqrt{r}}\cdot\binom{k}{\mathord\leq\ell}\right)^{d},\label{eqn:MP-poly}
\end{align}
where the second and third steps use~(\ref{eqn:coeff-bound}) and~(\ref{eq:binom-sum-bound}),
respectively. On the other hand, recall from Corollary~\ref{cor:cylinders-function}
that $\Pi$ approximates $F$ in the sense that $\|\Pi\|_{\infty}\leq1$
and $|F-\Pi|\leq\epsilon$ on the domain of $F.$ It follows that
$\|L\Pi\|_{\infty}\leq1$ and $|f-L\Pi|\leq\epsilon$ on the domain
of $f,$ whence 
\begin{align*}
E(f,d-1)\leq\epsilon+E(L\Pi,d-1).
\end{align*}
Substituting the estimate from (\ref{eqn:MP-poly}), 
\begin{align*}
E(f,d-1) & \leq\epsilon+2^{R_{\epsilon}(F)}\left(\frac{2\e n}{d\sqrt{r}}\cdot\binom{k}{\mathord\leq\ell}\right)^{d}.
\end{align*}
For $d=\deg_{\delta}(f),$ the left-hand side of this inequality by
definition exceeds $\delta,$ completing the proof. 
\end{proof}
We now specialize the previous theorem to set disjointness.
\begin{thm}
\label{thm:disj-cc-lower}Abbreviate $R=R_{1/4}(\UDISJ_{n,k}).$ Then
\begin{equation}
\binom{k}{\mathord\leq R}^{2}R^{4}\geq\Omega(n).\label{eq:disj-master-formula}
\end{equation}
In particular,
\begin{align}
R_{1/3}(\DISJ_{n,k}) & \geq R_{1/3}(\UDISJ)=\Omega\left(\frac{\log n}{\log\left\lceil 1+\frac{k}{\log n}\right\rceil }+1\right).\label{eq:disj-cc-PS}
\end{align}
\end{thm}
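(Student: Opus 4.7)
The plan is to follow the same general strategy as for Theorem~\ref{thm:gip-cc-lower}: apply the sharpened communication lower bound of Theorem~\ref{thm:randomized-VP} to a suitable composition, then unwind to obtain a bound on $R$ in terms of $\binom{k}{\leq R}$. The composition identity~\eqref{eq:udisj-decompose} expresses $\UDISJ_{mr,k}=\UAND_{m}\circ\UDISJ_{r,k}$; whenever $mr\leq n$, fixing the remaining $n-mr$ input rows to any non-all-ones value realizes this composition as a subproblem of $\UDISJ_{n,k}$, so $R_{1/4}(\UAND_{m}\circ\UDISJ_{r,k})\leq R$.

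First I would apply Theorem~\ref{thm:randomized-VP} with $f=\UAND_{m}$, $\epsilon=1/4$, and $\delta=1/3$. By Theorem~\ref{thm:nisan-szegedy} we have $\deg_{1/3}(\UAND_{m})\geq c\sqrt{m}$ for some absolute constant $c>0$, which together with the lower bound yields
\[
2^{R}\;\geq\;\frac{1}{12}\left(\frac{c\sqrt{r/m}}{2\e\binom{k}{\leq R}}\right)^{c\sqrt{m}}.
\]
Next I would choose $m$ and $r$ so that the base of the exponent is at least $2$: this amounts to $r\geq C\binom{k}{\leq R}^{2}\,m$ for a suitable absolute constant $C$. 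Subject to the budget $mr\leq n$, the largest such $m$ satisfies $m=\Theta\!\bigl(\sqrt{n}/\binom{k}{\leq R}\bigr)$, and for this choice the inequality simplifies to $R\geq c\sqrt{m}-O(1)$. Squaring and clearing denominators gives~\eqref{eq:disj-master-formula}.

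To pass from~\eqref{eq:disj-master-formula} to the explicit bound~\eqref{eq:disj-cc-PS}, I would substitute the estimate $\binom{k}{\leq R}\leq\e^{R}(k/R)^{R}$ from~\eqref{eq:binom-sum-bound}, obtaining $\e^{2R}(k/R)^{2R}R^{4}\geq\Omega(n)$, and then solve for $R$ by taking logarithms exactly as at the end of the proof of Theorem~\ref{thm:gip-cc-lower}. The first inequality in~\eqref{eq:disj-cc-PS} is automatic, since any protocol for $\DISJ_{n,k}$ computes $\UDISJ_{n,k}$ on its restricted domain, and the gap between error $1/4$ and error $1/3$ is absorbed by standard error amplification.

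The main technical care is in the parameter selection: the value of $m$ depends on $\binom{k}{\leq R}$ and hence on $R$ itself, so one must ensure that the chosen $m$ is a positive integer satisfying $mr\leq n$ for some $r\geq 1$ compatible with the constant $C$. Edge cases in which $\binom{k}{\leq R}$ is comparable to $\sqrt{n}$ (forcing $m=O(1)$) are harmless, because in that regime $R$ is already large enough that~\eqref{eq:disj-cc-PS} holds trivially.
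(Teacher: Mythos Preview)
Your proposal is correct and follows essentially the same approach as the paper. The only cosmetic difference is that the paper parametrizes with a single variable~$r$ and takes $m=\lfloor n/r\rfloor$, then sets $r=\lceil\tfrac{2}{c}\sqrt{n}\binom{k}{\leq R}\rceil$, whereas you keep $m$ and $r$ separate; the resulting inequality and the case split on whether $\binom{k}{\leq R}\gtrsim\sqrt{n}$ are identical. One small correction: in the final step you should use $\binom{k}{\leq R}\leq\e^{R}\lceil k/R\rceil^{R}$ rather than $(k/R)^{R}$, since the latter fails when $R>k$.
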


\begin{proof}
Observe that $R_{1/4}(\UDISJ_{n,k})$ is monotonically nondecreasing
in $n.$ Now for all $1\leq r\leq n,$ we have $R\geq R_{1/4}(\UAND_{\lfloor n/r\rfloor}\circ\UDISJ_{r,k})$
by (\ref{eq:udisj-decompose}) and therefore 
\[
2^{R}\geq\left(\frac{1}{3}-\frac{1}{4}\right)\left(\frac{\deg_{1/3}(\UAND_{\lfloor n/r\rfloor})\sqrt{r}}{2\e\lfloor n/r\rfloor\binom{k}{\leq R}}\right)^{\deg_{1/3}(\UAND_{\lfloor n/r\rfloor})}
\]
by Theorem~\ref{thm:randomized-VP}. This in turn simplifies to 
\begin{equation}
2^{R}\geq\frac{1}{12}\left(\frac{cr}{\sqrt{n}\binom{k}{\leq R}}\right)^{c\sqrt{n/r}}\label{eq:intermediate-master-disj}
\end{equation}
for some absolute constant $c>0,$ by Theorem~\ref{thm:nisan-szegedy}.
There are two cases to examine. If $\binom{k}{\leq R}\geq\frac{c}{2}\sqrt{n}$,
then (\ref{eq:disj-master-formula}) is trivially true. Otherwise,
letting $r=\lceil\frac{2}{c}\sqrt{n}\binom{k}{\leq R}\rceil$ in (\ref{eq:intermediate-master-disj})
forces (\ref{eq:disj-master-formula}).

It remains to explain how the newly obtained relation (\ref{eq:disj-master-formula})
implies the communication lower bounds in the theorem statement. By
(\ref{eq:binom-sum-bound}), 
\[
\binom{k}{\mathord\leq R}\leq\e^{R}\left\lceil \frac{k}{R}\right\rceil ^{R}.
\]
Substituting this estimate in (\ref{eq:disj-master-formula}) gives
$\e^{2R}\lceil k/R\rceil^{2R}R^{4}\geq\Omega(n),$ whence (\ref{eq:disj-cc-PS}).
\end{proof}
\begin{rem}
Theorem~\ref{thm:randomized-VP} is sufficiently general to imply
our lower bound for generalized inner product as well (Theorem~\ref{thm:gip-cc-lower}).
However, considering the effort required to prove Theorem~\ref{thm:randomized-VP}
itself, the treatment of $\GIP_{n,k}$ in Section~\ref{sec:gip-cc-lower}
appears to be more direct.
\end{rem}

\subsection{Upper bound}

To prove a matching upper bound on the communication complexity of
set disjointness for $k\geq\log n$ players, we reduce this problem
to inner product and appeal to our previously established Theorem~\ref{thm:gip-upper-cc}.
\begin{thm}
\label{thm:disj-upper-cc}For any $k\geq\log n$ and any constant
$\epsilon>0,$
\begin{align*}
R_{\epsilon}(\text{\DISJ}_{n,k}) & =O\left(\frac{\log n}{\log\left\lceil 1+\frac{k}{\log n}\right\rceil }+1\right).
\end{align*}
Moreover, this upper bound is achieved by a simultaneous protocol.
\end{thm}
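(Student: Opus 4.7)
The plan is to reduce $\DISJ_{n,k}$ to $\GIP_{n,k}$ using public randomness, and then apply Theorem~\ref{thm:gip-upper-cc}. Given input $X\in\zoo^{n\times k}$, the players use shared randomness to sample a subset $S\subseteq\{1,2,\dots,n\}$ by including each row index independently with probability $1/2$. They then form the ``masked'' matrix $X'$ defined by $X'_{i,j}=X_{i,j}$ if $i\in S$ and $X'_{i,j}=0$ otherwise, and they run the simultaneous protocol of Theorem~\ref{thm:gip-upper-cc} on $X'$. Crucially, player $p$ can construct the entries of $X'$ in every column except his own using only the knowledge he already has, since whether a row is zeroed out depends only on the row index $i$, which is determined by the public random string $S$. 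Hence this is a valid simultaneous NOF protocol.

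The analysis is then immediate. By construction,
\[
\GIP_{n,k}(X')=\bigoplus_{i\in S}\bigwedge_{j=1}^{k}X_{i,j},
\]
which equals the parity of the number of all-ones rows of $X$ that lie in $S$. If $\DISJ_{n,k}(X)=1$, then $X$ has no all-ones row and $\GIP_{n,k}(X')=0$ with certainty. If $\DISJ_{n,k}(X)=0$, then the number of all-ones rows of $X$ in $S$ is distributed as $B(m,1/2)$ for some $m\geq 1$, which is odd with probability exactly $1/2$. Hence the output of $\GIP_{n,k}(X')$ distinguishes the two cases with one-sided error at most $1/2$.

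To drive the error below any prescribed constant $\epsilon>0$, the players run $O(\log(1/\epsilon))$ independent copies of this reduction in parallel (each with its own fresh slice of shared randomness) and declare $\DISJ_{n,k}(X)=1$ if and only if every copy of $\GIP$ outputs $0$. Parallel composition of simultaneous protocols is itself simultaneous, so the resulting protocol is simultaneous as required. Combined with the inner $\GIP$ protocol's error being reduced to a sufficiently small constant (by the boosting remark at the end of the proof of Theorem~\ref{thm:gip-upper-cc}), a union bound controls the total error.

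Each of the $O(1)$ $\GIP$ invocations costs
\[
O\!\left(\frac{\log n}{\log\lceil 1+k/\log n\rceil}+1\right)
\]
bits by Theorem~\ref{thm:gip-upper-cc}, and the sampling step is free. Summing yields the stated bound. I do not anticipate a genuine technical obstacle: the only point that requires any care is verifying that the masked matrix $X'$ respects the number-on-the-forehead information pattern, and this is handled by the observation that the mask $\mathbf{1}[i\in S]$ is a function of the row index alone, hence public.
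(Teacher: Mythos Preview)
Your proposal is correct and takes essentially the same approach as the paper: both reduce $\DISJ_{n,k}$ to $\GIP_{n,k}$ via the standard random-subset trick (parity over a uniformly random $S$ of the all-ones rows is $0$ with probability $1$ or $1/2$ according as $\DISJ=1$ or $0$), then invoke the simultaneous protocol of Theorem~\ref{thm:gip-upper-cc} a constant number of times in parallel. Your phrasing in terms of a masked $n\times k$ matrix $X'$ rather than the restricted matrix $X|_S$ is a cosmetic difference and does not affect correctness or cost.
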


\begin{proof}
Recall that for any string $y\in\zoon,$ 
\[
\Prob_{S\subseteq\{1,2,\dots,n\}}\left[\bigoplus_{i\in S}y_{i}=0\right]=\begin{cases}
1 & \text{if \ensuremath{y=0^{n},}}\\
1/2 & \text{otherwise.}
\end{cases}
\]
This gives the following well-known relation between generalized inner
product and set disjointness: 
\[
\Prob_{S\subseteq\{1,2,\dots,n\}}\left[\GIP_{|S|,k}(X|_{S})=0\right]=\begin{cases}
1 & \text{if \ensuremath{\DISJ_{n,k}(X)=1,}}\\
1/2 & \text{otherwise.}
\end{cases}
\]
Thus, the players can solve an instance $X\in\zoo^{n\times k}$ of
set disjointness by estimating $\Prob_{S}[\GIP_{|S|,k}(X|_{S})=0]$.
This can be done by running the protocol of Theorem~\ref{thm:gip-upper-cc}
with error parameter $1/4$ a constant number of times, simultaneously
and in parallel, each time on a uniformly random subset of the rows
of $X$.
\end{proof}

\section{Communication bounds independent of $k$\label{sec:indep-of-k}}

In this final section, we study the communication problem $F_{n,k}\colon(\zoon)^{k}\to\zoo$
given by
\begin{equation}
F_{n,k}(X)=\MOD_{3}\left(\bigoplus_{j=1}^{k}X_{1,j},\ldots,\bigoplus_{j=1}^{k}X_{n,j}\right).\label{eq:MOD3-of-XORs-definition}
\end{equation}
Ada et al.~\cite{ACFN12nof-composed-functions} proved that this
function has randomized communication complexity $R_{1/3}(F_{n,k})=\Omega(n/4^{k}).$
Here, we derive the incomparable lower bound $R_{1/3}(F_{n,k})\geq\frac{1}{3}\log n-\frac{1}{3}$
for all $n$ and $k,$ which shows that $F_{n,k}$ requires nontrivial
communication regardless of the number of players $k.$ We also prove
that for every $k\geq\log n,$ our lower bound is tight up to a multiplicative
constant.

\subsection{Lower bound}

In what follows, we use the shorthand $e(t)=\exp(2\pi\iu t),$ where
$\iu$ is the imaginary unit. Our proof requires the following correlation
bound for cylinder intersections, which is implied by the more general
work of Ada et al.~\cite{ACFN12nof-composed-functions}.
\begin{lem}[cf.~Ada et al.]
\label{lem:exp-correl-k}For every cylinder intersection $\chi\colon(\zoon)^{k}\to\zoo,$
\begin{align*}
\left|\Exp_{X\in\zoo^{n\times k}}\,e\left(\frac{1}{3}\sum_{i=1}^{n}\bigoplus_{j=1}^{k}X_{i,j}\right)\chi(X)\right| & <\exp\left(-\frac{n}{4^{k}}\right).
\end{align*}
\end{lem}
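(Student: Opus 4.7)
The plan is to deploy the classical Babai--Nisan--Szegedy ``cube'' argument to eliminate the cylinder intersection $\chi$, exploit the row-wise tensor structure of our target function, and finish with an explicit per-row computation. Concretely, I would first apply Cauchy--Schwarz $k$ times in succession to obtain the standard inequality that for any $f\colon(\zoon)^{k}\to\mathbb{C}$ with $\|f\|_{\infty}\le 1$ and any cylinder intersection $\chi$,
$$
\left|\Exp_{X}f(X)\chi(X)\right|^{2^{k}}\;\le\;\Exp_{\{Y_{b}^{j}\}_{b\in\{0,1\},\,j\in[k]}}\;\prod_{\epsilon\in\{0,1\}^{k}} f_{\epsilon}\!\bigl(Y_{\epsilon_{1}}^{1},\ldots,Y_{\epsilon_{k}}^{k}\bigr),
$$
where the $Y_{b}^{j}\in\zoon$ are independent uniform and $f_{\epsilon}$ equals $f$ if $|\epsilon|$ is even and $\overline{f}$ otherwise. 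The cylinders disappear because at each Cauchy--Schwarz step the factor $\chi_{j}$, being independent of the $j$th coordinate, appears identically in a pair of conjugate copies and so cancels.

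Next, I would factorize over rows. Since $f(X)=\prod_{i=1}^{n}e\bigl(\tfrac{1}{3}\bigoplus_{j}X_{i,j}\bigr)$ is a tensor product over rows, the cube expectation equals $g(k)^{n}$ with
$$
g(k)=\Exp_{\{y_{b}^{j}\}\in\{0,1\}^{2k}}\;e\!\left(\tfrac{1}{3}\sum_{\epsilon\in\{0,1\}^{k}}(-1)^{|\epsilon|}\bigoplus_{j=1}^{k}y_{\epsilon_{j}}^{j}\right).
$$
Setting $Z_{b}^{j}=(-1)^{y_{b}^{j}}$ and using $t=(1-(-1)^{t})/2$ for $t\in\{0,1\}$, the sum $\sum_{\epsilon}(-1)^{|\epsilon|}$ vanishes while $\sum_{\epsilon}(-1)^{|\epsilon|}\prod_{j}Z_{\epsilon_{j}}^{j}=\prod_{j}(Z_{0}^{j}-Z_{1}^{j})$, so the inner integer simplifies to $-\tfrac{1}{2}\prod_{j=1}^{k}(Z_{0}^{j}-Z_{1}^{j})$. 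This product vanishes unless $y_{0}^{j}\ne y_{1}^{j}$ for every $j$ (an event of probability $2^{-k}$), and otherwise equals $\pm 2^{k-1}$ with uniform sign. Since $2^{k-1}\equiv\pm 1\pmod{3}$, the conditional average of the exponential is $\cos(2\pi\cdot 2^{k-1}/3)=-\tfrac{1}{2}$, giving $g(k)=1-2^{-k}\cdot\tfrac{3}{2}=1-3\cdot 2^{-(k+1)}$.

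Finally, combining these estimates with $1-x<e^{-x}$ yields
$$
\left|\Exp_{X}f(X)\chi(X)\right|^{2^{k}}\;\le\;\left(1-\frac{3}{2^{k+1}}\right)^{n}\;<\;\exp\!\left(-\frac{3n}{2^{k+1}}\right),
$$
whence $|\Exp_{X}f(X)\chi(X)|<\exp(-3n/(2\cdot 4^{k}))<\exp(-n/4^{k})$, as required. The main obstacle is the bookkeeping in the middle step: one must carefully track the conjugation signs $(-1)^{|\epsilon|}$ across the BNS cube and recognize that the resulting alternating sum collapses to the product $\prod_{j}(Z_{0}^{j}-Z_{1}^{j})$. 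This factorization crystallizes the random variable and makes the subsequent modulo-$3$ averaging elementary.
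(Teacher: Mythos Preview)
Your proof is correct and follows essentially the same route as the paper's: apply the BNS cube inequality (Fact~\ref{fact:bns-bound-on-discrepancy}) to eliminate $\chi$, factor the resulting expectation over rows, and evaluate the single-row average explicitly. The only cosmetic difference is in the algebra of the simplification step: the paper rewrites $\bigoplus_j x_j^{z_j}\equiv\prod_j(x_j^{z_j}+1)-1\pmod 3$ to collapse the alternating sum directly to $\tfrac{1}{3}\prod_j(x_j^0-x_j^1)$, whereas you pass to $\pm1$ variables $Z_b^j=(-1)^{y_b^j}$ and obtain $-\tfrac{1}{2}\prod_j(Z_0^j-Z_1^j)$; these differ by a factor of $(-2)^k$, which is absorbed when reducing modulo~$3$. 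You also keep the sharper constant $1-3\cdot2^{-(k+1)}$ where the paper weakens to $1-2^{-k}$, but this has no effect on the stated bound.
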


\noindent For the reader's convenience, a short and self-contained
proof of Lemma~\ref{lem:exp-correl-k} as stated above is available
in Appendix~\ref{app:exp-correl}. We now sharpen this result for
$\ell$-cylinder intersections with $\ell\leq k.$
\begin{lem}
\label{lem:exp-correl-ell}For every $\ell$-cylinder intersection
$\chi\colon(\zoon)^{k}\to\zoo,$
\begin{align}
\left|\Exp_{X\in\zoo^{n\times k}}\;\;e\!\left(\frac{1}{3}\sum_{i=1}^{n}\bigoplus_{j=1}^{k}X_{i,j}\right)\chi(X)\right| & <\exp\left(-\frac{n}{4^{\ell}}\right).\label{eq:exp-correl-ell}
\end{align}
\end{lem}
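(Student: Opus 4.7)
The plan is to reduce to the setting of Lemma~\ref{lem:exp-correl-k} applied in an $\ell$-party subproblem. After symmetrizing, one may assume the support $S$ of the cylinder intersection $\chi$ is $\{1,2,\dots,\ell\}$ (if $|S|<\ell$, the same argument actually yields the stronger bound $\exp(-n/4^{|S|})$, which is all we need). Write the input as $X=(Y,Z)$, where $Y\in\zoo^{n\times\ell}$ collects the first $\ell$ columns of $X$ and $Z\in\zoo^{n\times(k-\ell)}$ collects the rest, and set $z_i=\bigoplus_{j>\ell}Z_{i,j}$. The strategy is then to condition on $Z$ and recognize the inner expectation as the $\ell$-party correlation controlled by Lemma~\ref{lem:exp-correl-k}.

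The key technical device is the bijective change of variables $Y\mapsto Y'$ on $\zoo^{n\times\ell}$ defined by $Y'_{i,1}=Y_{i,1}\oplus z_i$ and $Y'_{i,j}=Y_{i,j}$ for $j\geq 2$. This transformation preserves the uniform distribution and satisfies the identity $\bigoplus_{j=1}^{k}X_{i,j}=\bigoplus_{j=1}^{\ell}Y'_{i,j}$, so after the substitution the exponential in~\eqref{eq:exp-correl-ell} depends on $Y'$ alone. Meanwhile, each factor $\chi_i(X_{-i})$ of $\chi$ becomes, for every fixed $Z$, a function $\tilde\chi_i$ of $Y'$ that avoids column $i$ of $Y'$: for $i=1$, column $1$ of $Y$ is absent from $\chi_1$ to begin with; for $i\geq 2$, column $1$ of $Y$ equals $Y'_{\ast,1}\oplus z$, which is a legal function of $Y'_{-i}$ once $Z$ is held fixed. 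Hence, for every fixed $Z$ the product $\tilde\chi:=\prod_{i=1}^{\ell}\tilde\chi_i$ is an honest cylinder intersection in the $\ell$-party model on $\zoo^{n\times\ell}$.

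Conditioning on $Z$ and applying the triangle inequality, the left-hand side of~\eqref{eq:exp-correl-ell} is at most
\begin{align*}
\Exp_{Z}\;\left|\Exp_{Y'\in\zoo^{n\times\ell}}\;e\!\left(\tfrac{1}{3}\sum_{i=1}^{n}\bigoplus_{j=1}^{\ell}Y'_{i,j}\right)\tilde\chi(Y')\right|,
\end{align*}
and for every fixed $Z$ the inner expectation is strictly less than $\exp(-n/4^{\ell})$ by Lemma~\ref{lem:exp-correl-k} applied in the $\ell$-party model; averaging over $Z$ preserves this uniform bound. The only real obstacle is the verification that the change of variables does not spoil the cylinder structure, which reduces to the one-line check above; once this is in hand, the rest is a direct appeal to Lemma~\ref{lem:exp-correl-k}.
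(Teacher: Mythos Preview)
Your proof is correct and follows essentially the same approach as the paper: reduce to $\{1,\dots,\ell\}$ by symmetry, condition on the last $k-\ell$ columns, and fold their row-wise XOR into one of the first $\ell$ columns so that the exponential depends only on an $n\times\ell$ matrix while the cylinder structure is preserved, then apply Lemma~\ref{lem:exp-correl-k}. The only cosmetic difference is that you fold into column~$1$ whereas the paper folds into column~$\ell$.
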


\begin{proof}
By symmetry, we may assume that $\chi$ is a $\{1,2,\ldots,\ell\}$-cylinder
intersection. In what follows, we let $X$ stand for a uniformly random
matrix in $\zoo^{n\times k}.$ Define auxiliary random variables $X',X''$
by
\begin{align*}
X' & =\begin{bmatrix}X_{1,1} & X_{1,2} & \cdots & X_{1,\ell-1} & X_{1,\ell}\oplus\cdots\oplus X_{1,k}\\
X_{2,1} & X_{2,2} & \cdots & X_{2,\ell-1} & X_{2,\ell}\oplus\cdots\oplus X_{2,k}\\
\vdots & \vdots &  & \vdots & \vdots\\
X_{n,1} & X_{n,2} & \cdots & X_{n,\ell-1} & X_{n,\ell}\oplus\cdots\oplus X_{n,k}
\end{bmatrix},\\
X'' & =\begin{bmatrix}X_{1,\ell+1} & X_{1,\ell+2} & \cdots & X_{1,k}\\
X_{2,\ell+1} & X_{2,\ell+2} & \cdots & X_{2,k}\\
\vdots & \vdots &  & \vdots\\
X_{n,\ell+1} & X_{n,\ell+2} & \cdots & X_{n,k}
\end{bmatrix}.
\end{align*}
In this notation,
\begin{multline}
\Exp\left[e\left(\frac{1}{3}\sum_{i=1}^{n}\bigoplus_{j=1}^{k}X_{i,j}\right)\chi(X)\;\middle|\;X''\right]\\
=\Exp\left[e\left(\frac{1}{3}\sum_{i=1}^{n}\bigoplus_{j=1}^{\ell}X_{i,j}'\right)\chi(X)\;\middle|\;X''\right].\qquad\qquad\label{eq:conditional-X2}
\end{multline}
Setting $X''$ to any given value makes $\chi(X)$ a cylinder intersection
in terms of $X'.$ Moreover, the conditional probability distribution
of $X'$ given $X''$ is uniform on $\zoo^{n\times\ell}$. In view
of Fact~\ref{lem:exp-correl-k}, we conclude that the expectation
on the right-hand side of~(\ref{eq:conditional-X2}) is smaller in
absolute value than $\exp(-n/4^{\ell}).$ Averaging over $X''$ now
gives~(\ref{eq:exp-correl-ell}).
\end{proof}
We are now in a position to prove our claimed lower bound on the communication
complexity of $F_{n,k}.$
\begin{thm}
Let $F_{n,k}\colon(\zoon)^{k}\to\zoo$ be given by~\emph{(\ref{eq:MOD3-of-XORs-definition})}.
Then
\begin{equation}
\disc_{\ell}(F_{n,k},\nu)\leq2\exp\left(-\frac{n}{4^{\ell}}\right),\label{eq:MOD3-of-XOR-disc-bound}
\end{equation}
where $\nu$ is the probability distribution under which the weight
of any point of $F_{n,k}^{-1}(1)$ is double the weight of any point
of $F_{n,k}^{-1}(0).$ In particular,
\begin{equation}
R_{1/3}(F_{n,k})\geq\frac{1}{3}\log n-\frac{1}{3}.\label{eq:MOD3-of-XOR-cc-bound}
\end{equation}
\end{thm}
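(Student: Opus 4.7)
The plan is to establish the discrepancy bound~(\ref{eq:MOD3-of-XOR-disc-bound}) by a direct application of Lemma~\ref{lem:exp-correl-ell}, and then to derive the communication lower bound~(\ref{eq:MOD3-of-XOR-cc-bound}) via the generalized discrepancy method~(\ref{eq:dm-l-party}).

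For the discrepancy bound, the first step is to express $(-1)^{F_{n,k}}\cdot\nu$ as a short linear combination of cube-root-of-unity characters. Writing $\omega=e(1/3)$ and $s(X)=\sum_{i=1}^{n}\bigoplus_{j=1}^{k}X_{i,j}$, the standard identity $[s\equiv 0\pmod 3]=\frac{1}{3}(1+\omega^{s}+\omega^{2s})$ together with the definition of $\nu$ yields
\[
(-1)^{F_{n,k}(X)}\,\nu(X)=-c\bigl(\omega^{s(X)}+\omega^{2s(X)}\bigr),
\]
where $c$ is the normalization constant of $\nu$, satisfying $c\cdot 2^{nk}\leq 1$ because $|F_{n,k}^{-1}(0)|+|F_{n,k}^{-1}(1)|=2^{nk}$. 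For any $\ell$-cylinder intersection $\chi$, the inner product $\langle(-1)^{F_{n,k}},\nu\cdot\chi\rangle$ then equals $-c\cdot 2^{nk}$ times $\Exp_{X}[\omega^{s(X)}\chi(X)]+\Exp_{X}[\omega^{2s(X)}\chi(X)]$. Each expectation is bounded in absolute value by $\exp(-n/4^{\ell})$ by Lemma~\ref{lem:exp-correl-ell}, noting that $\omega^{2s}=\overline{\omega^{s}}$ and $\chi$ is real-valued, so the second expectation is the complex conjugate of the first. Maximizing over $\chi$ gives~(\ref{eq:MOD3-of-XOR-disc-bound}).

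For the communication lower bound, set $R=R_{1/3}(F_{n,k})$ and $\ell=\min\{R,k\}$. The discrepancy method~(\ref{eq:dm-l-party}) with $\epsilon=1/3$ gives $2^{R}\geq\frac{1}{6}\exp(n/4^{\ell})$. Since $\ell\leq R$ in both subcases $R\leq k$ and $R>k$, we have $n/4^{\ell}\geq n/4^{R}$, and therefore $2^{R}\geq\frac{1}{6}\exp(n/4^{R})$. Supposing for contradiction that $R<\frac{1}{3}\log n-\frac{1}{3}$, we get $n>2^{3R+1}$, so $n/4^{R}>2^{R+1}$. The resulting inequality $6\cdot 2^{R}\geq\exp(2^{R+1})$, equivalently $\ln 6+R\ln 2\geq 2^{R+1}$, is easily seen to fail for every real $R\geq 0$ (e.g., at $R=0$: $\ln 6<2$, and the right side grows doubly exponentially thereafter), contradicting the non-negativity of the communication cost.

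The conceptually interesting step is the Fourier decomposition of $(-1)^{F_{n,k}}\nu$ in the first paragraph, which recasts the problem so that the pre-established correlation bound of Lemma~\ref{lem:exp-correl-ell} applies as a black box. The rest of the argument is mechanical, though some care is required to verify that $\ell\leq R$ uniformly across the two cases $R\leq k$ and $R>k$, so that a single clean contradiction suffices at the end.
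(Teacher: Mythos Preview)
Your proof is correct and follows essentially the same approach as the paper: both express $(-1)^{F_{n,k}}\nu$ as a constant times $e(s(X)/3)+\overline{e(s(X)/3)}$, bound each summand via Lemma~\ref{lem:exp-correl-ell}, and finish with the discrepancy method of Theorem~\ref{thm:dm}. The paper's concluding ``elementary calculus'' is exactly your contradiction argument (minor nit: $2^{R+1}$ grows exponentially in $R$, not doubly exponentially).
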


\begin{proof}
Observe that
\[
e\left(\frac{t}{3}\right)+\overline{e\left(\frac{t}{3}\right)}=\begin{cases}
2 & \text{if }t\equiv0\pmod3,\\
-1 & \text{if }t\equiv\pm1\pmod3,
\end{cases}
\]
where the bar denotes complex conjugation. As a result,
\[
e\!\left(\frac{1}{3}\sum_{i=1}^{n}\bigoplus_{j=1}^{k}X_{i,j}\right)+\overline{e\!\left(\frac{1}{3}\sum_{i=1}^{n}\bigoplus_{j=1}^{k}X_{i,j}\right)}\equiv c\cdot2^{nk}(-1)^{F_{n,k}(X)+1}\nu(X)
\]
for some normalizing constant $c\geq1.$ We conclude that for every
$\ell$-cylinder intersection $\chi,$
\begin{align*}
\left|\Exp_{X\sim\nu}(-1)^{F_{n,k}(X)}\chi(X)\right| & =\frac{1}{c}\left|\Exp_{X\in\zoo^{n\times k}}\left[e\!\left(\frac{1}{3}\sum_{i=1}^{n}\bigoplus_{j=1}^{k}X_{i,j}\right)\chi(X)\right.\right.\\
 & \qquad\qquad\qquad\qquad\left.\left.+\overline{e\!\left(\frac{1}{3}\sum_{i=1}^{n}\bigoplus_{j=1}^{k}X_{i,j}\right)}\chi(X)\right]\right|\\
 & \leq2\left|\Exp_{X\in\zoo^{n\times k}}\left[e\!\left(\frac{1}{3}\sum_{i=1}^{n}\bigoplus_{j=1}^{k}X_{i,j}\right)\chi(X)\right]\right|\\
 & \leq2\exp\left(-\frac{n}{4^{\ell}}\right),
\end{align*}
where the final step uses Lemma~\ref{lem:exp-correl-ell}. This establishes~(\ref{eq:MOD3-of-XOR-disc-bound}).
Now Theorem~\ref{thm:dm} yields
\[
2^{R_{1/3}(F_{n,k})}\geq\frac{1}{6}\exp\left(\frac{n}{4^{R_{1/3}(F_{n,k})}}\right),
\]
which implies~(\ref{eq:MOD3-of-XOR-cc-bound}) by elementary calculus.
\end{proof}

\subsection{Upper bound}

We now show that our lower bound on the communication complexity of
$F_{n,k}$ is tight up to a constant factor for all $k\geq\log n.$
The idea is to alter $F_{n,k}$ in a random fashion on a small portion
of the domain so as to make it representable by a polynomial of degree
$k-1,$ which the players can then directly evaluate. This technique
was previously used in~\cite{ACFN12nof-composed-functions,chattopadhyay-saks14superlog-nof}.
\begin{thm}
\label{thm:mod-xor-upper-cc}Let $F_{n,k}\colon(\zoo^{n})^{k}\to\zoo$
be given by~\emph{(\ref{eq:MOD3-of-XORs-definition}).} Then
\[
R_{1/3}(F_{n,k})\leq c\log n+c
\]
for some absolute constant $c>0$ and all $n\geq1$ and $k\geq\log n.$
Moreover, this upper bound is achieved by a simultaneous protocol.
\end{thm}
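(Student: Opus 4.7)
Following the upper-bound strategy of Theorem~\ref{thm:gip-upper-cc} and the degree-reduction technique referenced from~\cite{ACFN12nof-composed-functions,chattopadhyay-saks14superlog-nof}, I would construct a simultaneous randomized protocol based on a random Grolmusz-style pivot, adapted to mod-$3$ arithmetic. As in Theorem~\ref{thm:gip-upper-cc}, it suffices to treat $k\geq\log 3n$; the boundary range $k\in[\log n,\log 3n)$ reduces to this by partitioning $X$ horizontally into $O(1)$ blocks of at most $n/3$ rows, running the protocol on each block in parallel, and combining the block outputs via their mod-$3$ sum.

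Assume $k\geq\log 3n$, and let $\ell$ be the smallest nonnegative integer with $\binom{k}{\leq\ell}\geq3n$, so that $\ell=O(\log n)$. Using shared randomness, the players sample $z\in\zoo^{k}$ uniformly from $\{u:|u|\geq k-\ell\}$; by the same union bound as in Theorem~\ref{thm:gip-upper-cc}, $z$ differs from every row of $X$ with probability at least $2/3$, and we condition on this good event. After renumbering, $z=(0^{j},1^{k-j})$ for some $j\leq\ell$.

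The value $F_{n,k}(X)$ is determined by $\sum_{i}y_{i}\bmod 3$, and since $2\sum_{i}y_{i}\equiv n-U\pmod 3$ with $U:=\sum_{i}(-1)^{y_{i}}$, it suffices to compute $U\bmod 3$. Working modulo $3$ and using $1-2t\equiv 1+t$ for $t\in\{0,1\}$, one obtains
\[
U\;\equiv\;\sum_{i=1}^{n}\prod_{j=1}^{k}(1+X_{i,j})\;=\;\sum_{S\subseteq[k]}T_{S}\pmod 3,\qquad T_{S}:=\sum_{i=1}^{n}\prod_{j\in S}X_{i,j}.
\]
For every $S\subsetneq[k]$, the quantity $T_{S}$ depends only on columns indexed by $S$ and is therefore locally computable by any player $j\notin S$. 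The exceptional term $T_{[k]}=|\{i:X_{i,*}=1^{k}\}|$ is handled via a Grolmusz-style mod-$3$ telescoping: letting $m_{v}$ denote the multiplicity of the row pattern $v$ and observing that $m_{z}=0$, each active player $i'\in[j]$ broadcasts the single mod-$3$ value $s_{i'}=(m_{(0^{i'-1},1^{k-i'+1})}+m_{(0^{i'},1^{k-i'})})\bmod 3$ (accessible from their view), and then $T_{[k]}\equiv s_{1}-s_{2}+\cdots+(-1)^{j-1}s_{j}\pmod 3$.

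\textbf{Main obstacle.} The subtle step is aggregating $\sum_{S\neq[k]}T_{S}\bmod 3$ in only $O(\ell)=O(\log n)$ broadcasts, rather than the naive $O(k)$ (one trit per player). The plan is to exploit the pivot condition $m_{z}=0$ to rewrite all passive-column contributions (columns $j'>j$) in terms of active columns alone, so that only the $j\leq\ell$ active players need to broadcast a single mod-$3$ value each; this is the mod-$3$ analogue of the telescoping already used for $T_{[k]}$ and parallels Grolmusz's original $\GIP$ argument but requires a more delicate algebraic unwinding. Combined with the $O(\ell)$-bit telescoping for $T_{[k]}$, this yields a simultaneous protocol of cost $O(\ell)=O(\log n)$, matching the lower bound of the preceding subsection.
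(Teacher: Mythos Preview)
Your reduction to computing $U\bmod 3$ and the mod-$3$ telescoping for $T_{[k]}$ are both correct, but the step you yourself flag as the ``main obstacle'' is a genuine gap, and the pivot condition cannot close it. The single constraint $m_z=0$ is one scalar equation and cannot absorb the information held by the passive players: every term $T_S$ with $\{1,\dots,j\}\subseteq S\subsetneq[k]$ is computable by \emph{no} active player (each such $S$ contains all of $\{1,\dots,j\}$), and the sum of these terms is not determined by $m_z=0$. Concretely, when $j=0$ (pivot $z=1^{k}$) your plan calls for zero active-player broadcasts, yet $F_{n,k}(X)$ is obviously not determined without communication; the same failure already occurs for $j=1$. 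There is no mod-$3$ analogue of the Grolmusz telescoping that eliminates the passive players here, because unlike in $\GIP$ the relevant quantity is not a count of a single row pattern but a full multilinear expansion over all $2^{k}$ subsets.

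The paper's route is different and sidesteps this obstacle. It picks $u\in\zoo^{k}$ \emph{uniformly} at random and works with the degree-$(k{-}1)$ polynomial
\[
p_u(x)=\prod_{j}(x_j+1)-\prod_{j}(x_j+u_j-1)-1\;\in\;\ZZ_3[x_1,\ldots,x_k],
\]
which agrees with $\XOR_k$ on $\zoo^{k}\setminus\{u\}$; partitioning the monomials of $\sum_i p_u(X_{i,1},\ldots,X_{i,k})$ among the $k$ players gives a simultaneous protocol of cost $O(k)$ and error at most $n2^{-k}$. This already yields cost $O(\log n)$ when $k=\Theta(\log n)$. For $k>\lceil\log 3n\rceil$ the paper does \emph{not} attempt to confine the communication to $O(\log n)$ of the $k$ players via a sparse pivot; instead it reduces to $k'=\lceil\log 3n\rceil$ parties by merging columns $k',\ldots,k$ into one virtual XOR column, after which players $1,\ldots,k'$ simulate the $k'$-party protocol on the resulting $n\times k'$ matrix. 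This column-merging reduction is the idea missing from your plan; with it in place, your expansion and the paper's polynomial $p_u$ become essentially the same argument.
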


\begin{proof}
For $u\in\zook,$ consider the multivariate polynomial $p_{u}\in\ZZ_{3}[x_{1},x_{2},\ldots,x_{k}]$
given by
\[
p_{u}(x_{1},x_{2},\ldots,x_{k})=\prod_{i=1}^{k}(x_{i}+1)-\prod_{i=1}^{k}(x_{i}+u_{i}-1)-1.
\]
Then
\begin{align}
 & \deg p_{u}\leq k-1,\label{eq:p_u-deg}\\
 & p_{u}(x_{1},x_{2},\ldots,x_{k})=x_{1}\oplus x_{2}\oplus\cdots\oplus x_{k}, &  & x\in\zook\setminus\{u\}.\label{eq:p_u-XOR}
\end{align}
Our protocol for $F_{n,k}$ is as follows. On input $X\in\zoo^{n\times k},$
the players pick a random point $u\in\zook$ using shared randomness,
and consider the following polynomial $F_{n,k,u}\in\ZZ_{3}[X_{1,1},\ldots,X_{n,k}]$:
\[
F_{n,k,u}(X)=\sum_{i=1}^{n}p_{u}(X_{i,1},X_{i,2},\ldots,X_{i,k}).
\]
It follows from~(\ref{eq:p_u-deg}) that $\deg F_{n,k,u}\leq k-1,$
which makes it possible to partition the monomials of $F_{n,k,u}$
among the players in some predetermined fashion and have each player
report the sum of the monomials assigned to him. This simultaneous
protocol has cost $k\lceil\log3\rceil=2k.$ With probability at least
$1-n2^{-k}$ over shared randomness, we have $u\ne(X_{i,1},X_{i,2},\ldots,X_{i,k})$
for $i=1,2,\ldots,n,$ in which case~(\ref{eq:p_u-XOR}) implies
that $F_{n,k}(X)\equiv1-F_{n,k,u}(X)^{2}\pmod3.$ In particular, the
players' broadcasts uniquely identify $F_{n,k}(X)$ with probability
at least $1-n2^{-k}$ on any given input $X.$ Thus,
\begin{equation}
R_{n2^{-k}}(F_{n,k})\leq2k.\label{eq:mod-xor-protocol}
\end{equation}
To complete the proof of the theorem, we consider three cases depending
on the value of $k.$
\begin{itemize}[topsep=3mm,itemsep=3mm]
\item Equation~(\ref{eq:mod-xor-protocol}) directly implies that $R_{1/3}(F_{n,\lceil\log3n\rceil})\leq2\lceil\log3n\rceil,$
which settles the case $k=\lceil\log3n\rceil.$ As usual, the error
probability can be reduced from $1/3$ to any positive constant by
running the protocol simultaneously multiple times and outputting
the majority answer. In particular, $R_{\epsilon}(F_{n,\lceil\log3n\rceil})=O(\log n)$
for any fixed $\epsilon>0.$
\item For $k>\lceil\log3n\rceil$, observe that
\begin{multline*}
F_{n,k}(X)=\MOD_{3}\left(X_{1,1}\oplus\cdots\oplus X_{1,\lceil\log3n\rceil-1}\oplus\left(\bigoplus_{j=\lceil\log3n\rceil}^{k}X_{1,j}\right),\ldots\right.\\
\left.X_{n,1}\oplus\cdots\oplus X_{n,\lceil\log3n\rceil-1}\oplus\left(\bigoplus_{j=\lceil\log3n\rceil}^{k}X_{n,j}\right)\right)
\end{multline*}
and in particular $R_{\epsilon}(F_{n,k})\leq R_{\epsilon}(F_{\lceil\log3n\rceil,k})$
for all $\epsilon.$ As a result, we conclude by the first case that
$R_{\epsilon}(F_{n,k})\leq O(\log n)$ for any fixed $\epsilon>0$
and $k>\lceil\log3n\rceil.$
\item For $\log n\leq k<\lceil\log3n\rceil,$ the players partition the
input matrix $X$ horizontally into submatrices with at most $n/3$
rows each and run the protocol from the previous cases with the error
parameter set to a small constant. Then with probability at least~$2/3,$
the players' broadcasts uniquely determine $\sum_{i=1}^{n}\bigoplus_{j=1}^{k}X_{i,j}$~modulo~$3$
and thereby reveal $F_{n,k}(X).\qedhere$
\end{itemize}
\end{proof}
\bibliographystyle{siamplain}
\bibliography{venues,sherstov}

\appendix

\section{A correlation bound\label{app:exp-correl}}

The purpose of this appendix is to provide a short proof of Lemma~\ref{lem:exp-correl-k}.
The only technical prerequisite for the proof is the following fact
about cylinder intersections, which is implicit in the work of Babai
et al.~\cite{bns92} and is derived explicitly in the followup papers
by Chung and Tetali~\cite{chung-tetali93} and Raz~\cite{raz00multiparty}.
\begin{fact}[Chung and Tetali; Raz]
\label{fact:bns-bound-on-discrepancy}Let $U_{1},U_{2},\ldots,U_{k}$
be finite sets. Then for any function $\phi\colon U_{1}\times U_{2}\times\cdots\times U_{k}\to\mathbb{C}$
and any cylinder intersection $\chi\colon U_{1}\times U_{2}\times\cdots\times U_{k}\to\zoo,$
\begin{multline*}
\left|\Exp_{u_{1},\ldots,u_{k}}\phi(u_{1},\ldots,u_{k})\chi(u_{1},\ldots,u_{k})\right|^{2^{k}}\\
\leq\Exp_{\substack{u_{1}^{0},\ldots,u_{k}^{0}\\
u_{1}^{1},\ldots,u_{k}^{1}
}
}\left[\prod_{\substack{z\in\zook\\
\text{\ensuremath{|z|} even}
}
}\phi(u_{1}^{z_{1}},\ldots,u_{k}^{z_{k}})\cdot\prod_{\substack{z\in\zook\\
\text{\ensuremath{|z|} odd}
}
}\overline{\phi(u_{1}^{z_{1}},\ldots,u_{k}^{z_{k}})}\right],
\end{multline*}
where $u_{i},u_{i}^{0},u_{i}^{1}\in U_{i}$ for each $i.$
\end{fact}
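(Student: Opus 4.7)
The plan is to prove Fact~\ref{fact:bns-bound-on-discrepancy} by induction on $k$, using iterated Cauchy--Schwarz --- this is the classical argument of Babai, Nisan, and Szegedy, polished by Chung--Tetali and Raz. The base case $k=1$ is trivial, since a cylinder intersection of a single variable is a constant in $\{0,1\}$: if $\chi\equiv 0$ the inequality reads $0\leq |\Exp\phi|^2$, while if $\chi\equiv 1$ both sides equal $|\Exp_{u_1}\phi(u_1)|^2$.

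For the inductive step I single out the first coordinate. Writing $\chi = \chi_1(u_2,\ldots,u_k)\cdot\prod_{i\geq 2}\chi_i$ and pulling $\chi_1$ outside the inner expectation over $u_1$, an application of Cauchy--Schwarz to the outer expectation over $(u_2,\ldots,u_k)$, together with $|\chi_1|\leq 1$, gives
\[
|\Exp\phi\chi|^2 \;\leq\; \Exp_{u_2,\ldots,u_k}\Bigl|\Exp_{u_1}\phi\prod_{i\geq 2}\chi_i\Bigr|^2 \;=\; \Exp_{u_1^0,u_1^1}\Exp_{u_2,\ldots,u_k}\psi\cdot\chi',
\]
where $\psi(u_2,\ldots,u_k;u_1^0,u_1^1) = \phi(u_1^0,u_2,\ldots,u_k)\,\overline{\phi(u_1^1,u_2,\ldots,u_k)}$ and $\chi'(u_2,\ldots,u_k) = \prod_{i\geq 2}\chi_i(u_1^0,\ldots)\,\chi_i(u_1^1,\ldots)$. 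The crucial observation is that $\chi'$, for any fixed $u_1^0,u_1^1$, is a cylinder intersection on $U_2\times\cdots\times U_k$: each of its $k-1$ factors is a $\{0,1\}$-valued product of two functions that do not depend on $u_i$.

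Next I would raise both sides to the $2^{k-1}$ and push the absolute value inside the $(u_1^0,u_1^1)$-average. Since the right-hand side above is a nonnegative real (it arose as an $L^2$-norm), the triangle inequality together with Jensen applied to the convex map $t\mapsto t^{2^{k-1}}$ yields
\[
|\Exp\phi\chi|^{2^k} \;\leq\; \Exp_{u_1^0,u_1^1}\Bigl|\Exp_{u_2,\ldots,u_k}\psi\cdot\chi'\Bigr|^{2^{k-1}}.
\]
The inductive hypothesis now applies, for each fixed $u_1^0,u_1^1$, to the complex-valued function $\psi$ and the $(k{-}1)$-party cylinder intersection $\chi'$. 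This introduces doubled variables $u_2^0,u_2^1,\ldots,u_k^0,u_k^1$ and produces an expectation of a product over $z'\in\{0,1\}^{k-1}$ of $\psi$ or $\overline\psi$ factors according to the parity of $|z'|$. Substituting $\psi=\phi\cdot\overline\phi$ and reindexing the outer product by $z=(z_1,z_2,\ldots,z_k)\in\{0,1\}^k$, with $z_1$ recording whether the corresponding $\phi$-factor carries $u_1^0$ or $u_1^1$, produces exactly the right-hand side claimed at level $k$.

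The main obstacle is the parity bookkeeping: one must check that when $|z'|$ is even, the expansion $\phi(u_1^0,\cdot)\,\overline{\phi(u_1^1,\cdot)}$ yields a $\phi$-factor at $z=(0,z')$ of even weight (matching ``even weight $\Rightarrow\phi$'') and an $\overline\phi$-factor at $z=(1,z')$ of odd weight (matching ``odd weight $\Rightarrow\overline\phi$''), with the $|z'|$-odd case the mirror image. This is a one-line check, but it is exactly what makes the $\phi/\overline\phi$ pattern at level $k-1$ lift cleanly to the pattern at level $k$; without it the recursion would not close.
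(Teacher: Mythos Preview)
Your argument is correct and is precisely the standard iterated Cauchy--Schwarz proof attributed to Babai--Nisan--Szegedy and formalised by Chung--Tetali and Raz. The paper itself does not supply a proof of this fact: it is stated with a citation and then used as a black box in the appendix, so there is no ``paper's proof'' to compare against beyond noting that your approach is exactly the one in the cited references.
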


We are now in a position to prove Lemma~\ref{lem:exp-correl-k},
which we restate below for the reader's convenience. Recall that we
use the shorthand $e(t)=\exp(2\pi\iu t),$ where $\iu$ is the imaginary
unit. 
\begin{lem*}[cf. Ada et al.]
For every cylinder intersection $\chi\colon(\zoon)^{k}\to\zoo,$
\begin{align*}
\left|\Exp_{X\in\zoo^{n\times k}}\,e\left(\frac{1}{3}\sum_{i=1}^{n}\bigoplus_{j=1}^{k}X_{i,j}\right)\chi(X)\right| & <\exp\left(-\frac{n}{4^{k}}\right).
\end{align*}
\end{lem*}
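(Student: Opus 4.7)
The plan is to apply Fact~\ref{fact:bns-bound-on-discrepancy} to the function $\phi(X) = e\bigl(\tfrac{1}{3}\sum_{i=1}^n \bigoplus_{j=1}^k X_{i,j}\bigr)$, viewing the $n \times k$ input matrix as the tuple of its $k$ columns $X_{\cdot,1},\ldots,X_{\cdot,k} \in \zoon$ (so $U_j = \zoon$). Because $\phi$ factorizes over rows as $\phi(X) = \prod_i \phi_0(X_{i,\cdot})$ with $\phi_0(x) = e\bigl((\bigoplus_j x_j)/3\bigr)$, the right-hand side of Fact~\ref{fact:bns-bound-on-discrepancy} splits into a product of $n$ identical single-row factors once the cylinder intersection is killed by raising to the $2^k$-th power. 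This reduces the lemma to
\[
\left|\Exp_X \phi(X)\chi(X)\right|^{2^k} \leq T^n, \qquad T := \Exp_{a}\, \omega^V,
\]
where $\omega = e(1/3)$, the variables $a_j^b \in \zoo$ ($j \in \{1,\ldots,k\}$, $b \in \zoo$) are i.i.d.\ uniform, and $V = \sum_{z \in \zoo^k}(-1)^{|z|} \bigoplus_j a_j^{z_j} \in \mathbb{Z}$.

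The crux is computing $T$ exactly. My approach is to pass to $\pm 1$ variables via $s_j^b = 1 - 2a_j^b$, use the identity $\bigoplus_j a_j^{z_j} = \tfrac{1}{2}\bigl(1 - \prod_j s_j^{z_j}\bigr)$, and observe that the constant part $\tfrac{1}{2}\sum_z (-1)^{|z|}$ vanishes for $k \geq 1$. What remains admits the telescoping
\[
V \;=\; -\frac{1}{2}\sum_{z \in \zoo^k}(-1)^{|z|}\prod_j s_j^{z_j} \;=\; -\frac{1}{2}\prod_{j=1}^k (s_j^0 - s_j^1),
\]
where the last equality comes from distributing each $(-1)^{z_j}$ into the $j$-th factor of $\prod_j s_j^{z_j}$. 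Since $s_j^0 - s_j^1 \in \{-2,0,+2\}$ with probabilities $\tfrac{1}{4},\tfrac{1}{2},\tfrac{1}{4}$, the product vanishes unless every coordinate differs, an event of probability $2^{-k}$; conditionally, independence of the signs makes the product equal $\pm 2^k$ with equal probability. Hence $V \in \{0, \pm 2^{k-1}\}$ with weights $1-2^{-k}$, $2^{-k-1}$, $2^{-k-1}$.

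Because $2 \equiv -1 \pmod 3$, we have $2^{k-1} \equiv \pm 1 \pmod 3$ for every $k \geq 1$, so $\Re(\omega^{\pm 2^{k-1}}) = -\tfrac{1}{2}$ and
\[
T \;=\; (1 - 2^{-k}) + 2^{-k}\cdot\bigl(-\tfrac{1}{2}\bigr) \;=\; 1 - \frac{3}{2^{k+1}}.
\]
Combining $T \leq e^{-3/2^{k+1}}$ with $|\Exp \phi\chi| \leq T^{n/2^k}$ yields $|\Exp \phi\chi| \leq \exp\bigl(-3n/(2\cdot 4^k)\bigr) < \exp(-n/4^k)$, as required. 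The main obstacle is spotting the factorization $V = -\tfrac{1}{2}\prod_j (s_j^0 - s_j^1)$; once this identity is in hand, the rest is elementary enumeration of the three possible values of $V$ and the residue computation $2^{k-1} \bmod 3$.
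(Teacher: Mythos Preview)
Your proof is correct and follows essentially the same route as the paper: apply Fact~\ref{fact:bns-bound-on-discrepancy}, factor the right-hand side over rows, and reduce to computing the single-row quantity $T$, which both you and the paper evaluate to $1-3/2^{k+1}$. The only cosmetic difference is the polynomial identity used to express the XOR: the paper works over $\ZZ_{3}$ via $\bigoplus_{j}x_{j}\equiv\prod_{j}(x_{j}+1)-1\pmod 3$, obtaining the exponent $\tfrac{1}{3}\prod_{j}(x_{j}^{0}-x_{j}^{1})$ with $x_{j}^{0}-x_{j}^{1}\in\{-1,0,1\}$, whereas you pass to $\pm1$ variables and obtain the integer $V=-\tfrac{1}{2}\prod_{j}(s_{j}^{0}-s_{j}^{1})$ with $s_{j}^{0}-s_{j}^{1}\in\{-2,0,2\}$; these two expressions agree modulo $3$ and lead to the same three-point distribution for the exponent.
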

\begin{proof}
We mostly follow the analysis of Ada et al.~\cite{ACFN12nof-composed-functions},
who proved a more general correlation bound. As one would expect,
focusing on a special case as we do here allows for a shorter and
simpler presentation. We have:
\begin{align}
 & \left|\Exp_{X\in\zoo^{n\times k}}\,e\left(\frac{1}{3}\sum_{i=1}^{n}\bigoplus_{j=1}^{k}X_{i,j}\right)\chi(X)\right|^{2^{k}}\nonumber \\
 & \qquad\qquad\leq\Exp_{X^{0},X^{1}\in\zoo^{n\times k}}\;\prod_{z\in\zook}e\left(\frac{(-1)^{|z|}}{3}\sum_{i=1}^{n}\bigoplus_{j=1}^{k}X_{i,j}^{z_{j}}\right)\nonumber \\
 & \qquad\qquad=\Exp_{X^{0},X^{1}\in\zoo^{n\times k}}\;\prod_{i=1}^{n}e\left(\frac{1}{3}\sum_{z\in\zook}(-1)^{|z|}\bigoplus_{j=1}^{k}X_{i,j}^{z_{j}}\right)\nonumber \\
 & \qquad\qquad=\left(\Exp_{x^{0},x^{1}\in\zook}\;e\left(\frac{1}{3}\sum_{z\in\zook}(-1)^{|z|}\bigoplus_{j=1}^{k}x_{j}^{z_{j}}\right)\right)^{n}\nonumber \\
 & \qquad\qquad=\left(\Exp_{x^{0},x^{1}\in\zook}\;e\left(\frac{1}{3}\sum_{z\in\zook}(-1)^{|z|}\left(\prod_{j=1}^{k}(x_{j}^{z_{j}}+1)-1\right)\right)\right)^{n}\nonumber \\
 & \qquad\qquad=\left(\Exp_{x^{0},x^{1}\in\zook}\;e\left(\frac{1}{3}\prod_{j=1}^{k}(x_{j}^{0}-x_{j}^{1})\right)\right)^{n},\label{eq:disc-mod-halfway}
\end{align}
where the first step follows from Fact~\ref{fact:bns-bound-on-discrepancy},
and the fourth step uses the fact that 
\[
\bigoplus_{j=1}^{k}x_{j}^{z_{j}}\equiv\prod_{j=1}^{k}(x_{j}^{z_{j}}+1)-1\pmod3.
\]
 A routine calculation reveals that
\[
e\left(\frac{1}{3}\prod_{j=1}^{k}(x_{j}^{0}-x_{j}^{1})\right)=\begin{cases}
1 & \text{if \ensuremath{x_{j}^{0}=x_{j}^{1}} for some \ensuremath{j},}\\
-\frac{1}{2}+\frac{\sqrt{3}}{2}\iu\prod_{j=1}^{k}(x_{j}^{0}-x_{j}^{1}) & \text{otherwise,}
\end{cases}
\]
where $\iu$ is the imaginary unit. Making this substitution in~(\ref{eq:disc-mod-halfway})
yields
\begin{align*}
\left|\Exp_{X\in\zoo^{n\times k}}\,e\left(\frac{1}{3}\sum_{i=1}^{n}\bigoplus_{j=1}^{k}X_{i,j}\right)\chi(X)\right|^{2^{k}} & \leq\left(\left(1-\frac{1}{2^{k}}\right)\cdot1-\frac{1}{2^{k}}\cdot\frac{1}{2}\right)^{n}\\
 & \leq\left(1-\frac{1}{2^{k}}\right)^{n}\\
 & <\exp\left(-\frac{n}{2^{k}}\right).\qedhere
\end{align*}
\end{proof}

\end{document}